\newcommand{\HR}{\mbox{{\sf HR}}}
\newcommand{\HRTLQ}{\mbox{{\sf HR2LQ}}}
\newcommand{\prefa}{\mbox{{\sf Pref($a$)}}}
\newcommand{\preflqa}{\mbox{{\sf PrefLQ($a$)}}}
\newcommand{\prefb}{\mbox{{\sf Pref($b$)}}}
\newcommand{\prefu}{\mbox{{\sf Pref($u$)}}}
\newcommand{\prefv}{\mbox{{\sf Pref($v$)}}}
\newcommand{\HRLQ}{\mbox{{\sf HRLQ}}}
\newcommand{\Df}[1]{\mbox{{\sf $def(#1)$}}}
\newcommand{\Dfa}[1]{\mbox{{\sf $def_{\mathcal{A}}(#1)$}}}
\newcommand{\Dfb}[1]{\mbox{{\sf $def_{\mathcal{B}}(#1)$}}}
\renewcommand{\AA}{\mathcal{A}}
\newcommand{\BB}{\mathcal{B}}
\newcommand{\AALQ}{\mathcal{A}_{lq}}
\newcommand{\BBLQ}{\mathcal{B}_{lq}}
\newcommand{\clr}{\textcolor{blue}}
\renewcommand{\S}{{s}}
\newcommand{\T}{{t}}
\newcommand{\cor}{{\bf corr}}
\newcommand{\tL}{\mathcal{L'}}
\newcommand{\dL}{\mathcal{D}'}
\newcommand{\dLA}{\mathcal{D}_A'}
\newcommand{\dLB}{\mathcal{D}_B'}
\newcommand{\tLA}{\mathcal{L}'_A}
\newcommand{\tLB}{\mathcal{L}'_B}
\newtheorem{cl}{Claim}
\newtheorem{definition}{Definition}
\newtheorem{theorem}{Theorem}
\newtheorem{lemma}[theorem]{Lemma}
\definecolor{lightgray}{rgb}{0.83, 0.83, 0.82}
\tikzset{cross/.style={cross out, draw=black, minimum size=2*(#1-\pgflinewidth), inner sep=0pt, outer sep=0pt},
	cross/.default={1pt}}
\tikzstyle{vertex}=[auto=left,circle,draw=black!80,fill=none,minimum size=15pt,inner sep=0pt]
\tikzset{
    photon/.style={decorate, decoration={snake}, draw=red}}
\newcommand{\etal}{\textit{et al}.}
\newenvironment{appendix-lemma}[1]{\vspace{0.1in}\noindent{\bf Lemma~#1~} \em }{\vspace{0.1in}}
\newenvironment{appendix-theo}[1]{\vspace{0.1in}\noindent{\bf Theorem~#1~} \em }{\vspace{0.1in}}
\renewenvironment{proof}{{\textsc{Proof:}}}{\qed}
\newcommand{\proofofref}{}
\newproof{zproofof}{Proof of \proofofref}
\newenvironment{proofof}[1] {\renewcommand{\proofofref}{#1}\zproofof}
 {\endzproofof}
\title{Popular Critical Matchings in the Many-to-Many Setting\footnotetext{A preliminary version~\cite{NNRS21} of this work appeared in 41st IARCS Annual Conference on Foundations of Software Technology and Theoretical Computer Science (FSTTCS 2021)}} 
\journal{Theoretical Computer Science}
\begin{document}

\begin{frontmatter}



\author[1]{Meghana Nasre}
\ead{meghana@cse.iitm.ac.in}
\affiliation[1]{organization={Department of CSE},
            addressline={IIT Madras},
            city={Chennai},
            postcode={600036}, 
            state={Tamilnadu},
            country={India}}
\author[2]{Prajakta Nimbhorkar}
\ead{prajakta.nimbhorkar@gmail.com}

\affiliation[2]{organization={Department of CSE},
            addressline={Chennai Mathematical Institute and UMI ReLaX},
            city={Chennai},
             postcode={603103}, 
            state={Tamilnadu},
            country={India}}
 \author[1]{Keshav Ranjan}  
\ead{keshav@cse.iitm.ac.in}

\author[3,fn3]{Ankita Sarkar}
\ead{Ankita.Sarkar.GR@dartmouth.edu}
\affiliation[3]{organization={Department of Computer Science},
            addressline={Dartmouth College},
            city={Hanover},
            state={NH 03755},
            country={USA}}
  \fntext[fn3]{The work was partly done when the author was a Master's student at Chennai Mathematical Institute.}         



\begin{abstract}
We consider the many-to-many bipartite matching problem in the presence of two-sided preferences and two-sided lower quotas. The input to our problem is a bipartite graph $G=(\AA \cup \BB, E)$, where  each vertex in $\AA \cup \BB$ specifies a strict preference ordering over its neighbors. Each vertex has an upper quota and a lower quota denoting the maximum and minimum number of vertices that can be assigned to it from its neighborhood.
In the many-to-many setting with two-sided lower quotas, informally, a \emph{critical} matching is a matching which fulfils vertex lower quotas to the maximum possible extent. This is a natural generalization of the definition of a critical matching in the one-to-one setting~\cite{Kavitha2021}.
Our goal in the given problem is to find a \emph{popular} matching in the set of critical matchings. A matching is popular in a given set of matchings if it remains undefeated in a head-to-head election with any matching in that set. Here, vertices cast votes between pairs of matchings. We show that there always exists a matching that is popular in the set of critical matchings. We present an  efficient algorithm to compute such a matching of the largest size. We prove the popularity of our matching using a dual certificate.

\end{abstract}

\begin{keyword}
Matching \sep Many-to-Many \sep Lower quotas\sep Two-Sided Preferences \sep Popular \sep Critical 



\end{keyword}

\end{frontmatter}

\section{Introduction}\label{sec:intro}
In this paper, we study the many-to-many bipartite matching problem in the presence of two-sided preferences where vertices in both partitions specify lower and upper quotas. The many-to-many matching problem models applications like assigning workers to firms~\cite{roth1984stability} and students to courses~\cite{CEFMMP2014} where both sides of the bipartition can accept multiple partners. There also exist markets which are typically many-to-one, but some agents in those markets are multiple job holders. For instance -- around 5\% employees in the U. S. are multiple jobholders\footnote[3]{Source: Annual averages of employed multiple job holders by industry, Division of Labor Force Statistics, U.S. Bureau of Labour Statistics.} and around 35\% of teachers in Argentina work in two or more schools~\cite{echenique2006theory}. It is natural to have \emph{preferences} on both sides of the bipartition in matching applications like student-course allocation\footnote[4]{In literature, the student-course allocation problem has also been considered as a one-sided matching problem where courses are treated as objects. For instance -- in~\cite{CEFMMP2014}, authors deal with the applicant-course allocation problem as one-sided matching markets where the notion of optimality considered is pareto optimality. In general, the courses offered at a university have bounded quotas to prevent the situation where some courses are overloaded. Once quotas are full, courses have to decide which students' application to discard. One way of limiting the students' application is to use course preferences -- making the problem two-sided. The student-course allocation problem has also been studied as a two-sided matching market in the literature. For example -- in~\cite{brandl2019two}, authors study it as a many-to-many matching problem with two-sided preferences without lower quotas where the notion of optimality considered is popularity.} or worker-firm assignment. For example -- consider the student-course allocation problem where students have preferences over various courses based on their interests, and a course-teacher specifies ordering of students based on their performances in the corresponding prerequisite courses. In the student-course allocation problem, students typically have a \emph{minimum} requirement on the number of courses they need to complete in a semester, and a course may be offered only if there is a \emph{minimum} number of registrants. Similarly, in the worker-firm assignment, certain workers \emph{must} be assigned jobs, and firms may need a \emph{minimum} number of workers for their operations. Lower quotas allow us to capture these constraints. These matching markets require a systematic study of many-to-many two-sided matching markets with two-sided lower quotas.

The input to our problem is a bipartite graph $G=(\AA\cup\BB,E)$, where $\AA$ and $\BB$ are two sets of vertices and $E$ denotes the set of all the acceptable vertex-pairs. Every vertex $u\in \AA\cup\BB$ has a strict preference ordering on its neighbours, called the {\em preference list} of $u$, denoted as $\prefu$. 
Associated with every vertex $u\in\AA\cup\BB$ is a lower quota $q^-(u)\in \mathbb{Z}^+\cup\{0\}$ and  an upper quota $q^+(u)\in \mathbb{Z}^+$ such that $q^-(u)\le q^+(u)$. The lower quota $q^-(u)$ denotes the minimum number of vertices from its neighbourhood that {\em must be} assigned to $u$ and the upper quota $q^+(u)$ denotes the maximum number of vertices that $u$ can accommodate in any assignment. If $q^-(u)>0$ for a vertex $u$ then we call $u$ an \emph{lq-vertex}. A \emph{matching} $M$ in $G$ is a subset of the edge set $E$ such that $|M(u)|\le q^+(u)$ for each vertex $u\in\AA\cup\BB$, where $M(u)$ denotes the set of neighbours assigned to $u$ in $M$. Note that this definition of matching deviates from the classical one used in graph theory. We still use the term \emph{matching} as done in the literature on matchings under preferences \cite{brandl2019two,HIM16}. In the matching $M$, a vertex $u\in\AA\cup\BB$ is called \emph{fully-subscribed} if $|M(u)| = q^+(u)$, \emph{under-subscribed} if $|M(u)| < q^+(u)$, \emph{deficient} if $|M(u)| < q^-(u)$, and \emph{surplus} if $|M(u)| > q^-(u)$.  A matching $M$ is {\em feasible} if no vertex in $\AA \cup \BB$ is deficient in $M$. Feasible matchings are desirable in the sense that they always fulfil the demand of every vertex. Unfortunately, the existence of a feasible matching is not guaranteed. 

Consider an instance with $\AA=\{a_1,a_2,a_3\}$ and $\BB=\{b_1,b_2\}$ shown in Figure~\ref{subfig:exCrit}. In this instance, the sum of upper quotas of vertices on the $\BB$-side is three whereas the sum of lower quotas of vertices on the $\AA$-side is four. Thus, a feasible matching does not exist for this instance. In such a situation, we seek to compute a matching that is as close as possible to a feasible matching. Informally, a \emph{critical} matching~\cite{Kavitha2021} is a matching which fulfils vertex lower quotas to the maximum possible extent. Thus, if the instance admits a feasible matching, then the set of critical matchings and the set of feasible matchings are exactly the same. In this work, we are interested in computing a critical matching that is {\em optimal} with respect to the preferences of the vertices. 

For a matching $M$, we define the deficiency of a vertex $u\in\AA\cup\BB$  as $\max\{0,q^-(u)-|M(u)|\}$. The deficiency, $\Df{M}$, of a matching $M$ is equal to the sum of the deficiencies of all the vertices $u\in\AA\cup\BB$ in $G$. A critical matching is a matching with minimum deficiency among all the matchings. Given a matching $M$, let $\Dfa{M}$ and $\Dfb{M}$  denote the sum of deficiencies of all the vertices in $\AA$ and all the vertices in $\BB$, respectively, in $M$. That is, 

$$\Dfa{M}=\sum_{a\in\AA\ and\ |M(a)|<q^-(a)} ( q^-(a)-|M(a)|)$$ and  $$\Dfb{M}=\sum_{b\in\BB\ and\ |M(b)|<q^-(b)} (q^-(b)-|M(b)|)$$
In other words, $\Dfa{M}$ and $\Dfb{M}$ denote the total number of lower-quota positions that remain unfilled in $M$ for all the vertices in $\AA$  and all the vertices in $\BB$, respectively. Note that $\Df{M}=\Dfa{M}+\Dfb{M}$. 

\begin{definition}[Critical Matching]\label{def:critical}
A matching $M$ in $G$ is critical if there is no matching $N$ in $G$ such that $\Df{N}<\Df{M}$. 
\end{definition}

 For a critical matching $M$ and any other matching $N$ (not necessarily critical) it is easy to show that $\Dfa{M}\le\Dfa{N}$ and $\Dfb{M}\le\Dfb{N}$  (see Claim~\ref{lem:noLessDef}). In the example shown in Figure~\ref{fig:exCritical} consider the three matchings $M,M'$ and $N$ shown in Figure~\ref{subfig:matchExCrit}. The deficiencies of these matchings are $\Df{M'}=2, \Df{M}=\Df{N}=1$. Since $G$ does not admit a feasible matching, $M$ and $N$ are critical matchings in $G$.

\begin{figure}
\centering
\begin{subfigure}{.5\textwidth}
  \begin{tikzpicture}[thick,
  lnode/.style={draw=white,fill=white},
  fsnode/.style={draw,circle,fill=black,scale=0.5},
  every fit/.style={ellipse,draw,inner sep=-2pt,text width=0.5cm},
  -,shorten >= 3pt,shorten <= 3pt
]

  \node[fsnode] (a1) at (0,3) {};
  \node[fsnode] (a2) at (0,2.2) {};
  \node[fsnode] (a3) at (0,1.4) {};
  \node[fsnode] (b1) at (2.5,3) {};
  \node[fsnode] (b2) at (2.5,1.9) {};
\node at (-1.1,3) {$[1,2]\ a_1$};
\node at (-1.1,2.2) {$[2,2]\ a_2$};
\node at (-1.1,1.4) {$[1,1]\ a_3$};
\node at (3.5,3) {$b_1\ [0,1]$};
\node at (3.5,1.5) {$b_2\ [1,2]$};

\node [fit=(a1) (a3),label=above:$\AA$] {};
\node [fit=(b1) (b2),label=above:$\BB$] {};

\draw(a1) -- (b1) node[pos=0.3,draw=white, fill=white,inner sep=0.1pt] {\textcolor{black}{\small 1}} node[pos=0.7,draw=white, fill=white,inner sep=0.1pt] {\textcolor{black}{\small 1}};
\draw(a1) -- (b2) node[pos=0.2,draw=white,fill=white,inner sep=0.1pt]{\textcolor{black}{\small 2}} node[pos=0.7,draw=white, fill=white,inner sep=0.1pt] {\textcolor{black}{\small 2}};
\draw(a2) -- (b1) node[pos=0.3,draw=white,fill=white,inner sep=0.1pt]{\textcolor{black}{\small 1}} node[pos=0.8,draw=white, fill=white,inner sep=0.1pt] {\textcolor{black}{\small 2}};
\draw(a2) -- (b2) node[pos=0.2,draw=white,fill=white,inner sep=0.1pt]{\textcolor{black}{\small 2}} node[pos=0.78,draw=white, fill=white,inner sep=0.1pt] {\textcolor{black}{\small 3}};
\draw(a3) -- (b2) node[pos=0.25,draw=white,fill=white,inner sep=0.1pt]{\textcolor{black}{\small 1}} node[pos=0.65,draw=white, fill=white,inner sep=0.1pt] {\textcolor{black}{\small 1}};
\end{tikzpicture}
\caption{}
  \label{subfig:exCrit}
\end{subfigure}%
\begin{subfigure}{.5\textwidth}
  \centering
  \begin{tabular}{cl}
            $M=$&$\{(a_1,b_1),(a_2,b_2),(a_3,b_2)\}$\\
            $M'=$ &$\{(a_1,b_1),(a_1,b_2),(a_3,b_2)\}$\\
            $N=$&$\{(a_1,b_2),(a_2,b_1),(a_2,b_2)\}$
		\end{tabular}
  \caption{}
  \label{subfig:matchExCrit}
\end{subfigure}
\caption{\label{fig:exCritical}%
(i) Example instance -- The numbers in [ ] denote the lower quota and upper quota, respectively, for the corresponding vertex. The numbers on the edges denote the ranking of the other endpoint in the preference list of that vertex. (ii) Three different matchings $M, M'$ and $N$. The deficiencies of these matchings are $1,2$ and $1$, respectively.
}
\end{figure}
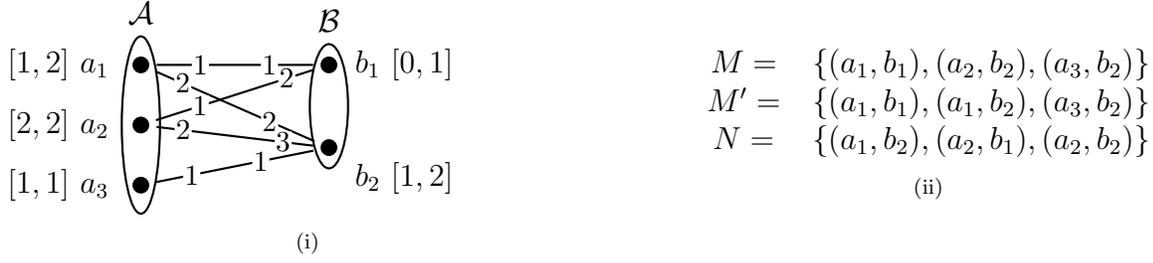


Our setting is a generalization of the one-to-one stable marriage problem and many-to-one hospital residents ({\sf HR}) problem without lower quotas, both of which have been extensively investigated \cite{GS62,roth84,Roth86} and recently in the presence of lower quotas~\cite{NN17,NNRS21,Kavitha2021}.
The classical notion of optimality for the matching problem with two-sided preferences is \emph{pairwise stability}, henceforth called stability and is characterized by the absence of a \emph{blocking pair}. A pair $(a, b) \in E \setminus M$ is called a \emph{blocking pair} with respect to the matching $M$ if $(i)$ either $|M(a)| < q^+(a)$ or $b$ precedes some $b' \in M(a)$ in $\prefa$ and $(ii)$ either $|M(b)| < q^+(b)$ or $a$ precedes some $a'\in M(b)$ in $\prefb$.

The well-known Gale and Shapley algorithm~\cite{GS62} for the one-to-one setting can be suitably modified to compute a stable matching in a many-to-many instance without lower quotas. A stable matching always exists; however, in the presence of lower quotas, it may not be critical -- this holds even in the one-to-one setting, with lower quotas only on one side of the bipartition. Using the Rural Hospitals theorem~\cite{Roth86}, it is possible to determine whether the instance admits a stable and critical matching.
 Since stability and criticality are not simultaneously guaranteed, an alternate notion of optimality, namely \emph{popularity} \cite{G75} is used in the literature~\cite{NN17,NNRS21,Kavitha2021}. Informally, a matching is \emph{popular} in a given set of matchings if it remains undefeated in a head-to-head election with any matching in that set. The notion of popularity provides an \emph{overall} stability since a majority of vertices do not wish to deviate from a popular matching.

Before stating our results, we formally define the notion of popularity below.

\noindent \textbf{Popularity:} The notion of popularity used in this work is a generalization of the one used in~\cite{NN17,NNRS21}. Each vertex casts votes to compare two given matchings $M$ and $N$. A vertex $v\in\AA\cup\BB$ can get matched to at most $q^+(v)$ many vertices in $M$ and $N$. If $v$ is under-subscribed in $M$ or $N$, then we assume that the remaining positions of $v$ are matched to appropriately many instances of $\bot$.  The vertex $v$ prefers any vertex in $\prefv$ over $\bot$ and is indifferent among all $\bot$ matched to it. Thus, we can always assume that $|M(v)|=|N(v)|=q^+(v)$ and $v$ can cast up to $q^+(v)$ votes. The vertex $v$ is indifferent between the two matchings $M$ and $N$ for $|M(v)\cap N(v)|$ many positions and does not cast votes for these positions.  Note that $|M(v)\setminus N(v)|= |N(v)\setminus M(v)|$ and hence, $v$ decides a correspondence function $\cor_v$ to compare a vertex $u\in M(v)\setminus N(v)$ and $u'\in N(v)\setminus M(v)$. The correspondence function $\cor_v$ is an arbitrary bijection between the two sets $M(v)\setminus N(v)$ and $N(v)\setminus M(v)$. That is, for each vertex $u\in M(v)\setminus N(v)$, $\cor_v(u,M,N)$ gives a vertex $u'\in N(v)\setminus M(v)$ and vice versa. Thus, $vote_v(u,\cor_v(u,M,N))$ is 1 if $v$ prefers $u$ over $u'$, and $-1$ if $v$ prefers $u'$ over $u$. 
To compare matchings $M$ and $N$, each vertex fixes $\cor_v$. Now, we count the total votes by $v$ when comparing $M$ and $N$ as 

 $$vote_v(M,N,\cor_v) = \sum\nolimits_{u\in M(v)\setminus N(v)} vote_v(u,\cor_v(u,M,N)) $$
 
Thus, the number of votes that $M$ gets over $N$ is given by 

$$\Delta(M,N,\cor)=\sum\nolimits_{v\in \AA\cup\BB}vote_v(M,N,\cor_v) $$ where $\cor$ is called the correspondence function from $M$ to $N$ and denotes the \emph{disjoint union} of $\cor_v$ for all $v\in\AA\cup\BB$.

\begin{definition}[Popular Matching]\label{def:popular}
A matching $M$ is more popular than $N$ under $\cor$ if $\Delta(M,N,\cor)>0$. A matching $M$ is called popular if there is no matching $N$ such that $N$ is more popular than $M$ for any choice of $\cor$.
\end{definition}

The number of votes cast by a vertex $v$ to compare two matchings $M$ and $N$ depends on the correspondence function $\cor_v$ fixed by the vertex $v$. Let us consider an instance $G=(\AA\cup\BB,E)$ where $\AA=\{a_1,a_2,a_3,a_4,a_5,a_6\}, \BB=\{b\}, q^+(a)=1$ for all $a\in\AA$, $q^+(b)=3; q^-(v)=0$ for all $v\in\AA\cup\BB$, and $E$ contains the edges of a complete bipartite graph between $\AA$ and $\BB$. The preference list of $b$ is defined as $\prefb : a_1,a_2,a_3,a_4,a_5,a_6$. Consider two matchings $M=\{(a_2,b),(a_3,b),(a_5,b)\}$ and $N=\{(a_1,b), (a_4,b),(a_6,b)\}$. The vertex $b$ compares the two sets $\{a_2,a_3,a_5\}$ and $\{a_1,a_4,a_6\}$ using the correspondence function $\cor_{b}$. If $b$ fixes $\cor^1_{b}$ such that $a_2,a_3, a_5$ are mapped to $a_1,a_4,a_6$ respectively, then $vote_{b}(M,N,\cor^1_{b})=-1+1+1=1$. Whereas, if $b$ fixes $\cor^2_{b}$ such that $a_2,a_5, a_3$ are mapped to $a_1,a_4,a_6$ respectively, then $vote_{b}(M,N,\cor^2_{b})=-1-1+1=-1$. 

Note that the votes by vertices in $\AA$ while comparing $M$ and $N$ cancel out as the set of matched vertices from $\AA$-side in both the matchings are disjoint. This implies that $M$ is more popular than $N$ under $\cor^1_{b}$, and $N$ is more popular than $M$ under $\cor^2_{b}$. Thus, the popularity of $M$ compared to $N$ or vice versa crucially depends on the correspondence function $\cor$. Yet, interestingly, our algorithm outputs a matching that is popular for \emph{any} choice of $\cor$. For the above instance $G$, our algorithm outputs the matching $M'=\{(a_1,b),(a_2,b),(a_3,b)\}$. It can be verified that there does not exist any correspondence function under which $M'$ is beaten by some other critical matching.

\vspace{0.1in}

\noindent{\bf Our results:}\label{sec:results}
 The main contribution of this work is an efficient algorithm to compute a popular matching in the set of critical matchings, henceforth called popular critical matching, for the many-to-many setting with two-sided preferences and two-sided lower quotas.
\begin{theorem}\label{theo:main}
  Let $G=(\AA\cup\BB,E)$ be a bipartite graph where each vertex $v\in\AA\cup\BB$ has an associated lower quota $q^-(v)$, an upper quota $q^+(v)$ and a strict preference ordering over its neighbours. Then, there exists a polynomial time algorithm to compute a maximum size matching $M$ that is popular in the set of critical matchings of $G$.  
\end{theorem}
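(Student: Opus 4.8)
I would prove Theorem~\ref{theo:main} in three stages: build the matching via a leveled deferred-acceptance procedure on a cloned graph, show the output is critical and of maximum size among popular critical matchings, and certify popularity with an LP-duality argument. For the first stage, transform $G$ into a cloned graph $\tG$ by replacing every vertex $v$ with $q^+(v)$ clones (so that edges become one-to-one-feasible), and give every $\AA$-clone a stack of levels $0,1,\dots,h$, where $h$ is bounded by a small function of the number of lq-vertices; a clone at level $\ell$ is preferred by all its neighbours to any clone at a level below $\ell$, and ties within a level are broken by the original preference order. The algorithm is the natural many-to-many deferred acceptance on $\tG$: an $\AA$-clone proposes down its list at its current level; a $\BB$-clone retains its best proposers up to its cloned capacity; an $\AA$-clone rejected at every neighbour at its current level is promoted to the next level; and the level mechanism is arranged so that demand is steered toward deficient lq-vertices. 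Polynomial running time follows because every promotion strictly increases the total level, a quantity bounded by $h$ times the number of $\AA$-clones (polynomial, since all quotas are at most $n$), and between promotions the procedure is ordinary deferred acceptance.

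Second, I would argue that the output $M$ is critical. By Claim~\ref{lem:noLessDef} it suffices to show $\Dfa{M}$ and $\Dfb{M}$ are each as small as possible; this is a deficiency analogue of the Rural Hospitals argument, proved by a Hall-type / deficiency-version-of-K\"onig argument: if some matching $N$ left strictly fewer lower-quota positions unfilled on, say, the $\AA$-side, tracing the resulting alternating structure in $M\oplus N$ would exhibit a level-$h$ $\AA$-clone that ought to have been promoted further or matched, contradicting termination. Maximum size among popular critical matchings then follows by the standard augmenting-path argument: if a popular critical $N$ were strictly larger, $M\oplus N$ would contain an $M$-augmenting path, and pairing this with the popularity certificates of $M$ (constructed next) and of $N$ yields a contradiction; running deferred acceptance so that no $\AA$-clone voluntarily leaves the matching guarantees $M$ already attains this maximum.

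The core of the proof, and the step I expect to be hardest, is popularity. Fix an arbitrary matching $N$ and correspondence $\cor$; I would show $\Delta(N,M,\cor)\le 0$. Define for each edge $(a,b)$ a weight $\mathsf{wt}_M(a,b)\in\{-1,0,1\}$ recording whether $a$, and whether $b$, would vote for the pair $(a,b)$ when read against $M$ (with $\bot$-positions treated as in the definition of popularity), and observe that $\Delta(N,M,\cor)$ is at most the total $\mathsf{wt}_M$-weight of $N$, \emph{uniformly over all $\cor$}, since the individual votes telescope along the alternating paths and cycles of $M\oplus N$ in $\tG$ no matter how $\cor$ pairs partners up. Then read dual values off the execution: an $\AA$-clone is assigned a value determined by the level at which it ends up matched (or by $h$ if it is unmatched), a $\BB$-clone gets a complementary value, and these are aggregated to the original vertices as $\alpha_v$. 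It remains to check (a) \emph{dual feasibility}, $\mathsf{wt}_M(a,b)\le\alpha_a+\alpha_b$ for every edge $(a,b)$, a short case analysis on the levels and relative ranks of $a$ and $b$ in $M$; and (b) that the quota-weighted total of the $\alpha_v$ is non-positive, which is exactly where criticality of $M$ is used --- the only vertices that may push levels upward are deficient lq-vertices, and the level budget they consume is fully accounted for. Combining (a), (b) and the $\cor$-independent telescoping bound gives $\Delta(N,M,\cor)\le 0$ for every $N$ and every $\cor$, i.e.\ $M$ is popular in the set of critical matchings. The subtle points are keeping $h$ (hence the range of the duals) small, folding the two-sided lower quotas into a single weight function symmetrically, and making the per-edge inequality insensitive to the adversarial $\cor$ --- precisely the features by which the many-to-many two-sided-lower-quota setting goes beyond the one-to-one case of~\cite{Kavitha2021}.
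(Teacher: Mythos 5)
Your overall architecture is recognisably the same as the paper's: leveled proposals steered first toward deficient lower-quota vertices, criticality via an alternating-path/counting argument, popularity via edge weights on a cloned graph and a level-indexed dual feasible solution summing to zero, and maximum size via complementary slackness. The gap is in your first stage, and it is exactly the obstruction the paper is organised around. You clone every vertex into $q^+(v)$ unit-capacity copies \emph{up front} and run leveled deferred acceptance on the cloned graph $\tG$. In the many-to-many setting, nothing in that procedure prevents two different clones of the same $a$ from being held simultaneously by two different clones of the same $b$; such a one-to-one matching of clones does not correspond to any matching of $G$ (an edge of $E$ cannot be used twice), so the output of your stage 1 need not be mappable back to a critical matching at all. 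This is precisely the failure mode documented in Appendix~\ref{sec:challenges} (see the instance of Fig.~\ref{Fig:CounterEx}), and it is why Algorithm~\ref{algo:maxPopSC2LQ} runs the proposals on the \emph{original} graph with level-dependent capacities and an explicit ``at most one level copy of $a$ matched to $b$'' rule in \texttt{DecideAccRej}; the cloned graph $G_M$ is constructed only \emph{after} the algorithm, from the fixed output $M$, purely for the analysis.

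Related unresolved points in your sketch: you never say how the lower-quota/upper-quota distinction is encoded across clones and levels, whereas the paper's specific rules (propose only to \emph{lq}-neighbours with receiver capacity $q^-(b)$ at levels below $\T$, proposer capacity $q^+(a)$ at levels $\T,\T+1$ and $q^-(a)$ above, receiver capacity dropping to $q^-(b)$ while it holds a partner of level below $\T$) are exactly what yields the counting facts that at most $q^-(a)$ clones of $a$ sit above level $\T+1$ and at most $q^-(b)$ clones of $b$ sit below level $\T$ (Lemma~\ref{lem:merged}, Lemma~\ref{pr:vertexCloneG}); without them your ``promotion contradiction'' for criticality and the dual-feasibility case analysis cannot be carried out. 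Your level bound ``a small function of the number of lq-vertices'' is also too small: the counting needs $\S+\T+2$ levels where $\S,\T$ are the \emph{sums} of lower quotas, since one vertex can contribute several units of deficiency. Finally, your claim that $\Delta(N,M,\cor)$ is bounded by the weight of $N$ ``uniformly over all $\cor$'' glosses over the handling of deficient and under-subscribed positions; the paper needs dummy vertices and last-resorts and constructs, for each $\cor$, a perfect matching $N^*$ with $wt(N^*)=\Delta(N,M,\cor)$ exactly (Lemma~\ref{lem:corr}), and the same clones-versus-last-resorts structure is what makes the slack-edge argument for maximum size go through. Until the duplicate-edge issue and the per-level quota encoding are fixed, the later stages of your plan have no valid matching of $G$ to certify.
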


Recall the many-to-one setting without lower quotas \emph{i.e.} the \HR\ problem. For any instance of the \HR\ problem, the Rural Hospitals theorem~\cite{Roth86} states that every stable matching matches each hospital to the same number of residents and all the under-subscribed hospitals to the \emph{same set} of residents. 
A modified version of the Rural Hospitals theorem holds for popular matchings in the many-to-many setting without lower quotas~\cite{brandl2019two}. We show a similar version of the Rural Hospitals theorem for our setting:

\begin{theorem}\label{theo:ruralHos}
Let $M$ be the matching computed by our algorithm for a bipartite graph $G=(\AA\cup\BB,E)$ where each vertex $v\in\AA\cup\BB$ has an associated lower quota $q^-(v)$, an upper quota $q^+(v)$ and a strict preference ordering over its neighbours. Then every max-size popular critical matching of $G$ matches the same set of vertices as $ M$, and each vertex $v\in \AA\cup\BB$ matches to exactly $|M(v)|$ many vertices in every other max-size popular critical matching.
\end{theorem}

Next, we show that although we are restricting ourselves to the set of critical and popular matchings, we are not compromising too much on the size when compared to an arbitrary \emph{maximum size} matching. A similar result was shown in~\cite{brandl2019two} for popular matchings in the many-to-many setting without lower quotas.

\begin{theorem}\label{theo:2by3}
Let $M$ be the matching computed by our algorithm and $M_{max}$ be an arbitrary maximum size matching for a bipartite graph $G=(\AA\cup\BB,E)$ where each vertex $v\in\AA\cup\BB$ has an associated lower quota $q^-(v)$, an upper quota $q^+(v)$ and a strict preference ordering over its neighbours. Then $|M|\ge \frac{2}{3}|M_{max}|$.
\end{theorem}

\noindent\textbf{Challenges in many-to-many setting:} In general, it is nontrivial to adapt one-to-one or many-to-one matching algorithms to the many-to-many setting. This is pointed out in the literature (see \cite{kamiyama2019many,chen2010strongly}). Specifically in our case, there are challenges in extending the approaches considered earlier for computing popular feasible matching in the one-to-one and many-to-one settings to the many-to-many setting. The algorithms in \cite{NNRS21,Kavitha2021} both propose a reduction where the original instance with two-sided lower quotas is converted into an instance {\em without} lower quotas. The standard Gale-Shapley algorithm~\cite{GS62} is used to obtain a  stable matching in the reduced instance, and the edges in the stable matching are mapped to the edges in the original instance. The reduced instance constructed in \cite{NNRS21,Kavitha2021}  has  multiple {\em level} copies of the original vertices. The notion of levels for vertices will become clear when we give the description of our algorithm in Section~\ref{sec:Algo}.
Extending a similar approach to the many-to-many setting can lead to a vertex $a$ being matched to $b$ multiple times in the reduced instance, and such matching cannot be mapped back to a feasible/critical matching in the original instance. Such a difficulty does not arise when the quota of at least one side of the bipartition is one which is indeed the case in the earlier works ~\cite{NN17,NNRS21,Kavitha2021} (see \ref{sec:challenges} for an example). In this work, we present a proposal-based algorithm inspired by a similar idea mentioned by Kavitha in~\cite{Kavitha2021}.

\noindent {\bf Related work: }\label{sec:rel-work}
The notion of popularity was introduced by {G{\"a}rdenfors}~\cite{G75}  as  \emph{majority assignment} for the stable marriage problem with \emph{ties}.  Abraham~\etal~\cite{abraham2007popular} gave an efficient
 algorithm for computing popular matching in a bipartite graph where agents in only one side of the bipartition have preferences. Subsequently,  popularity is investigated in the two-sided world~\cite{biro2010popular,huang2013popular,kavitha2014size,hirakawa2015structure,cseh2018popular,MNNR18,kavitha2016popular}. This notion of optimality  has been investigated in various settings of the two-sided preference list models like -- \emph{with} lower quotas~\cite{NN17,MNNR18,Kavitha2021,NNRS21}  and \emph{without} lower quotas~\cite{huang2013popular,kavitha2014size,nasre2017popularity,brandl2019two}.  In the two-sided settings \emph{without} lower quotas popularity has been considered as an alternative to stability. For instance, in \cite{huang2013popular,kavitha2014size}, popularity was proposed for a stable marriage problem to obtain optimal matchings larger in size than a stable matching. 
Unlike stable matchings where all of them have the same size, popular matchings can be of varying sizes. In a stable marriage instance (one-to-one setting), a maximum cardinality popular matching is linear-time computable and popular matching amongst maximum cardinality matchings is polynomial-time computable~\cite{kavitha2014size}.  Popularity in the generalized \HR\ setting (Laminar Classified Stable Matching problem) and many-to-many setting has been considered in~\cite{nasre2017popularity} and~\cite{brandl2019two} respectively, both without lower quotas.  Readers are referred to the survey~\cite{cseh2017popular}  by {\'A}gnes Cseh for recent results about popularity in different settings.
There has been a lot of follow-up work on popular matchings in bipartite one-to-one, many-to-one, and many-to-many settings, their properties and generalizations like min-cost popular matchings ~\cite{Kavitha20}, dual certificates to popularity ~\cite{kavitha2016popular,KKMSS20}, popular matchings polytope and on their properties ~\cite{HK21,FK20}. Other follow-up works on popular matchings include quasi-popular matchings~\cite{FK20}, popular max-matching polytope~\cite{kavithaICALP2021}, dominant matchings polytope \cite{cseh2018popular,FK20}, popular matchings with one-sided bias \cite{kavithaOnesideBiasPop}, popular matchings of desired size~\cite{Kavitha18}, popular matchings with one-sided ties~\cite{CHK17,GNNR19}, complexity-theoretic results about popular and dominant matchings \cite{AFKP22}.

Now we restrict our attention to works where lower quotas are imposed. A special case of our problem called popular matching amongst \emph{feasible} matchings in the presence of lower quotas has been considered in~\cite{NN17} for the \HRLQ\ setting, where only hospitals have lower quotas. Nasre and Nimbhorkar~\cite{NN17}  showed that the maximum cardinality popular matching amongst feasible matchings for an \HRLQ\ problem is efficiently computable. An empirical analysis of popular feasible matchings and \emph{envy-free} matchings for the \HRLQ\ problem has been done in~\cite{MNNR18} where authors observed that popular matchings outperform {\em envy-free} matchings on parameters of practical relevance. Very recently, Kavitha~\cite{Kavitha2021} explored  popularity for one-to-one setting in the two-sided preference list model with two-sided lower quotas and showed that a maximum size popular matching amongst all critical matchings in this setting is efficiently computable. Independent of the work in~\cite{Kavitha2021} and concurrently, we~\cite{NNRS21} generalized the work in~\cite{NN17} to the case of \HRTLQ\ i.e. many-to-one setting in two-sided preference list model with two-sided lower quotas. In this work, we nail down the final piece of the problem for popularity in the many-to-many setting with lower quotas. We show that maximum size popular matching amongst all critical matchings for many-to-many setting in the two-sided preference list model with two-sided lower quotas is efficiently computable. 

As pointed out earlier, a critical stable matching may not exist for an instance with lower quotas. Thus, to deal with lower quotas, different notions of optimality other than popularity, have been considered in the literature. Hamada \etal\ in~\cite{HIM16} considered the problem of computing a feasible matching with \emph{minimum} number of blocking pairs or blocking residents. They showed that both of these problems are NP-Hard even under severe restrictions on preference lists. Fleiner and Kamiyama~\cite{FK16} investigated two-sided lower quotas in the many-to-many setting. In their model, they allow classifications as well. However, their goal was to decide whether the instance admits a stable matching or not. \emph{Envy-freeness}~\cite{wu2018lattice} is another well-investigated optimality notion in \HR\ settings. This is a relaxed notion of stability which forbids the existence of a resident who has justified envy towards another resident. Yokoi in~\cite{Yokoi20} investigated envy-freeness for \HRLQ\ setting and provided a characterization for \HRLQ\ instances that admit an envy-free matching. She also gave a linear-time algorithm to compute an envy-free matching, if it exists. Krishnaa \etal~\cite{girija2020envy} investigated the complexity of computing maximum size envy-free matchings and also introduced and studied a new notion of optimality called \emph{relaxed stability} for the \HRLQ\ setting.  Mnich and Schlotter in~\cite{mnich20} considered  stable marriage problem (one-to-one) with two-sided lower quotas under a relaxed notion of tractability, namely fixed-parameter tractability. There they studied how a set of natural parameters like maximum length of a preference list, allowed number of blocking pairs and total number of lower quota vertices determine the computational tractability of this problem. Recently, Goko \etal\ in~\cite{goko2022maximally} considered \HR\ problem in the presence of one-sided lower quotas and two-sided ties. They studied the problem of computing weakly stable matching with maximum total \emph{satisfaction ratio} for lower quotas, where satisfaction ratio of each hospital $h$ is given by $\min \{1,\frac{|M(h)|}{q^-(h)}\}$.

\section{Our algorithm}
\label{sec:Algo}
We first give a high-level idea of our algorithm. As mentioned earlier, our algorithm is a proposal-based algorithm. For the many-to-many setting, in the absence of lower quotas on any of the sides, a proposal-based algorithm is presented in~\cite{brandl2019two} which works as follows. One of the sides of the bipartition, say the $\AA$ side, proposes to vertices on the other, and the standard Gale and Shapley algorithm~\cite{GS62} is executed. At this stage, all the proposing vertices are considered to be at the lowest level, i.e. level zero. If at the end of this proposal sequence, some vertices are under-subscribed, they are given another \emph{chance}, that is, the level of the vertex is increased, and the vertices are allowed to propose again. A vertex $b$ which receives the proposal always prefers a higher level vertex  $a$ over any lower level vertex $a'$  irrespective of the preference of $b$ between $a$ and $a'$.

If there were lower quotas for vertices on the $\AA$ side alone, an extension of the above idea as given in \cite{MNNR18} is to let deficient vertices on the $\AA$ side propose with even higher levels.  It can be shown that the number of \emph{additional} levels required to compute a popular feasible/critical matching is equal to the sum of lower quotas of the vertices on the $\AA$ side. 
Furthermore, since vertices on the $\AA$ side are proposing to satisfy their deficiency, at a higher level ($>1$), the vertex is allowed to get matched to at most lower quota many partners.
In our case, we have lower quotas on both sides. Thus before we start our standard Gale and Shapley algorithm, we ensure that the deficiency of the $\BB$ side is fulfilled. In order to do so, we let vertices on the $\AA$ side propose to {\em only} lower quota vertices on the $\BB$ side for a certain number of levels {\em before} the Gale and Shapley proposals begin. Since the goal is to meet the deficiency of $\BB$ side vertices, in a {\em lower level}, vertex $b$ is allowed to be matched to at most lower quota many partners. There are additional subtleties in making this overall idea work which we elaborate below formally.

Let $\S$ and $\T$ denote the sum of lower quotas of all the vertices in $\AA$ and $\BB$, respectively. {That is, $\S=\sum_{a\in\AA}q^-(a)$ and $\T = \sum_{b\in\BB}q^-(b) $.}  The algorithm uses $\S+\T+2$ levels $0,1,\ldots, \T, \T+1, \ldots, \S+\T+1$. All vertices in $\AA$ 
begin at level $0$. During the algorithm, a vertex $a\in\AA$ raises its level multiple times, possibly up to the highest level, that is, $\S+\T+1$. A vertex $a$ at level $\ell$ is denoted as $a^\ell$. A vertex $b\in\BB$ prefers $a_i^\ell$  over $a_j^{\ell'}$ if : 
\begin{itemize}
    \item[(i)] either $\ell > \ell'$ (relative positions of $a_i$ and $a_j$ in $\prefb$ do not matter) or
    \item[(ii)] $\ell = \ell'$ and $a_i$ precedes $a_j$ in $\prefb$.
\end{itemize} 

Let $\preflqa$ denote the preference list of $a$ restricted to \emph{lq}-vertices in its preference list $\prefa$. For instance if $\prefa = b_1, b_2, b_3, b_4$ where $b_2$ and $b_4$ are \emph{lq}-vertices, then $\preflqa = b_2, b_4$.  During the course of the algorithm, the capacity of the proposing vertex $a$, denoted by $c(a)$, and the set of neighbours that it proposes to depend on the level of the vertex (see Table~\ref{tab:quota}).
This is a crucial aspect of our algorithm, ensuring criticality on both sides as well as the popularity of our matching.

\begin{table}
\centering
\begin{tabular}{| p{3 cm} |p{1.1cm} | p{2cm} |p{1cm} p{4cm}|}
\hline
      \textbf{level of $a$}  & \textbf{$c(a)$} & \textbf{preference list of $a$ }&  & \textbf{$\qquad \ \ c(b)$}  \\
      \hline
      \hline 
      $0, \ldots, \T  - 1$ &  & \preflqa & $q^-(b)$ & \\ [2pt]
      \cline{1-1} \cline{3-5} 
      $\T$, $\T + 1$ & $q^+(a)$ & \prefa & $q^-(b)$ & if $\exists\ a_i\in M(b)$ such that $a_i$ is at level $< \T$ \\ [2pt]
      \cline{1-2}
      $\T+2, \ldots, \S+\T+1$ & $q^-(a)$ &  & $q^+(b)$ & otherwise \\[2pt]
\hline     
\hline
\end{tabular}
\vspace{0.1in}
\caption{Let $a \in \AA$ be the vertex proposing to $b \in \BB$. Entries in the table give the capacity and preference list of vertex $a$ and the capacity of the vertex $b$ used by Algorithm~\ref{algo:maxPopSC2LQ}. We denote the capacity of a vertex $v$ by $c(v)$.}
\label{tab:quota}
\end{table}

 Initially, each $a\in\AA$ is at level 0 and starts proposing to vertices in $\preflqa$ with its capacity, $c(a)$, set to $q^+(a)$. If  $|M(a)|<c(a)$ after completing its proposals at a level $0 \le \ell < \T - 1$, then $a$ raises its level by one and continues proposing to vertices in $\preflqa$.
 Once the vertex $a$ is at level $\T$ or higher, it proposes to {\em all} neighbours in its preference list, that is, to vertices in $\prefa$. If $|M(a)|<c(a)$ at level $\T$, then it raises its level by one and continues the proposals. If $a$ remains deficient after completing its proposals at level  $\T+1 \le \ell \le \S + \T$, then it raises its level by one and proposes to vertices in $\prefa$ with the capacity $c(a)$ set to $q^-(a)$.

The capacity, $c(b)$, of a vertex $b$ receiving the proposal is determined by the level of the proposing vertex and by the level of its matched partners at that time.
If $b$ receives a proposal from $a$ at level $0 \le \ell \le \T -1 $, then the capacity of $b$ is $q^-(b)$. On the other hand, if the level of the proposing vertex is $\T \le \ell \le \S+\T+1$, then the capacity of $b$ is determined as follows:
\begin{itemize}
    \item If there exists at least one vertex in $M(b)$ at level $\T-1$ or lower, then the capacity of $b$ is $q^-(b)$.
    \item Otherwise the capacity of $b$ is $q^+(b)$.
\end{itemize}
Whenever $b$ is full with respect to its capacity, and it receives a proposal from $a$, it rejects the least preferred vertex in $M(b) \cup \{a\}$, taking into consideration the levels of the vertices.

Throughout the algorithm, once a vertex $a$ raises its level, it starts proposing to vertices from the beginning of its current preference list ($\prefa$ or $\preflqa$) until it fulfils its capacity. If a vertex $b$ is already matched to $a$ at a lower level, then $b$ rejects the previous proposal and accepts the new one at the current level.

The approach is formally described in Algorithm~\ref{algo:maxPopSC2LQ}.  The output matching of this algorithm is $M$. 
Algorithm~\ref{algo:maxPopSC2LQ} uses a procedure \texttt{DecideAccRej} that decides whether a vertex $b\in\BB$ accepts/rejects the proposal from $a\in\AA$ based on the current matching and capacity of $b$. Lines~\ref{dec:art1} and \ref{dec:art2} in the \texttt{DecideAccRej} ensure that at most one level copy of a vertex $a$ is matched to $b$ throughout the algorithm. Lines~\ref{dec:st1}-\ref{dec:st2} are the standard accept/reject procedure. Lines~\ref{dec:que1} and~\ref{dec:que2} add the vertex $a$ at  level $\ell$ to the queue $Q$ provided $a$ is not full with respect to its capacity and 
no other level copy of $a$ is present in $Q$.

\begin{algorithm}
    \SetKwFunction{decision}{DecideAccRej}
    \caption{Max-size popular critical matching in $G = (\AA \cup \BB, E)$ }\label{algo:maxPopSC2LQ}
    \DontPrintSemicolon
    \SetAlgoLined
    \begin{spacing}{1.2}
    Let $\S=\sum_{a\in\AA}q^-(a)$ and $\T = \sum_{b\in\BB}q^-(b) $ \;
    \end{spacing}
    Set $M=\emptyset$, Initialize a queue $Q=\{a^0\ :\ a\in \AA\}$\;
    \While{$Q$ is not empty}{
        Let $a^\ell=dequeue(Q)$\;
        \If {$\ell<\T$ \label{alg:all2lq} \tcc*{ $|M(a)|<q^+(a)$} } { 
            \If{ $a^\ell$ has not exhausted $\preflqa$}{Let $b=$  most preferred unproposed vertex by $a^\ell$ in $\preflqa$ \;
            \decision{$a^\ell,q^+(a),b,q^-(b)$}\label{alg:1stdecide}\;
            }
            \Else{$\ell=\ell+1$ and add $a^\ell$ to $Q$\;}
        }
        \Else{
        \textbf{if} $\ell==\T$ or $\ell==\T+1$ \textbf{then} $c(a) = q^+(a)$
            \textbf{else} $c(a) = q^-(a)$ \label{alg:LQstudents} \;
            \If{$a^\ell$ has not exhausted $\prefa$}{Let $b=$ most preferred unproposed vertex by $a^\ell$ in $\prefa$\;
                \If{$|M(b)|<q^-(b)$}{\decision{$a^\ell,c(a),b,q^-(b)$}\;}
                \uElseIf{$|M(b)|==q^-(b)$}{ 
                \If{there exists a vertex in $ M(b)$ at level less than $\T$ \label{alg:lowerLevel}}{
                \decision{$a^\ell,c(a),b,q^-(b)$} \label{alg:lowLDecide}\;
                   } \Else{\decision{$a^\ell,c(a),b,q^+(b)$}\label{alg:22} \; }}                \Else{\decision{$a^\ell,c(a),b,q^+(b)$}\label{alg:24}\;}}
                \Else{ 
                \If{$(\ell<\S+\T+1$ and $|M(a)|<q^-(a))$ or $(\ell==\T)$ \label{alg:lq}}{$\ell=\ell+1$ and add $a^\ell$ to $Q$ \label{alg:lq1}\;}
                 }
        }
    }
    \Return $M$
\end{algorithm}

\begin{procedure}
\caption{DecideAccRej($a^\ell,q_a,b,q_b$)}\label{proce:Decide}
    \DontPrintSemicolon
    \SetAlgoLined
    \If{$a^x\in M(b)$ for $x<\ell$ \label{dec:art1}}{$M=(M\setminus \{(a^x,b)\}) \cup \{(a^\ell,b)\}$\label{dec:art2}}
    \uElseIf{$|M(b)|<q_b$\label{dec:st1}}{$M=M\cup \{(a^\ell,b)\}$\;}
    \uElseIf{$|M(b)|==q_b$}{
        Let $a_j^y\in M(b)$ be the least preferred vertex in $M(b)$\;
        \If{($\ell>y$) or ($\ell==y$ and $a$ precedes $a_j$ in $\prefb$)}{
        $M=M\setminus \{(a_j^y,b)\}\cup \{(a^\ell,b)\}$ and add $a_j^y$ to $Q$ if $\forall x \ a_j^x\notin Q$\;}\label{dec:st2}}
        \If{ $|M(a)|<q_a $ and  $\forall x\  a^x\notin Q$\label{dec:que1}}  {add $a^\ell$ to $Q$ \label{dec:que2}
        }  
\end{procedure}

We note that $s$ and $t$ are both $O(|E|)$ and each edge of $G$ is explored at most $\S+\T+2$ times. Thus, the running time of our algorithm is $O(|E|^2)$.

We use the example shown in Figure~\ref{fig:exCritical} to illustrate the execution of Algorithm~\ref{algo:maxPopSC2LQ}. Note that $\S=4$ and $\T=1$. Thus, during the course of algorithm an \emph{lq}-vertex from side $\AA$ can raise its level up to 6 (i.e. $\S+\T+1$) whereas a non \emph{lq}-vertex from side $\AA$ can raise its level up to 2 (i.e. $\T+1$). A possible proposal sequence for Algorithm~\ref{algo:maxPopSC2LQ} is shown in Table~\ref{tab:propSeq} in the appendix. Initially, each vertex $a\in\AA$ is at level 0 and proposes to vertices in $\preflqa$. These correspond to Proposals  1, 2 and 3 in Table~\ref{tab:propSeq}. Note that $a_1,a_2$ and $a_3$ do not propose to $b_1$ at level 0 since $b_1$ is a non \emph{lq}-vertex. Once the vertex $a$ raises its level to 1, it proposes to all the vertices in $\prefa$.  The capacity of $b_2$ changes from 1 to 2 when none of its lower quota many matched partners is at level 0 (see proposal number 7 in Table~\ref{tab:propSeq}). The vertex $a_2$ raises its level to 3 since it remains deficient even after it has exhausted its preference list at level 2. Subsequently, $a_1$ and $a_3$ also raise their respective levels to 3.  Note that the capacity of $a_1$ changes from 2 to 1 when it raises its level to 3 and above (see proposal number 16). 

Throughout Algorithm~\ref{algo:maxPopSC2LQ}, at most, one level copy of a vertex $a$ can be matched to $b$. Thus, we say $(a, b) \in M$ instead of $(a^x, b) \in M$ whenever the context is clear.
The next lemma states important properties of matching~$M$. 
\begin{lemma}\label{lem:merged}
Let $a\in\AA$, $b\in\BB$, and $(a,b)\in E\setminus M$. Then the following properties hold:
\begin{enumerate}
    \item\label{lem:defStudQminus1}
If $|M(a)|>q^-(a)$, then the level achieved by $a$ during the course of Algorithm~\ref{algo:maxPopSC2LQ} is at most $\T+1$.

\item\label{lem:defbQm1}
Let $\Tilde{a}\in\AA$ be such that 
$(\Tilde{a}^x,b)\in M$ for some $x<\T$, then $|M(b)|\le q^-(b)$.  

\item\label{lem:undersA}
If $|M(a)|<q^+(a)$, then $|M(b)|=q^+(b)$ and for each $\Tilde{a}^x\in M(b)$ we have $x\ge \T+1$.
\item\label{lem:dfStud}
If $|M(a)|<q^-(a)$, then $|M(b)|=q^+(b)$ and for each $\Tilde{a}^x\in M(b)$ we have $x=\S+\T+1$.

\item\label{lem:noedg}
If vertex $a$ reaches \emph{a} level $x>1$, then for each $\Tilde{a}^y\in M(b)$ we have $y\ge x-1$.
\end{enumerate}
\end{lemma}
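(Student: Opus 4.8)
\medskip
\noindent\textbf{Proof plan.}

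The plan is to isolate a few invariants that Algorithm~\ref{algo:maxPopSC2LQ} maintains throughout its run, and then to read the five statements off them. Two of these concern the $\BB$ side and are the analogues of the usual Gale--Shapley facts. (a) \emph{Monotonicity of $|M(b)|$}: for every $b\in\BB$, $|M(b)|$ never decreases during the algorithm, and it never exceeds $q^+(b)$. This is read off \texttt{DecideAccRej}: a refresh (replacing $a^x$ by $a^\ell$ with $x<\ell$) and a swap (evicting $b$'s least preferred partner and inserting a preferred newcomer) leave $|M(b)|$ unchanged, while an ordinary acceptance raises it by one and happens only when $|M(b)|$ is below the capacity currently in force, which is at most $q^+(b)$. (b) \emph{Improvement}: once $b$ declines a copy $a^\ell$ that has proposed to it -- by rejecting the proposal, or by accepting and later evicting it -- every vertex $b$ holds from that instant on is preferred by $b$ to $a^\ell$, hence, in the level-then-preference order, at level $\ge\ell$; indeed a partner enters $M(b)$ after that instant only by beating $b$'s current least preferred partner, whose level, by (a), never drops below $\ell$.

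The third invariant links $b$'s size to the levels it holds. (c) \emph{Capacity dichotomy}: $b$ acquires a partner of level $<\T$ only by accepting a level-$<\T$ proposal -- which is processed with capacity $q^-(b)$, hence requires $|M(b)|<q^-(b)$ -- or by a low-level refresh; so, by (a), from the first moment $|M(b)|$ exceeds $q^-(b)$ onward $b$ holds no partner of level $<\T$. In particular, if the output has a partner of $b$ at level $<\T$ then $|M(b)|\le q^-(b)$; and whenever $b$ declines a copy $a^\ell$ with $\ell\ge\T$, all its retained partners are then at level $\ge\ell\ge\T$, so its capacity for that proposal was $q^+(b)$, giving $|M(b)|=q^+(b)$ at that moment and, by (a), ever after. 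The last invariant concerns the $\AA$ side. (d) \emph{Level raising}: line~\ref{alg:lq} together with the queue management of Algorithm~\ref{algo:maxPopSC2LQ} gives that (i) once $a$ is at a level $\ge\T+2$ -- where $c(a)=q^-(a)$ -- one always has $|M(a)|\le q^-(a)$, since $a$ arrived there with $|M(a)|<q^-(a)$ and $|M(a)|$ can rise only out of such a state while evictions and refreshes never carry it past $q^-(a)$; (ii) an $a$ whose output size is $<q^+(a)$ must have reached level $\T+1$ and, on its last visit to that level, proposed to all of $\prefa$ (otherwise it would be saturated there, contradicting $|M(a)|<q^+(a)$, or would have to advance); and (iii) an $a$ whose output size is $<q^-(a)$ must have reached the top level $\S+\T+1$ and proposed there to all of $\prefa$.

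Given (a)--(d), I would argue the statements in turn. Statement~\ref{lem:defStudQminus1} is immediate from (d)(i): if $|M(a)|>q^-(a)$ at the end then $a$ never reached level $\T+2$. For Statement~\ref{lem:defbQm1}: if $(\Tilde{a}^x,b)\in M$ with $x<\T$ in the output then $|M(b)|$ never exceeded $q^-(b)$ (else (c) leaves $b$ with no level-$<\T$ partner at the end), so $|M(b)|\le q^-(b)$. For Statement~\ref{lem:noedg}: since $a$ reaches level $x>1$ it exhausted its list at level $x-1$, which is $\prefa$ when $x-1\ge\T$ and $\preflqa$ otherwise; hence either $a^{x-1}$ proposed to $b$ -- this is the case whenever $x-1\ge\T$, and also when $x-1<\T$ and $b$ is an lq-vertex -- and then, since $(a,b)\notin M$, $b$ eventually declined that proposal (possibly after refreshes that only raised its level), so (b) puts every $\Tilde{a}^y\in M(b)$ at level $\ge x-1$; or $x-1<\T$ and $b$ is not an lq-vertex, in which case $q^-(b)=0$ forces $b$ to have accepted nobody at a level below $\T$, so every $\Tilde{a}^y\in M(b)$ has $y\ge\T>x-1$. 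For Statement~\ref{lem:undersA}: by (d)(ii), $|M(a)|<q^+(a)$ at the end forces $a^{\T+1}$ to have proposed to all of $\prefa$, in particular to $b$; as $(a,b)\notin M$, $b$ declined $a^{\T+1}$, so (c) with $\ell=\T+1\ge\T$ yields $|M(b)|=q^+(b)$ while (b) places every $\Tilde{a}^x\in M(b)$ at level $\ge\T+1$. Statement~\ref{lem:dfStud} is the same argument at the highest level: by (d)(iii), $a^{\S+\T+1}$ proposed to every neighbour, including $b$; $b$ declines $a^{\S+\T+1}$, so (c) gives $|M(b)|=q^+(b)$ and (b) forces every $\Tilde{a}^x\in M(b)$ to level $\ge\S+\T+1$, i.e. exactly $\S+\T+1$ since no higher level exists.

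The step I expect to be the main obstacle is making the improvement property (b) and the capacity dichotomy (c) airtight, because two features absent from one-to-one Gale--Shapley disturb the usual bookkeeping: $\AA$-side vertices that are evicted get re-queued and propose again, so $|M(a)|$ is not monotone and a vertex may carry stale low-level copies at some neighbours while already sitting at a higher level; and $\BB$-side capacities jump from $q^-(b)$ to $q^+(b)$ the moment $b$ sheds its last partner of level $<\T$. One must check that these transitions never let $b$'s held set deteriorate, and that is exactly where the monotonicity of $|M(b)|$ from (a) is used -- it is what prevents $b$ from re-growing a partner of level $<\T$ once $|M(b)|$ has passed $q^-(b)$, hence what lets $b$'s capacity, and so its fullness, stabilise and the improvement argument survive the capacity jump. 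The remaining work -- turning ``$a$ reaches level $\ell$'' into ``$a$ proposed to its whole current list at level $\ell-1$'' and then invoking (b) -- is routine.
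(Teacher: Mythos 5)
Your proposal is correct and follows essentially the same route as the paper's proof: each item is reduced to the facts that an under-filled (or deficient) vertex $a$ must have exhausted its list at level $\T+1$ (resp.\ $\S+\T+1$), that a declined proposal forces $b$ to be full and to hold only partners at least as high as the declined copy, and that level-$<\T$ partners coexist only with capacity $q^-(b)$. Your packaging of these as explicit invariants (monotonicity of $|M(b)|$, improvement after a decline, capacity dichotomy) and your case split in item~5 on whether $b$ is an \emph{lq}-vertex make explicit some steps the paper treats implicitly, but the underlying argument is the same.
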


\begin{proof}
\begin{itemize}
    \item \textbf{Proof of~\ref{lem:defStudQminus1}:} Note that a vertex $a\in\AA$ is allowed to increase its level above $\T+1$ only if $a$ remains deficient after $a^{\T+1}$ exhausts $\prefa$ (Line~\ref{alg:lq} and~\ref{alg:lq1} of Algorithm~\ref{algo:maxPopSC2LQ}). Once $a$ is at level above $\T+1$ the capacity of $a$, $c(a)$, is set to $q^-(a)$ (Line~\ref{alg:LQstudents}). Thus, if any $a\in\AA$ is at level above $\T+1$, then $|M(a)|\le q^-(a)$. This implies if $|M(a)|>q^-(a)$ then the level achieved by $a$ is at most $\T+1$.
    
    \item \textbf{Proof of~\ref{lem:defbQm1}:} For the sake of contradiction, let us assume that $(\Tilde{a}^x,b)\in M$ for some $x<\T$ but $|M(b)|> q^-(b)$. This implies  that the procedure \texttt{DecideAccRej} is invoked with capacity of $b$ equal to $q^+(b)$ where $q^+(b)>q^-(b)$. Note that the procedure \texttt{DecideAccRej} can be invoked with capacity of $b$ equal to $q^+(b)$ at Line~\ref{alg:22} or at Line~\ref{alg:24} in the Algorithm~\ref{algo:maxPopSC2LQ}. Let us consider the first invocation of  \texttt{DecideAccRej} where the capacity of $b$ is set to $q^+(b)$. We remark that the first invocation cannot happen at Line~\ref{alg:24} because Line~\ref{alg:24} requires $|M(b)|>q^-(b)$. So, let us assume that the first invocation is at Line~\ref{alg:22}. Note that at Line~\ref{alg:22} every vertex in $M(b)$ is at level at least $\T$, and after which during the course of the algorithm, $b$ accepts proposals only from vertices in $\AA$ which are at level at least $\T$. This contradicts the fact that $\Tilde{a}^x$ for $x<\T$ is in $M(b)$ at the end of the algorithm.

    \item \textbf{Proof of~\ref{lem:undersA}:} Let us assume that $a$ remains under-subscribed and there exists a neighbour $b$ not matched to $a$. The fact that $a$ remains under-subscribed implies that $a^{\T+1}$ has exhausted $\prefa$. That is, $a^{\T+1}$ proposed to $b$. Thus, if $b$ is under-subscribed, then $b$ must have accepted the proposal from $a^{\T+1}$. It is given that $(a,b)\notin M$ thus $|M(b)|=q^+(b)$. Furthermore, no vertex $\Tilde{a}$ in $M(b)$ can be at level $x<\T+1$ because otherwise $b$ would have rejected one of the vertices in $M(b)$ to accept the proposal of $a^{\T+1}$.  
    
    \item \textbf{Proof of~\ref{lem:dfStud}:} This proof is very similar to the proof of~\ref{lem:undersA} above.   The fact that $|M(a)|<q^-(a)$ implies that $a^{\S+\T+1}$ has exhausted $\prefa$ during the course of Algorithm~\ref{algo:maxPopSC2LQ}. If $b$ is under-subscribed, then it must have accepted the proposal from $a^{\S+\T+1}$. This implies that $(a,b)\in M$, a contradiction.  Thus, $|M(b)|=q^+(b)$. Now, let us assume that there exists $\Tilde{a}\in \AA$ such that $(\Tilde{a}^x,b)\in M$ for some $x<\S+\T+1$. But then $b$ would have rejected $\Tilde{a}^x$ to accept the proposal of $a^{\S+\T+1}$ as it prefers a higher level vertex to a vertex at lower level. This implies that $(a,b)\in M$, a contradiction.
    
    \item \textbf{Proof of~\ref{lem:noedg}:} Recall that a vertex $b\in\BB$ always prefers $a$ over $\Tilde{a}$ if $a$ is at higher level than that of $\Tilde{a}$. Let us assume for the sake of contradiction that there exists $\Tilde{a}\in \AA$ such that $(\Tilde{a}^y,b)\in M$ for $y<x-1$. The fact that $(a,b)\in E$ and $a$ achieves the level $x$ implies that $a$ failed to fulfil its capacity after $a^{x-1}$ exhausted its preference list $\prefa$ or $\preflqa$ as appropriate.  Note that if $b$ receives a proposal from a vertex $\Tilde{a}\in\AA$ at levels below $x-1$ then $b$ is also available to receive proposals from vertices in $\AA$ at levels $\ge x-1$. This is because when a vertex  in $\AA$ transitions to higher level, it proposes to possibly a superset of vertices that it proposes to in the lower level (recall that $\prefa$ is a superset of $\preflqa$). Furthermore, if a vertex in $\BB$ receives a proposal from some vertex in $\AA$ at a level $z$ then
    it receives a proposal from all vertices proposing in $\AA$ at level $z$.
    Since $b$ is matched to a vertex at level $y<x-1$, it must be the case that $b$ has received a proposal from $a^{x-1}$ and it accepted this proposal. Also, $b$ cannot reject $a^{x-1}$ till $(\Tilde{a}^y,b)\in M$ for $y<x-1$. We know that $\Tilde{a}^y\in M(b)$.  Thus, $(a,b)\in M$ and we get a contradiction to the fact that $(a,b)\notin M$. 
\end{itemize}
This completes the proofs of all parts of Lemma~\ref{lem:merged}.
\end{proof}

Lemma~\ref{lem:merged}(\ref{lem:undersA}) implies that 
for any edge $(a,b)\in E\setminus M$ either $a$ or $b$ is fully-subscribed. Thus we have following claim.

\begin{cl}\label{cl:fully-subscribed}
The matching output by Algorithm~\ref{algo:maxPopSC2LQ} is a maximal matching.
\end{cl}

\section{Cloned graph and criticality of $M$} 
\label{sec:criticality}
In this section, our goal is to prove that the matching $M$ computed by Algorithm~\ref{algo:maxPopSC2LQ} is critical. 
In order to establish criticality as well as popularity (in the next section), we construct a {\em cloned} graph $G_M$ inspired by the similar construction in~\cite{brandl2019two}.

\subsection{Construction of the cloned graph $G_M$}\label{sec:cloneG}
The cloned graph is constructed using the matching $M$ computed by Algorithm~\ref{algo:maxPopSC2LQ} and allows us to work with a one-to-one matching $M^*$ corresponding to $M$. We illustrate the construction using an example. For the instance $G$ in Figure~\ref{subfig:exCrit},  Algorithm~\ref{algo:maxPopSC2LQ} outputs the matching $M=\{(a_1,b_1),(a_2,b_2),(a_3,b_2)\}$    (see Table~\ref{tab:propSeq} for an execution sequence). Figure~\ref{fig:Clonedgraph} shows the cloned graph $G_M$.

\vspace{0.1in}

\noindent{\textbf{Vertex set of $G_M:$ }} The vertex set of the graph $G_M$ consists of $(\AA' \cup \BB' \cup \tL\cup \dL)$. 
\begin{enumerate}
    \item The sets $\AA'\cup\BB'$ contain $q^+(v)$ many \emph{clones} corresponding to each vertex $v\in \AA\cup\BB$. That is, 
\begin{eqnarray*}
\AA' = \{ \ a_j :  a \in \AA \ \mbox{and } 1 \le j \le q^+(a)\} \hspace{0.5in}
\BB' = \{ \ b_j :  b \in \BB \  \mbox{and } 1 \le j \le q^+(b)\}\\
\end{eqnarray*}
Corresponding to the vertex $a_1$ in Figure~\ref{subfig:exCrit}, the cloned graph $G_M$ in Figure~\ref{fig:Clonedgraph} contains $q^+(a_1)=2$ clones, namely, $a_{11}$ and $a_{12}$. Similarly,  for the vertex $b_1$ it contains one clone $b_{11}$.

\item The set $ \tL = \tLA \cup \tLB$ denotes the set of last-resorts. For each vertex $v \in \AA\cup \BB$ we introduce $f(v)=q^+(v) - q^-(v)$ many last-resorts. These last-resorts are specific to $v$ and will be connected to only a subset of clones of $v$. That is, 
\begin{eqnarray*}
\tLA = \{ \ \ell_{a_j}  \ :    a \in \AA \ \mbox { and } 1 \le j \le f(a)\} \hspace{0.5in}
\tLB = \{ \ \ell_{b_j}  \ :   b \in  \BB \  \mbox{and } 1 \le j \le f(b)\}
\end{eqnarray*}
Corresponding to the vertex $b_2$ in Figure~\ref{subfig:exCrit},  the cloned graph $G_M$ in Figure~\ref{fig:Clonedgraph} contains $q^+(b_2)-q^-(b_2)=1$ last-resort, namely $\ell_{b_2}$. Note that a vertex with its lower quota equal to the upper quota does not have any last-resorts corresponding to it -- see for example $a_2$ and $a_3$. 

\item The set $\dL=\dLA\cup\dLB$ denotes the set of dummy vertices where  $|\dLA|=\Dfa{M}$ and $|\dLB|=\Dfb{M}$. Note that $\dLA$ (or $\dLB$) is non-empty if and only if there exists a deficient vertex $v\in\AA$ (or $v\in \BB$ respectively) in the matching $M$. For example, $G_M$ in Figure~\ref{fig:Clonedgraph} has $\dLA=\{d_a\}$ and $\dLB=\emptyset$.
\end{enumerate}

Note that the bipartition of the graph $G_M$ is $(\AA'\cup\tLB\cup\dLB)\cup (\BB'\cup\tLA\cup\dLA)$.
The clones, last-resorts and dummy vertices allow us to convert the many-to-many matching $M$ to an $(\AA' \cup \BB'\cup\dL)$-perfect one-to-one matching~$M^*$. Now we define the edge set of the graph $G_M$.

\begin{figure}
\centering
\begin{subfigure}{.5\textwidth}
  \centering
  \scalebox{0.75}{\begin{tikzpicture}[thick,
  lnode/.style={draw=white,fill=white},
  fsnode/.style={draw,circle,fill=black,scale=0.5},
  every fit/.style={ellipse,draw,inner sep=-2pt,text width=0.5cm},
  -,shorten >= 3pt,shorten <= 3pt
]

  \node[fsnode] (a11) at (0,3) {};
  \node[fsnode] (a12) at (0,2.2) {};
  \node[fsnode] (a21) at (0,1.4) {};
  \node[fsnode] (a22) at (0,0.6) {};
  \node[fsnode] (a31) at (0,-0.2) {};
  \node[fsnode] (b11) at (2.5,2.5) {};
  \node[fsnode] (b21) at (2.5,1.4) {};
  \node[fsnode] (b22) at (2.5,0.3) {};
  
  \node at (-0.8,3) {$a_{11}$};
\node at (-0.8,2.2) {$a_{12}$};
\node at (-0.8,1.25) {$a_{21}$};
\node at (-0.8,0.6) {$a_{22}$};
\node at (-0.8,-0.2) {$a_{31}$};
\node at (3.2,2.72) {$b_{11}$};
\node at (3.2,1.5) {$b_{21}$};
\node at (3.2,0.2) {$b_{22}$};

  \node[fsnode] (la1) at (-2.5,3.3) {};
  \node[fsnode] (lb1) at (5,2.5) {};
  \node[fsnode] (lb2) at (5,0.9) {};
  \node at (-3,3.3) {$\ell_{a_{1}}$};
  \node at (5.5,2.5) {$\ell_{b_{1}}$};
  \node at (5.5,0.9) {$\ell_{b_{2}}$};
  
   \node[fsnode] (da1) at (-2.5,1.4) {};
    \node at (-3,1.4) {$d_{a}$};

\node [fit=(a11) (a31),label=above:$\AA'$] {};
\node [fit=(b11) (b22),label=above:$\BB'$] {};
\draw (-2.5, 3.3) ellipse (0.2cm and 0.5cm);
\node at (-2.5, 4) {$\tLA$};

\draw (-2.5, 1.4) ellipse (0.2cm and 0.5cm);
\node at (-2.5, 2.1) {$\dLA$};

\node [fit=(lb1) (lb2),label=above:$\tLB$] {};

\draw[ultra thick, blue](a11) -- (b11);
\draw[ultra thin](a11) -- (b21);
\draw[ultra thin](a11) -- (b22);
\draw[ultra thin](a12) -- (b21);
\draw[ultra thin](a12) -- (b22);
\draw[ultra thin](a21) -- (b11);
\draw[ultra thick, blue](a21) -- (b21);
\draw[ultra thin](a22) -- (b11);
\draw[ultra thick, blue](a31) -- (b22);

\draw[ultra thick, blue](la1) -- (a12);
\draw[ultra thin](b11) -- (lb1);
\draw[ultra thin](b21) -- (lb2);
\draw[ultra thin](b22) -- (lb2);
\draw[ultra thin](da1) -- (a11);
\draw[ultra thin](da1) -- (a12);
\draw[ultra thin](da1) -- (a21);
\draw[ultra thick,blue](da1) -- (a22);
\draw[ultra thin](da1) -- (a31);

\end{tikzpicture}}
  \caption{}
  \label{fig:Clonedgraph}
\end{subfigure}%
\begin{subfigure}{.5\textwidth}
  \centering
  \scalebox{0.75}{\begin{tikzpicture}[thick,
  lnode/.style={draw=white,fill=white},
  fsnode/.style={draw,circle,fill=black,scale=0.5},
  every fit/.style={ellipse,draw,inner sep=-2pt,text width=0.5cm},
  -,shorten >= 3pt,shorten <= 3pt
]

  \node[fsnode] (a11) at (0,3) {};
  \node[fsnode] (a12) at (0,2.2) {};
  \node[fsnode] (a21) at (0,1.4) {};
  \node[fsnode] (a22) at (0,0.6) {};
  \node[fsnode] (a31) at (0,-0.2) {};
  \node[fsnode] (b11) at (2.5,2.5) {};
  \node[fsnode] (b21) at (2.5,1.4) {};
  \node[fsnode] (b22) at (2.5,0.3) {};
  
  \node at (-0.8,3) {$a_{11}$};
\node at (-0.8,2.2) {$a_{12}$};
\node at (-0.8,1.25) {$a_{21}$};
\node at (-0.8,0.6) {$a_{22}$};
\node at (-0.8,-0.2) {$a_{31}$};
\node at (3.2,2.72) {$b_{11}$};
\node at (3.2,1.5) {$b_{21}$};
\node at (3.2,0.2) {$b_{22}$};

  \node[fsnode] (la1) at (-2.5,3.3) {};
  \node[fsnode] (lb1) at (5,2.5) {};
  \node[fsnode] (lb2) at (5,0.9) {};
  \node at (-3,3.3) {$\ell_{a_{1}}$};
  \node at (5.5,2.5) {$\ell_{b_{1}}$};
  \node at (5.5,0.9) {$\ell_{b_{2}}$};
  
   \node[fsnode] (da1) at (-2.5,1.4) {};
    \node at (-3,1.4) {$d_{a}$};

\node [fit=(a11) (a31),label=above:$\AA'$] {};
\node [fit=(b11) (b22),label=above:$\BB'$] {};
\draw (-2.5, 3.3) ellipse (0.2cm and 0.5cm);
\node at (-2.5, 4) {$\tLA$};

\draw (-2.5, 1.4) ellipse (0.2cm and 0.5cm);
\node at (-2.5, 2.1) {$\dLA$};

\node [fit=(lb1) (lb2),label=above:$\tLB$] {};
\draw[ultra thin](a11) -- (b11);
\draw[ultra thin](a11) -- (b21);
\draw[ultra thick, red](a11) -- (b22);
\draw[ultra thin](a12) -- (b21);
\draw[ultra thin](a12) -- (b22);
\draw[ultra thin](a21) -- (b11);
\draw[ultra thick, red](a21) -- (b21);
\draw[ultra thick, red](a22) -- (b11);
\draw[ultra thin](a31) -- (b22);

\draw[ultra thick, red](la1) -- (a12);
\draw[ultra thin](b11) -- (lb1);
\draw[ultra thin](b21) -- (lb2);
\draw[ultra thin](b22) -- (lb2);
\draw[ultra thin](da1) -- (a11);
\draw[ultra thin](da1) -- (a12);
\draw[ultra thin](da1) -- (a21);
\draw[ultra thin](da1) -- (a22);
\draw[ultra thick, red](da1) -- (a31);

\end{tikzpicture}}
  \caption{ }
  \label{subfig:critPerf}
\end{subfigure}
\caption{ (i) The cloned graph $G_{M}$ corresponding to the critical matching $M=\{(a_1,b_1),(a_2,b_2),(a_3,b_2)\}$ for the graph shown in Figure~\ref{fig:exCritical}. Note that $f(v)=0$ for $v\in\{a_2,a_3\}$, $f(v)=1$ for $v\in\{a_1,b_1,b_2\}$, $\Dfa{M}=1$ and $\Dfb{M}=0$. Also, note that $|M(a_2)|<q^-(a_2)$, $|M(v)|=q^-(v)=1$ for $v\in\{a_1,a_3\}$, and $|M(v)|>q^-(v)$ for $v\in\{b_1,b_2\}$. Bold blue edges denote the $(\AA'\cup\BB'\cup \dL)$-perfect matching $M^*$. (ii) Bold red edges represent the $(\AA'\cup\BB')$-perfect matching $N^*$ for a critical matching $N=\{(a_1,b_2),(a_2,b_1),(a_2,b_2)\}$.
}
\end{figure}
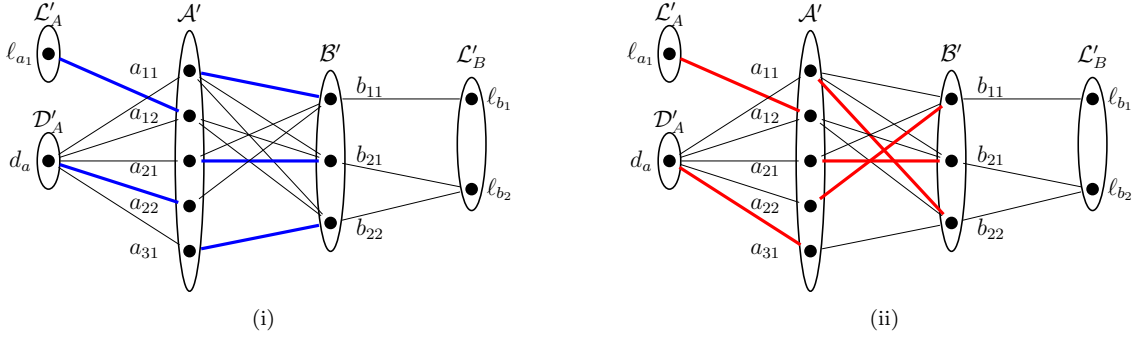

\vspace{0.1in}

\noindent {\bf The edge set $E'$:} The edge set $E'= M^* \cup E'_U$ where $M^*$ denote the set of matched edges, $E'_U$ denote the set of unmatched edges and are defined as follows --
\begin{enumerate}
    \item {\textbf{The matching $M^*$:}} Given the many-to-many matching $M$, we construct a one-to-one matching $M^*$ in $G_M$ as follows. 
    \begin{itemize}
        \item For every edge $(a, b) \in M$ we select an unselected clone of $a$, say $a_i$, and an unselected clone of $b$, say $b_j$, and add the edge $(a_i, b_j)$ to $M^*$. For example, corresponding to the edge $(a_2,b_2)\in M$ in Figure~\ref{subfig:exCrit},  $M^*$ in Figure~\ref{fig:Clonedgraph} contains $(a_{21},b_{21})$.
        \item If a vertex $v\in \AA\cup\BB$ is deficient in $M$ then let $v_j$ be one of its $q^-(v)-|M(v)|$ many unmatched clones.  We select a unique unmatched dummy vertex $d_k\in\dL$ and add the edge $(v_j,d_k)$ to $M^*$. Such an unmatched dummy vertex must exist because we have $\Dfa{M}=|\dLA|$ and $\Dfb{M}=|\dLB|$. For example, since $a_2$ remains deficient in $M$, the matching $M^*$ in Figure~\ref{fig:Clonedgraph} contains  the edge $(a_{22},d_{a})$.
        \item For all other ($q^+(v)-\max\{q^-(v),|M(v)|\}$) unmatched clones of a vertex $v\in\AA\cup\BB$ in $M$ let $v_j$ be one of these unmatched clones. We select a unique last-resort $\ell_{v_j}$ and add the edge $(v_j, \ell_{v_j})$ to $M^*$. For example, since $a_1$ remains under-subscribed in $M$, the matching  $M^*$ in Figure~\ref{fig:Clonedgraph} contains $(a_{12},\ell_{a_1})$ .
    \end{itemize}  
    Note that the number of dummy vertices is exactly equal to $\Df{M}$ and the number of last-resorts corresponding to a vertex $v$ is exactly equal to $f(v)=q^+(v)-q^-(v)$. We remark that each dummy vertex in $\dL$ must get matched to some clone $v_i\in\AA'\cup\BB'$ such that the corresponding vertex $v$ is deficient in $M$. This is because we first match the clones of each vertex up to its deficiency to dummy vertices, and the total number of dummy vertices is exactly equal to $\Df{M}$. Thus, $M^*$ is an $(\AA'\cup\BB'\cup \dL)$-perfect matching.

    \item {\textbf{The unmatched edges $E'_U$ in $G_M$:} \label{itm:unmatched}}  For every edge $(a, b) \in E \setminus M$, we have all edges corresponding to the complete bipartite graph between the clones of $a$ and clones of $b$. For example -- $(a_1,b_2)\notin M$. Thus, $G_M$ contains a complete bipartite graph between $\{a_{11},a_{12}\}$ and $\{b_{11},b_{12}\}$ (see Figure~\ref{fig:Clonedgraph}). We create a complete bipartite graph between all clones in $\AA'$ and all dummy vertices in $\dLA$. Similarly, we create a complete bipartite graph between all clones in $\BB'$ and all dummy vertices in $\dLB$. 
    
    We also have unmatched edges from clones of vertices to the last-resorts corresponding to that vertex. Here, we consider two cases  for a vertex $v\in\AA\cup\BB$ depending on whether
$|M(v)| > q^-(v)$ or $|M(v)| \le q^-(v)$. This construction is important for our dual feasible setting in Section~\ref{sec:dfeas}.
\begin{itemize} 
\item For a vertex $v$ where $|M(v)|  > q^-(v)$ we have a complete bipartite graph between all the $q^+(v)$ many clones of $v$ and all the $f(v)$ many last-resorts corresponding to $v$. Thus, we add to $E'_U$ edges of the form $(\ell_{v_k}, v_j)$ where $1 \le k \le f(v)$ and $1 \le j \le q^+(v)$. For example, $|M(b_2)|=2>q^-(b_2)=1$ and hence, $G_M$ contains complete bipartite graph between $\{b_{11},b_{12}\}$ and the corresponding last-resort $\ell_{b_2}$.

\item  For a vertex $v$ where $|M(v)| \le q^-(v)$, we have a complete bipartite graph between the set of clones of $v$ matched to last-resorts and all the $f(v)$ many last-resorts corresponding to $v$. For instance, $|M(a_1)|=1\le q^-(a_1)$ and the clone $a_{11}$ is not matched to last-resort. Thus, $a_{11}$ is \emph{not} connected to $\ell_{a_1}$.

Thus, we add to $E'_U$ edges of the form
$(\ell_{v_k}, v_j)$ where $1 \le k \le f(v)$ and $v_j$ is matched to a last-resort in the above construction. Note that if a vertex is deficient, then some of its clones are matched to dummy vertices, and such clones are not connected to last-resorts.
\end{itemize}
\end{enumerate}



We are yet to prove that $M$ is a critical matching and hence, for any critical matching $N$, we claim $\Df{N}\le \Df{M}=|\dL|$. Now, in Lemma~\ref{lem:crit2perf} we show that the graph $G_M$ allows us to map \emph{any} critical matching $N$ in $G$ to an $(\AA'\cup\BB')$-perfect one-to-one matching $N^*$ in $G_M$. We emphasize that, at this stage, we are not claiming that $N$ can be mapped to an $(\AA'\cup\BB'\cup \dL)$-perfect matching.  Once we prove that $M$ is critical (in Lemma~\ref{lem:CriticalM}) then it will be clear that every critical matching $N$ in $G$ can be indeed mapped to an $(\AA'\cup\BB'\cup \dL)$-perfect one-to-one matching $N^*$ in $G_M$ (as shown in Lemma~\ref{lem:corr}). Consider the critical matching $N=\{(a_1,b_2),(a_2,b_1),(a_2,b_2)\}$ shown in Figure~\ref{fig:exCritical}. A possible mapping of $N$ in $G_M$ is $N^*= \{(a_{11},b_{22}),(a_{12},\ell_{a_1}),(a_{21},b_{21}),(a_{22},b_{11}),(a_{31},d_a)\}$ (see Figure~\ref{subfig:critPerf}).

Before we prove Lemma~\ref{lem:crit2perf}, it will be useful to prove an important property (Claim~\ref{lem:noLessDef}) of a critical matching. We remark that this is a property of a critical matching and is independent of our algorithm.

\begin{cl}\label{lem:noLessDef}
Let $N_1$ be any critical matching and $N_2$ be any matching in $G$. Then $\Dfa{N_1}\le \Dfa{N_2}$ and $\Dfb{N_1}\le \Dfb{N_2}$. 
\end{cl}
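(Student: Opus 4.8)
The plan is to prove the two inequalities $\Dfa{N_1} \le \Dfa{N_2}$ and $\Dfb{N_1} \le \Dfb{N_2}$ simultaneously by a symmetric exchange argument. Suppose for contradiction that $N_1$ is critical but, say, $\Dfa{N_1} > \Dfa{N_2}$ for some matching $N_2$. Since $N_1$ is critical we have $\Df{N_1} \le \Df{N_2}$, i.e. $\Dfa{N_1} + \Dfb{N_1} \le \Dfa{N_2} + \Dfb{N_2}$; combined with $\Dfa{N_1} > \Dfa{N_2}$ this forces $\Dfb{N_1} < \Dfb{N_2}$, so $N_2$ is strictly better than $N_1$ on the $\BB$-side while strictly worse on the $\AA$-side. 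The idea is to construct from $N_1$ and $N_2$ a new matching $N_3$ that is at least as good as $N_1$ on the $\BB$-side and strictly better than $N_1$ on the $\AA$-side, yielding $\Df{N_3} < \Df{N_1}$ and contradicting criticality of $N_1$.

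To build $N_3$ I would work in the symmetric difference $N_1 \oplus N_2$, viewed as a multigraph on $\AA \cup \BB$ (each edge carrying its multiplicity in $N_1$ and in $N_2$, capped by quotas as in the matching definition). First I would decompose $N_1 \oplus N_2$ into alternating paths and cycles. Because every vertex $v$ has $|N_1(v)|, |N_2(v)| \le q^+(v)$, at each vertex the numbers of $N_1$-edges and $N_2$-edges in the symmetric difference differ by at most the "slack" of that vertex; standard Eulerian-type reasoning lets us write $N_1 \oplus N_2$ as a disjoint union of alternating paths (whose endpoints are vertices where the two matchings have unequal degree in the symmetric difference) and alternating cycles. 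Along each such component, consider the effect of switching from $N_1$-edges to $N_2$-edges on that component alone: a cycle or a path changes $|N_1(v)|$ by $0$ or $\pm 1$ at each internal vertex. Now I would selectively apply only those components that do not hurt the $\BB$-side: start from $N_1$ and, for each alternating component, flip it to its $N_2$-version only if doing so does not increase the total $\BB$-deficiency. Call the result $N_3$. By construction $\Dfb{N_3} \le \Dfb{N_1}$.

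The key point is then to argue $\Dfa{N_3} < \Dfa{N_1}$, which gives $\Df{N_3} < \Df{N_1}$ and the desired contradiction. Here is where the accounting matters: the components we \emph{did} flip did not help $\AA$ by hypothesis-of-worry, but the components we did \emph{not} flip are exactly those that would have increased $\Dfb{\cdot}$; one shows that on such a component, flipping would have strictly decreased $\Dfb{\cdot}$ by moving toward $N_2$ cannot be the reason we rejected it — so the rejected components are those where flipping strictly \emph{increases} $\BB$-deficiency, which by a parity/degree count along the alternating component means flipping strictly \emph{decreases} $\AA$-deficiency, and conversely the accepted components have net non-negative effect on $\AA$. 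Summing over all components, since $\Dfa{N_2} < \Dfa{N_1}$ the total $\AA$-deficiency of $N_2$ is strictly less, and the greedy selection preserves enough of this gain to get $\Dfa{N_3} < \Dfa{N_1}$. A cleaner route to the same end, which I would likely adopt to avoid case fatigue, is: let $N_3$ agree with $N_2$ on all components that strictly help the $\BB$-side and with $N_1$ elsewhere; then $\Dfb{N_3} < \Dfb{N_1}$ strictly if any such component exists (and if none exists then $\Dfb{N_2} \ge \Dfb{N_1}$, contradicting $\Dfb{N_1} < \Dfb{N_2}$... wait, that direction is fine, so some component does strictly help $\BB$), while the $\AA$-side cost is controlled component-by-component, giving $\Df{N_3} < \Df{N_1}$.

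The main obstacle I anticipate is making the component-wise deficiency accounting fully rigorous in the many-to-many setting: unlike the one-to-one case, a single vertex appears in several alternating components, and its deficiency $\max\{0, q^-(v) - |N(v)|\}$ is a nonlinear (truncated) function of its degree, so the effects of separate components on one vertex do not simply add. The resolution is to track, for each vertex $v$, the ordered list of "deficiency slots" and argue that flipping a component changes $|N(v)|$ by at most one, so its marginal effect on $\Df{\cdot}$ at $v$ is determined by whether $|N(v)|$ is currently below, at, or above $q^-(v)$; then process components in an order (e.g. those increasing $\BB$-degree first) so that the truncation behaves monotonically. I would isolate this as a small internal sublemma about "exchangeable alternating components and truncated degree functions," prove it once, and then the main claim follows by the symmetric exchange described above.
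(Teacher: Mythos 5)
Your overall plan (work in $N_1\oplus N_2$, decompose into alternating paths and cycles, and exchange to contradict criticality) is in the same family as the paper's argument, but your execution has a genuine gap, and the missing piece is precisely the fact that makes the claim true. The paper never needs a global greedy selection of components: it extracts a \emph{single} maximal alternating path $\rho$ starting at a vertex $u\in\AA$ with $|N_1(u)|<q^-(u)$ whose flip would decrease $\Dfa{N_1}$, and then argues by parity of $\rho$. If $\rho$ has odd length it is augmenting, so $\Df{N_1\oplus\rho}<\Df{N_1}$, contradicting criticality. If $\rho$ has even length, then---and this is the key structural observation absent from your write-up---its two endpoints lie on the \emph{same} side of the bipartition, so the degree-losing endpoint $v$ is again in $\AA$; a three-line case analysis on $|N_1(v)|$ versus $q^-(v)$ then contradicts either the criticality of $N_1$ or the assumption that flipping $\rho$ decreases $\Dfa{N_1}$. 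In other words, bipartiteness guarantees that no single alternating component can shift a unit of deficiency from the $\AA$-side to the $\BB$-side, which is exactly why $\Dfa{\cdot}$ and $\Dfb{\cdot}$ are separately minimized by a critical matching.

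Without that observation your accounting does not close. Your rule ``flip exactly those components that do not increase the $\BB$-deficiency'' does not by itself yield $\Dfa{N_3}<\Dfa{N_1}$: a priori, every component whose flip helps the $\AA$-side might be one that hurts the $\BB$-side and hence be rejected, and your attempted repair---that rejected components (those increasing the $\BB$-deficiency) must strictly decrease the $\AA$-deficiency---is false as stated: an odd path whose two end edges are both $N_1$-edges loses a unit of degree at an $\AA$-endpoint and at a $\BB$-endpoint, so flipping it can increase both $\Dfa{\cdot}$ and $\Dfb{\cdot}$. Moreover, the sublemma you defer (handling the non-additive, truncated deficiency over components sharing a vertex, via an ordering of flips) essentially \emph{is} the statement you need, namely the existence of one component whose individual flip strictly decreases $\Dfa{\cdot}$; the paper asserts exactly this from the standard decomposition and then finishes with the endpoint/parity case analysis. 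Once you add the parity fact about even alternating paths having both endpoints on one side, the multi-component construction of $N_3$ becomes unnecessary.
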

\begin{proof}
Here we will prove that $\Dfa{N_1}\le\Dfa{N_2}$. The proof for showing  $\Dfb{N_1}\le\Dfb{N_2}$ is symmetric. For the sake of contradiction, assume that $\Dfa{N_2}<\Dfa{N_1}$.  Consider the symmetric difference $N_1\oplus N_2$. Since we are dealing with many-to-many matchings, $N_1\oplus N_2$ is a collection of connected components. It is possible to decompose each component of $N_1\oplus N_2$ into maximal $N_1-N_2$ alternating paths and cycles (Section~2.2~~\cite{nasre2017popularity}).  Since $\Dfa{N_2}<\Dfa{N_1}$, there must exists an alternating path $\rho=\langle u,u'\ldots,v\rangle$ with $u\in\AA$ such that using $\rho$ we obtain a matching $N_1'=N_1\oplus \rho$ such that $\Dfa{N_1'}<\Dfa{N_1}$.  Note that $(u,u')\in N_2$, and $u$ has a property that $|N_1(u)|< q^-(u)$. Since $\rho$ is a maximal $N_1-N_2$ alternating path and it ends with an $N_2$ edge hence $|N_1(u)|<|N_2(u)|$. We consider the two cases below depending on the length of the path $\rho$.

If $\rho$ is of odd length then it is an augmenting path with respect to $N_1$. That is, the deficiency of $u$ in $N_1'$ is reduced by one, whereas the deficiencies of all other vertices, except $v$, remain same as in $N_1$. Note that the deficiency of $v$ does not increase in $N_1'$ because $\rho$ is an augmenting path for $N_1$. Thus, $N_1'$ satisfies the property that $\Df{N_1'}<\Df{N_1}$. This contradicts the criticality of~$N_1$. 

If $\rho$ is of even length then due to the maximality of $\rho$, the other endpoint $v\in\AA$ is such that $|N_2(v)|< |N_1(v)|$ and $\rho$ ends with an $N_1$-edge. Now we consider three different cases --
\begin{itemize}
    \item[(i)] $|N_2(v)|< q^-(v)$ and $|N_1(v)|\le q^-(v)$: In this case, the deficiency of vertex $v$ in $N_1'$ is increased by one and the deficiency of vertex $u$ in $N_1'$ is reduced by one, whereas the deficiencies of all other vertices remain same. Note that both the endpoints $u$ and $v$ of $\rho$ are in $\AA$. Thus, $\Dfa{N_1'}=\Dfa{N_1}$ which contradicts the fact that $\Dfa{N_1'}<\Dfa{N_1}$.
    \item[(ii)] $|N_2(v)|< q^-(v)$ and $|N_1(v)|> q^-(v)$: In this case, the deficiency of $v$ in $N_1'$ remains same as in $N_1$ because $v$ is surplus in $N_1$. Note that $|N_1(u)|< q^-(u)$ which implies that the deficiency of $u$ in $N_1'$ is reduced by one. The deficiencies of all other vertices remain same. Hence, $\Df{N_1'}<\Df{N_1}$ which contradicts the criticality of $N_1$.
    \item[(iii)] $|N_2(v)|\ge q^-(v)$: In this case, $v$ is surplus in $N_1$ because $|N_1(v)|> |N_2(v)|$. Thus, the same argument as in (ii) above works.
 \end{itemize}
 Thus, we conclude that such a path $\rho$ does not exist and hence $\Dfa{N_1}\le\Dfa{N_2}$.
 \end{proof}

\begin{lemma}\label{lem:crit2perf}
For every critical matching $N$ in $G$ there exists an $(\AA'\cup\BB')$-perfect one-to-one matching $N^*$ in $G_M$.
\end{lemma}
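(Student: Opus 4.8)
The plan is to take an arbitrary critical matching $N$ in $G$ and build an $(\AA'\cup\BB')$-perfect one-to-one matching $N^*$ in $G_M$ edge by edge, exactly mirroring the way $M^*$ was constructed from $M$ in Section~\ref{sec:cloneG}. First I would handle the ``real'' edges: for each $(a,b)\in N$, pick a fresh unused clone $a_i$ of $a$ and a fresh unused clone $b_j$ of $b$ and put $(a_i,b_j)$ into $N^*$; since $|N(a)|\le q^+(a)$ and $|N(b)|\le q^+(b)$ there are always enough clones, and this is a legal edge of $G_M$ because $(a,b)\in E$ guarantees the complete bipartite graph between the clones of $a$ and of $b$ is present in $E'_U$ (or, if $(a,b)\in M$, already in $M^*$). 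After this step, for each vertex $v$ exactly $q^+(v)-|N(v)|$ clones of $v$ remain unmatched, and we must account for all of them using last-resorts (for vertices with $|N(v)|\ge$ enough) and dummy vertices (for deficient $v$).

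Next I would saturate the remaining clones. For a vertex $v$ with $|N(v)|\ge q^-(v)$: $v$ is not deficient in $N$, so I match each of its $q^+(v)-|N(v)|\le d(v)$ leftover clones to a distinct last-resort $\ell_{v_k}$. This requires that every clone of $v$ is adjacent to enough last-resorts. When $|M(v)|>q^-(v)$, the construction gives a complete bipartite graph between all $q^+(v)$ clones of $v$ and all $d(v)$ last-resorts, so any leftover clone can be matched to any free last-resort. When $|M(v)|\le q^-(v)$, only the specific clones of $v$ that were matched to last-resorts in $M^*$ are adjacent to last-resorts; but in that case Claim~\ref{lem:noLessDef} (criticality of $M$) gives $|N(v)|\le |M(v)|\le q^-(v)$ for $v$ on the corresponding side when... — more carefully, I would use $\Dfa{M}\le\Dfa{N}$ and $\Dfb{M}\le\Dfb{N}$ to argue that the set of clones of $v$ available to be matched to last-resorts in $M^*$ is large enough to absorb all leftover clones of $v$ in $N^*$, because the number of leftover clones of $v$ under $N$ is at most the number under $M$ (since $|N(v)|\ge |M(v)|$ would be needed the other way — I need to check the direction). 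This bookkeeping is the crux and I expect it to be the main obstacle: one must verify, case by case on whether $|M(v)|\gtrless q^-(v)$ and whether $v$ is deficient in $M$ versus in $N$, that the adjacency structure of $E'_U$ always supplies enough last-resort / dummy partners for the clones left over by $N$.

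Then I would assign dummy vertices to the clones of the deficient vertices of $N$. For a vertex $v$ that is deficient in $N$, it has $q^-(v)-|N(v)|$ clones that cannot go to last-resorts (under the $|M(v)|\le q^-(v)$ construction) and must instead be matched into $\dL$. The complete bipartite graphs between $\AA'$ and $\dLA$, and between $\BB'$ and $\dLB$, make every such clone adjacent to every dummy vertex of the right side, so the only thing to check is that there are enough dummies: $|\dLA|=\Dfa{M}$ and $|\dLB|=\Dfb{M}$, and the total number of clones I need to route into $\dLA$ is exactly $\Dfa{N}$, which is $\ge\Dfa{M}=|\dLA|$ — so there would be too few dummies, contradiction! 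The resolution is that $N$ is \emph{critical}, hence $\Dfa{N}=\Dfa{M}$ and $\Dfb{N}=\Dfb{M}$ by Claim~\ref{lem:noLessDef}, so the counts match exactly and the dummy vertices are precisely consumed. This is exactly why the statement restricts to critical $N$, and I would make this point explicitly.

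Finally I would observe that every clone in $\AA'\cup\BB'$ has now received exactly one partner (a clone of the other side, a last-resort, or a dummy), and no clone, last-resort, or dummy is used twice, so $N^*$ is a valid one-to-one matching in $G_M$ that is $(\AA'\cup\BB')$-perfect — we do not need it to saturate the last-resorts or dummies, only the clones. A small remaining check is edge-legality in the boundary case $(a,b)\in M\cap N$, where the edge $(a_i,b_j)$ we want is in $M^*$ rather than $E'_U$, but $M^*\subseteq E'$ so that is fine; and the case $q^-(v)=q^+(v)$, where $d(v)=0$ and $v$ has no last-resorts, which is consistent because then $|N(v)|\le q^+(v)=q^-(v)$ forces $v$ to be fully subscribed or deficient, with deficient clones routed to dummies. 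Wrapping these cases up completes the proof.
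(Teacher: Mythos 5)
Your skeleton matches the paper's construction (true edges of $N$ onto clones, leftovers onto last-resorts and dummies, with Claim~\ref{lem:noLessDef} supplying the counting), but the step you yourself flag as ``the crux'' is left unresolved, and it is exactly where the paper's proof has its one essential idea that your proposal omits: when choosing clones for the edges of $N\setminus M$ (and when routing deficient clones to dummies), you must give \emph{priority to clones that are not adjacent to any last-resort}. If, as in your first step, you pick ``a fresh unused clone'' arbitrarily, the construction can genuinely fail: for a vertex $v$ with $|M(v)|\le q^-(v)$, only the $d(v)$ clones matched to last-resorts in $M^*$ are adjacent to last-resorts, so an unlucky choice can spend those clones on true edges of $N$ and leave up to $q^-(v)$ non-adjacent clones unmatched even though $v$ is not deficient in $N$; these can only go to dummy vertices, and summed over all vertices they may exceed $|\dLA|=\Dfa{M}$ (the dummies are a shared pool of size $\Dfa{M}$, not per-vertex). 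With the priority rule, the non-adjacent clones left over at $v$ number exactly $v$'s deficiency in $N$, the totals are $\Dfa{N}$ and $\Dfb{N}$, and Claim~\ref{lem:noLessDef} (applied with $N$ critical and $M$ an arbitrary matching) gives $\Dfa{N}\le\Dfa{M}=|\dLA|$ and $\Dfb{N}\le\Dfb{M}=|\dLB|$, so the dummies suffice and the remaining (last-resort-adjacent) clones are absorbed by the $d(v)$ last-resorts of $v$. This is the case analysis you postponed; without the priority rule it is not just unverified but false.

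Separately, your counting in the dummy step is turned around and quietly assumes what has not yet been proved. You write that the clones to be routed into $\dLA$ number $\Dfa{N}\ge\Dfa{M}$ and then ``resolve'' the shortfall by $\Dfa{N}=\Dfa{M}$; both the inequality $\Dfa{M}\le\Dfa{N}$ and the equality require $M$ to be critical, which at this point in the paper is unknown -- criticality of $M$ (Lemma~\ref{lem:CriticalM}) is proved \emph{later and uses} Lemma~\ref{lem:crit2perf}, so invoking it here would be circular. Fortunately it is also unnecessary: the only direction needed is $\Dfa{N}\le\Dfa{M}$ and $\Dfb{N}\le\Dfb{M}$, which follow from Claim~\ref{lem:noLessDef} using the criticality of $N$ alone, and the lemma only asks for an $(\AA'\cup\BB')$-perfect matching, so leftover dummies are harmless (exact consumption of $\dL$ is only needed later, in Lemma~\ref{lem:corr}, where $M$'s criticality is already available). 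Fixing the direction of this inequality and adding the clone-selection priority rule would turn your outline into the paper's proof.
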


\begin{proof}
For any critical matching $N$ in $G$, we construct an $(\AA' \cup \BB')$-perfect one-to-one matching $N^*$ in $G_M$ as follows.
\begin{itemize}
          \item[(i)] \textbf{Matching edges common to both $M$ and $N$:} For each edge $(a,b)\in M \cap N$, we add the edge $(a_i,b_j)$ to $N^*$ such that $(a_i,b_j)\in M^*$.
           \item[(ii)] \textbf{Matching edges in $N$ but not in $M$:} For every edge $(a,b)\in N\setminus M$, we find an unmatched clone $a_i$ of $a$ and $b_j$ of $b$, and add $(a_i,b_j)$ to $N^*$ as given below. Note that for an edge $(a,b)\notin M$, $G_M$ has a complete bipartite graph between all the clones of $a$ and all the clones of $b$. Also, there must exist some unmatched clone of $a$ and unmatched clone of $b$ because we have $q^+(a)$ and $q^+(b)$ many clones of $a$ and $b$ respectively. 
          \begin{itemize}
              \item Amongst all the unmatched clones of $a$ we pick a clone $a_i$ which is not adjacent to last-resorts, if such a clone exists. If all the unmatched clones of $a$ are adjacent to last-resorts, then we choose an  arbitrary unmatched clone~$a_{i}$.
              \item Amongst all the unmatched clones of $b$ we pick a clone $b_j$ which is not adjacent to last-resorts, if such a clone exists. If all the unmatched clones are adjacent to last-resorts, then we choose an arbitrary unmatched clone~$b_{j}$.
          \end{itemize}
          \item[(iii)] \textbf{Matching unmatched clones to dummy vertices:} Consider a vertex $v\in\AA\cup\BB$ such that $v$ is deficient in $N$. By the construction of the cloned graph $G_M$, if $v$ is not surplus in $M$  then $q^-(v)$ many clones are not connected to any last-resort. If $v$ is surplus in $M$ then \emph{all} clones of $v$ are connected to all the last-resorts corresponding to $v$. Thus, after steps (i) and (ii) above, we select all the unmatched clones of $v$ which are not adjacent to any last-resorts and match them to arbitrary but distinct dummy vertices in $\dL$. We add these corresponding edges to $N^*$.
          
          \item[(iv)] \textbf{Matching unmatched clones to last-resorts:} For any vertex $v_k\in \AA' \cup \BB'$ that is left unmatched after above steps, we select an arbitrary but distinct unmatched last-resort, say $\ell_{v_j}$, and add the edge $(v_k,\ell_{v_j})$ in~$N^*$.
        \end{itemize}
        
 It remains to show that $N^*$ is ($\AA'\cup\BB'$)-perfect. To do so, we will show that for every $v\in\AA\cup\BB$, all the clones of $v$ are matched in $N^*$. Recall $N$ is a critical matching in $G$ and thus by Claim~\ref{lem:noLessDef} $\Dfa{N}\le\Dfa{M}$ and $\Dfb{N}\le \Dfb{M}$. The fact that $|\dLA|=\Dfa{M}$ and $|\dLB|=\Dfb{M}$ implies that $\Dfa{N}\le |\dLA|$ and $\Dfb{N}\le |\dLB|$. By construction of $G_M$, all the dummy vertices in $\dLA$ are adjacent to all the clones in $\AA'$, and all the dummy vertices in $\dLB$ are adjacent to all the clones in $\BB'$.

Note that in step (ii) and step (iii) all the clones which are not adjacent to last-resorts are given priority to get matched to true vertices or to dummy vertices while constructing $N^*$. Further, we have enough dummy vertices ($|\dLA|\ge \Dfa{N}$ and $|\dLB|\ge\Dfb{N}$) as stated above. Thus, for a vertex $v$, at most $q^-(v)$ many clones which are not adjacent to last-resorts must get matched in $N^*$ either to some true clones (in $\AA'\cup\BB'$) or to some dummy vertices. For the vertex $v$ graph $G_M$ contains exactly $q^+(v)-q^-(v)$ many last-resorts and hence, all the copies of $v$ which remain unmatched after step (iii) must get matched to some of these last-resorts in step (iv). So, all the clones of $v$ are matched in $N^*$.   \end{proof}

Now we proceed to prove that $M$ is a critical matching. In order to do so it will be useful to partition the vertices of $G_M$ into subsets and establish properties about these partitions.

\begin{definition}[True Edge]\label{def:trueEd}
An edge $(a_i,b_j)$ in $G_M$ is called a true edge if none of the endpoints $a_i$ and $b_j$ is in $\tL\cup\dL$.
\end{definition}

\noindent \textbf{Partition of vertices:} 
Now we partition the vertex set $\AA'\cup\BB'\cup\tL\cup\dL$ as described next. The vertex set $\AA'\cup\tLB\cup\dLB$ is partitioned into $\AA'_0\cup\AA'_1\cup\ldots\cup\AA'_{\S+\T+1}$. Similarly, the vertex set $\BB'\cup\tLA\cup\dLA$ is partitioned into $\BB'_0\cup\BB'_1\cup\ldots\cup\BB'_{\S+\T+1}$. Since $M^*$ is an $(\AA'\cup\BB'\cup\dL)$-perfect matching, each clone in $\AA'$ is matched. A clone $a_i\in \AA'$ can be matched to a clone in $\BB'$, or a last-resort, or a dummy vertex. If a clone $a_i$ of the vertex $a\in\AA$ is matched to $b_j\in\BB'$ in $M^*$ then there exists an edge $(a^x,b)\in M$. 
The clone $a_i$ is assigned the partition $\AA'_x$ based on the level $x$ corresponding to the matched edge $(a^x,b)$.
\begin{itemize}
    \item \label{itm:part1} \textbf{Clones in $\AA'$ which are matched along true edges:} Let $a\in \AA, b\in \BB$ and $(a^x,b)\in M$ for some $0\le x\le \S+\T+1$ and let $(a_i,b_j)$ denote the corresponding edge in $M^*$. We add $a_i$ to $\AA'_x$ and $b_j$ to $\BB'_x$.

    \item \label{itm:part2}{\bf Clones in $\AA'$ which are matched to dummy vertices:} Consider $a\in \AA$ which is deficient in $M$. Such a vertex has $q^-(a)- |M(a)|$ many clones matched to dummy vertices in $M^*$. 
    We add all the $q^-(a)-|M(a)|$ many clones of each $a\in\AA$ that are matched to dummy vertices to $\AA'_{\S+\T+1}$ and their matched dummy vertices to $\BB'_{\S+\T+1}$. 
    
    \item \label{itm:part3}{\bf Clones in $\BB'$ which are matched to dummy vertices:} Consider $b\in \BB$ which is deficient in $M$. Such a vertex has $q^-(b)- |M(b)|$ many clones matched to dummy vertices in $M^*$. 
    We add all the $q^-(b)-|M(b)|$ many clones of each $b\in\BB'$ that are matched to dummy vertices to $\BB'_0$, and their matched dummy vertices to $\AA'_0$. 
    
    \item \label{itm:part4}{\bf Clones in $\AA'$ which are matched to last-resorts:} If $a\in \AA$ is under-subscribed but not deficient in $M$ then $q^+(a)- |M(a)|$ many clones of $a$ are matched to last-resorts in $M^*$. If $a\in \AA$ is deficient in $M$ then $q^+(a)-q^-(a)$ many clones are matched to last-resorts. This implies that for each $a\in\AA$, exactly $q^+(a)-\max\{q^-(a), |M(a)|\}$ many clones are matched to last-resorts. We add all these clones of $a$ that are matched to last-resorts to $\AA'_{\T+1}$ and their matched last-resorts to $\BB'_{\T+1}$.
    
    \item \label{itm:part5}\textbf{Clones in $\BB'$ which are matched to last-resorts:} Consider $b\in\BB$. As described above we have exactly $q^+(b)-\max\{q^-(b), |M(b)|\}$ many clones matched to last-resorts. We add all these clones of $b$ to $\BB'_{\T}$, and their matched last-resorts to $\AA'_{\T}$.
    
    \item \label{itm:part6} \textbf{Unmatched last-resorts:} We add all the remaining last-resorts $\ell_{a}\in \tLA$ to the partition set $\BB'_{\T+1}$, and all the remaining last-resorts $\ell_{b}\in \tLB$ to the partition set $\AA'_{\T}$.
\end{itemize}

It is convenient to visualize the partitions from top to bottom in decreasing order of the indices --  see Figure~\ref{fig:maxlevelgraph}. 
The edges of $M^*$ are horizontal in the figure.  
We state the properties of the clones and edges in $G_M$ with respect to the partition.

\begin{lemma}\label{pr:vertexCloneG} Let $a\in\AA$, $b\in\BB$ and  $u$ be any clone of $a$  and $v$ be any clone of $b$  then the following properties hold.
\begin{enumerate}
    \item \label{itm:partprop2} The partition $\bigcup_{x=\T+2}^{\S+\T+1}\AA'_x$ contains at most $q^-(a)$ many clones of $a$. (follows from the contrapositive statement of Lemma~\ref{lem:merged}(\ref{lem:defStudQminus1}))
    \item \label{itm:partprop3} The partition $\bigcup_{x=0}^{\T-1}\BB'_x$ contains at most $q^-(b)$ many clones of $b$. (follows from Lemma~\ref{lem:merged}(\ref{lem:defbQm1}))
    \item \label{itm:partprop4} The total number of clones of vertices in $\AA$ at levels $\T+2$ and higher are upper bounded by $\S$, that is,  $|(\bigcup_{x=\T+2}^{\S+\T+1}\AA'_x)\cap \AA'|\le \S$ (follows from~\ref{itm:partprop2} above). The total number of clones of vertices in $\BB$ at levels $\T-1$ and lower are upper bounded by $\T$, that is, $|(\bigcup_{x=0}^{\T-1}\BB'_x)\cap\BB'|\le \T$. (follows from~\ref{itm:partprop3} above)
    \item \label{itm:partprop7} If  $u\in \AA'_{\S+\T+1}$ and $M^*(u)\in \dLA$ then all the neighbours of $u$ in $G_M$ are present only in the partition $\BB'_{\S+\T+1}$. (follows from Lemma~\ref{lem:merged}(\ref{lem:dfStud}))
    \item \label{itm:partprop8} If $M^*(u)\in \tLA$ then all the neighbours of $u$ in $G_M$ are present in a partition $\BB'_{x}$ where $x\ge\T+1$. (follows from Lemma~\ref{lem:merged}(\ref{lem:undersA}))
\end{enumerate}
\end{lemma}
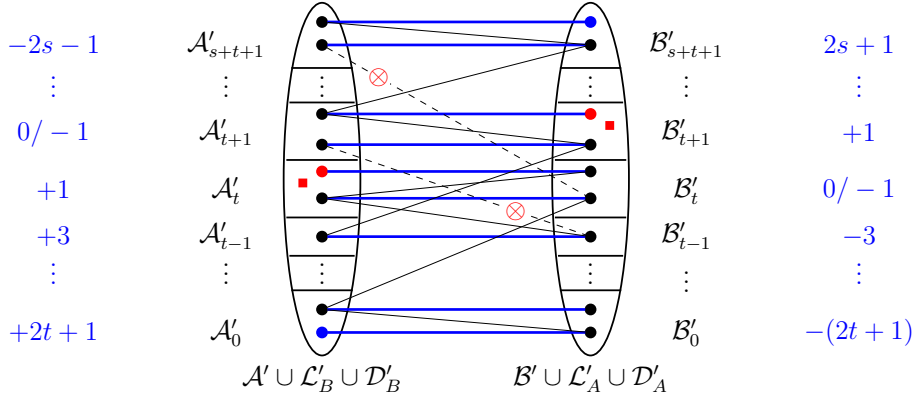
\begin{figure}
\begin{center}
    \scalebox{0.85}{	\begin{tikzpicture}[scale=0.6, thick,fsnode/.style={draw,circle,fill=black,scale=0.4}, lnode/.style={draw=red,circle,fill=red,scale=0.4}, snode/.style={draw=blue,circle,fill=blue,scale=0.4},tnode/.style={draw=red,fill=red,scale=0.4}]
	\draw (1,3) ellipse (1cm and 4.6cm);
	\node at (1,-2.2) {$\AA'\cup\tLB\cup\dLB$};
	\draw (0.2,5.9) -- (1.8,5.9);
	\node at (1,5.6) {\vdots};
	\draw (0.15,5) -- (1.85,5);
	\draw (0.07,3.5) -- (1.93,3.5);
	\draw (0.07,2) -- (1.93,2);
	\draw (0.15,1) -- (1.85,1);
	\node at (1,0.7) {\vdots};
	\draw (0.2,0.1) -- (1.8,0.1);
	\node at (-1.5,6.5) {$\AA'_{\S+\T+1}$};
	\node at (-1.5,5.6) {\vdots};
	\node at (-1.5,4.2) {$\AA'_{\T+1}$};
	\node at (-1.5,2.7) {$\AA'_{\T}$};
	\node at (-1.5,1.5) {$\AA'_{\T-1}$};
	\node at (-1.5,0.7) {\vdots};
	\node at (-1.5,-1) {$\AA'_{0}$};
	\node at (-6,6.5) {\clr{$-2\S-1$}};
	\node at (-6,5.6) {\clr{\vdots}};
	\node at (-6,4.2) {\clr{$0/-1$}};
	\node at (-6,2.7) {\clr{$+1$}};
	\node at (-6,1.5) {\clr{$+3$}};
	\node at (-6,0.7) {\clr{\vdots}};
	\node at (-6,-1) {\clr{$+2\T+1$}};
	\draw (8,3) ellipse (1cm and 4.6cm);
	\node at (8,-2.2) {$\BB'\cup\tLA\cup\dLA$};
	\draw (7.2,5.9) -- (8.8,5.9);
	\node at (8,5.6) {\vdots};
	\draw (7.15,5) -- (8.85,5);
	\draw (7.07,3.5) -- (8.93,3.5);
	\draw (7.07,2) -- (8.93,2);
	\draw (7.15,1) -- (8.85,1);
	\node at (8,0.7) {\vdots};
	\draw (7.2,0.1) -- (8.8,0.1);
	\node at (10.5,6.5) {$\BB'_{\S+\T+1}$};
	\node at (10.5,5.6) {\vdots};
	\node at (10.5,4.2) {$\BB'_{\T+1}$};
	\node at (10.5,2.7) {$\BB'_{\T}$};
	\node at (10.5,1.5) {$\BB'_{\T-1}$};
	\node at (10.5,0.5) {\vdots};
	\node at (10.5,-1) {$\BB'_0$};
	\node at (15,6.5) {\clr{$2\S+1$}};
	\node at (15,5.6) {\clr{\vdots}};
	\node at (15,4.2) {\clr{$+1$}};
	\node at (15,2.7) {\clr{$0/-1$}};
	\node at (15,1.5) {\clr{$-3$}};
	\node at (15,0.7) {\clr{\vdots}};
	\node at (15,-1) {\clr{$-(2\T+1)$}};
    \node[fsnode] (a1) at (1,7.1) {};
	\node[fsnode] (a2) at (1,6.5) {};
	\node[fsnode] (b1) at (8,6.5) {};
	\draw[very thick,blue] (a2) -- (b1);
	\node[fsnode] (b2) at (8,3.9) {};
	\node[fsnode] (a3) at (1,3.9) {};
	\draw[very thick,blue] (a3) -- (b2);
	\node[fsnode] (b3) at (8,2.5) {};
	\node[fsnode] (a4) at (1,2.5) {};
	\draw[very thick,blue] (a4) -- (b3);
	\node[fsnode] (a5) at (1,1.5) {};
	\node[fsnode] (b4) at (8,1.5) {};
	\draw[very thick,blue] (a5) -- (b4);
	\node[fsnode] (a6) at (1,-0.4) {};
	\node[fsnode] (b5) at (8,-0.4) {};
	\draw[very thick,blue] (a6) -- (b5);
	\node[fsnode] (b6) at (8,-1) {};
	\node[tnode] (lb1) at (8.5,4.4) {};
	\node[tnode] (la1) at (0.5,2.9) {};
	\node[lnode] (lb2) at (8,4.7) {};
	\node[lnode] (la2) at (1,3.2) {};
	
	\node[snode] (db1) at (8,7.1) {};
	\node[snode] (da1) at (1,-1) {};
	\node[fsnode] (a7) at (1,4.7) {};
	\draw[very thick,blue] (lb2) -- (a7);
	\node[fsnode] (b7) at (8,3.2) {};
	\draw[very thick,blue] (la2) -- (b7);
	\draw[very thick,blue] (db1) -- (a1);
	\draw[very thick,blue] (da1) -- (b6);
	\draw[ultra thin] (a1) -- (b1);
	\draw[ultra thin] (a4) -- (b4);
	\draw[ultra thin] (a5) -- (b2);
	\draw[ultra thin] (a6) -- (b3);
	\draw[ultra thin] (a6) -- (b6);

	\draw[ultra thin] (a7) -- (b2);
	\draw[ultra thin] (a7) -- (b1);
	\draw[ultra thin] (b7) -- (a4);

\draw[ultra thin, dashed] (a2) -- (b3)node[pos=0.2,fill=white,inner sep=0.1pt]{\textcolor{red}{$\otimes$}};
    
\draw[ultra thin, dashed] (a3) -- (b4)node[pos=0.73,fill=white,inner sep=0.1pt]{\textcolor{red}{$\otimes$}};

	\end{tikzpicture}}
\end{center}

\caption{The graph $G_M$ corresponding to $M$. Black circles represent clones in $\AA'\cup\BB'$, Blue circles represent dummy vertices in $\dLA\cup\dLB$ and red circles and squares represent last-resorts in $\tLA\cup\tLB$. Matched vertices are represented by circles and unmatched vertices are represented by squares. The blue horizontal lines represent matched edges in $M^*$. Solid black lines represent edges which are not matched in $M^*$. Dashed black lines marked with crossed red circles represent steep downward edges that are not present in $G_M$ (Lemma~\ref{pr:noedge}).
Integers in blue represent dual values and are relevant in Section~\ref{sec:pop}.
}
\label{fig:maxlevelgraph}
\end{figure}

\begin{lemma}\label{lem:noedgeLresC}
Let $a\in \AA$ have a clone $a_j$ in $G_M$.  If $a_j\in \AA'_x$ for $x>\T+1$ then  $a_j$ is not adjacent to any last-resorts in $G_M$.
\end{lemma}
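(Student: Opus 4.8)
The plan is to trace through the construction of the unmatched edges $E'_U$ in $G_M$ and show that a clone $a_j \in \AA'_x$ with $x > \T+1$ simply never gets connected to a last-resort. By the partition rules (items~\ref{itm:part1}--\ref{itm:part6}), the clones of $a$ in levels $x > \T+1$ are placed in $\AA'_x$ only through item~\ref{itm:part1}, i.e.\ they are matched along true edges $(a^x,b)\in M$ for $x > \T+1$; the clones matched to last-resorts are all put in $\AA'_{\T+1}$ (item~\ref{itm:part4}) and clones matched to dummy vertices are put in $\AA'_{\T+\S+1}$ (item~\ref{itm:part2}). So I first need to separate these two sub-cases: a clone in $\AA'_x$ for $\T+1 < x < \T+\S+1$ is matched to a true vertex, while a clone in $\AA'_{\T+\S+1}$ may be matched to a true vertex or to a dummy.

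The key step is the following. If $a$ ever reaches a level $x > \T+1$ during Algorithm~\ref{algo:maxPopSC2LQ}, then by Lemma~\ref{lem:merged}(\ref{lem:defStudQminus1}) (contrapositive), $|M(a)| \le q^-(a)$. Now I invoke the second bullet in the construction of $E'_U$ (item~\ref{itm:unmatched} of the cloned-graph construction): for a vertex $v$ with $|M(v)| \le q^-(v)$, the only clones of $v$ adjacent to last-resorts are those which are themselves matched to last-resorts in $M^*$; clones matched to true vertices or to dummy vertices are \emph{not} connected to last-resorts. Therefore, if $a_j \in \AA'_x$ for $x > \T+1$ and $a_j$ is matched to a true vertex (the only possibility when $x < \T+\S+1$, and one of the possibilities when $x = \T+\S+1$), then $a_j$ has no last-resort neighbour. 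The remaining case is $a_j \in \AA'_{\T+\S+1}$ matched to a dummy vertex: here Lemma~\ref{pr:vertexCloneG}(\ref{itm:partprop7}) already tells us all neighbours of $a_j$ lie in $\BB'_{\T+\S+1}$, and in particular (since last-resorts corresponding to $a$ are placed in $\BB'_{\T+1}$ by item~\ref{itm:part4}, never in $\BB'_{\T+\S+1}$) $a_j$ is adjacent to no last-resort. Alternatively, one can argue directly: the clones of $a$ matched to dummy vertices arise precisely because $a$ is deficient, and the construction of $E'_U$ explicitly says such clones are not connected to last-resorts.

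Assembling these pieces: in every case, a clone $a_j$ of $a$ lying in a partition $\AA'_x$ with $x > \T+1$ is either matched along a true edge or matched to a dummy, and in neither situation does the construction of $E'_U$ add any edge from $a_j$ to a last-resort. Hence $a_j$ is not adjacent to any last-resort in $G_M$, as claimed.

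I expect the only mild subtlety to be making the case analysis on $M^*(a_j)$ airtight — specifically, confirming that a clone placed in $\AA'_x$ for $\T+1 < x \le \T+\S+1$ cannot have been placed there via the last-resort rule (item~\ref{itm:part4} forces all last-resort-matched clones of $a$ into $\AA'_{\T+1}$), so that "matched to a true vertex or a dummy vertex" is genuinely exhaustive. Once that bookkeeping is pinned down, the result follows immediately from Lemma~\ref{lem:merged}(\ref{lem:defStudQminus1}) together with the $|M(v)| \le q^-(v)$ branch of the $E'_U$ construction.
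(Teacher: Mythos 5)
Your proposal is correct and follows essentially the same route as the paper's proof: you apply the contrapositive of Lemma~\ref{lem:merged}(\ref{lem:defStudQminus1}) to get $|M(a)|\le q^-(a)$, then use the $|M(v)|\le q^-(v)$ branch of the $E'_U$ construction together with the fact that last-resort-matched clones are placed in $\AA'_{\T+1}$. Your extra bookkeeping on the dummy-matched clones in $\AA'_{\T+\S+1}$ is a harmless elaboration of the same argument.
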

\begin{proof}
Since $a_j\in\AA_x'$ for $x>\T+1$ the corresponding vertex $a$ must have achieved level at least $\T+2$. This implies that $|M(a)|\le q^-(a)$ by Lemma~\ref{lem:merged}(\ref{lem:defStudQminus1}). Since $|M(a)|\le q^-(a)$, our construction ensures that only those clones  which are matched to last-resorts are connected to last-resorts. The clones which are matched to last-resorts are added to $\AA'_{\T+1}$ while partitioning. Thus, $a_j$ is not adjacent to any last-resorts in~$G_M$.
  \end{proof}

\begin{lemma}\label{lem:noedgeLresS}
Let $b\in\BB$ have a clone $b_j$ in $G_M$. If $b_j\in\BB'_x$ for $x<\T$ then $b_j$ is not adjacent to any last-resorts.
\end{lemma}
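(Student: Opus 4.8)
The plan is to mirror the argument of Lemma~\ref{lem:noedgeLresC}, but on the $\BB$-side and with the ``low level'' condition $x<\T$ playing the role that ``$x>\T+1$'' plays there. First I would observe that if $b_j\in\BB'_x$ for $x<\T$, then by the way the partition was defined, $b_j$ must have been placed in $\BB'_x$ through item~\ref{itm:part1} of the partition (the only other items that feed clones of $\BB$-vertices into low-index sets are item~\ref{itm:part3}, which puts dummy-matched clones into $\BB'_0$, and item~\ref{itm:part5}/\ref{itm:part6}, which put last-resort-matched or unmatched last-resorts into $\BB'_\T$ and $\AA'_\T$ — none of these are true clones sitting in $\BB'_x$ for $0<x<\T$, and the $\BB'_0$ case from item~\ref{itm:part3} is a dummy-matched clone, which by construction is not adjacent to any last-resort). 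Hence $b_j$ is matched along a true edge, i.e.\ there is $a\in\AA$ with $(a^x,b)\in M$ for some $x<\T$, and $(a_i,b_j)\in M^*$ for the corresponding clones.

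Now the key step: since $b$ has a matched partner (namely $a$) at level $x<\T$, Lemma~\ref{lem:merged}(\ref{lem:defbQm1}) gives $|M(b)|\le q^-(b)$. This is exactly the regime in which the construction of $E'_U$ (the second bullet of item~\ref{itm:unmatched} in the construction of $G_M$) restricts last-resort adjacency: for a vertex $v$ with $|M(v)|\le q^-(v)$, only those clones of $v$ that are \emph{matched to a last-resort in $M^*$} are connected to last-resorts in $G_M$. But the clones of $b$ matched to last-resorts were all placed into $\BB'_\T$ by item~\ref{itm:part5} of the partition, not into $\BB'_x$ with $x<\T$. Therefore $b_j$, lying in $\BB'_x$ for $x<\T$, is not one of the last-resort-matched clones, so it has no last-resort neighbour in $G_M$.

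The only real subtlety — and the step I would be most careful about — is the bookkeeping in the first paragraph, namely verifying that a clone in $\BB'_x$ with $0<x<\T$ can \emph{only} come from item~\ref{itm:part1} and hence is genuinely matched along a true edge, so that Lemma~\ref{lem:merged}(\ref{lem:defbQm1}) is applicable with a real level-$x$ edge $(a^x,b)\in M$; and separately checking the boundary case $x=0$, where a $\BB'_0$ clone could instead be a dummy-matched clone from item~\ref{itm:part3}, in which case it is trivially non-adjacent to last-resorts by construction. Once these cases are dispatched, the rest is a direct read-off from the $E'_U$ construction and the partition, exactly parallel to Lemma~\ref{lem:noedgeLresC}. \qed
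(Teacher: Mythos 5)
Your proposal is correct and follows essentially the same route as the paper: both establish $|M(b)|\le q^-(b)$ via Lemma~\ref{lem:merged}(\ref{lem:defbQm1}) (the paper by contraposition, you directly via the partition bookkeeping plus the dummy-clone case at $x=0$), and then conclude from the $E'_U$ construction that only last-resort-matched clones, which sit in $\BB'_{\T}$, can be adjacent to last-resorts. The extra care you take over which partition item placed $b_j$ is sound and merely makes explicit what the paper leaves implicit.
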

\begin{proof}
Assume that $b_j$ is a clone of some vertex $b\in\BB$ and $b_j\in\BB'_x$ for $x<\T$. Recall that $f(b)$ denotes the difference between $q^+(b)$ and $q^-(b)$. If $f(b)=0$ then last-resorts corresponding to $b$ do not exist. Hence, let us assume that $f(b)>0$. First we show that $|M(b)|\le q^-(b)$. For the sake of contradiction let us assume that $|M(b)|> q^-(b)$.  Using the contrapositive of Lemma~\ref{lem:merged}(\ref{lem:defbQm1}) we see that none of the clones of any vertex in $M(b)$ are in $\AA'_x$ for $x< \T$. This contradicts the fact that $b_j\in \BB'_x$ for $x<\T$. Thus, we establish that $|M(b)|\le q^-(b)$. 

The construction of $G_M$ ensures that only those clones which are matched to last-resorts are connected to last-resorts. While partitioning, we added all such clones to $\BB'_{\T}$. Thus, all the clones $b_j\in\BB'_x$ for $x<\T$ are not adjacent to last-resorts.
  \end{proof}

Let $(a, b)\in E'$ be  an edge such that $a \in \AA'_x$ and $b\in \BB'_y$. We say that such an edge is of the form $\AA'_x \times \BB'_y$. Also, assume that the edges in $G_M$ are implicitly directed from $\AA'$ to $\BB'$. Lemma~\ref{pr:noedge} below gives one important property about the edges which cannot be present in $G_M$. An edge of the from $\AA'_x \times  \BB'_y$ such that $x>y+1$ is referred to as a \emph{steep downward} edges as it goes at least two levels down.
\begin{lemma}\label{pr:noedge}
The graph $G_M$ does not contain steep downward edges. That is, there is no edge in $G_M$ of the form
$\AA'_x \times  \BB'_y$ such that $x>y+1$.
\end{lemma}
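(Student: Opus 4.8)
The plan is to prove Lemma~\ref{pr:noedge} by a case analysis on the type of edge $(a_i,b_j) \in E'$, splitting according to whether the edge lies in $M^*$ or in $E'_U$, and within $E'_U$ according to which of the three structures it comes from: a clone-clone edge arising from an edge $(a,b) \in E \setminus M$, a clone-last-resort edge, or a clone-dummy edge. In each case I want to show that if $a_i \in \AA'_x$ and $b_j \in \BB'_y$, then $x \le y+1$.

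First, for matched edges: if $(a_i,b_j) \in M^*$ then by the partitioning rules (item~\ref{itm:part1} through item~\ref{itm:part6} of the partition of vertices), a matched true edge has both endpoints in the same level, so $x = y$; a clone matched to a dummy in $\AA'$ sits in $\AA'_{\S+\T+1}$ with its dummy in $\BB'_{\S+\T+1}$ (item~\ref{itm:part2}), again $x=y$; a clone in $\BB'$ matched to a dummy lands in $\BB'_0$ with its dummy in $\AA'_0$ (item~\ref{itm:part3}), $x=y$; and the last-resort cases (items~\ref{itm:part4},~\ref{itm:part5}) place a clone of $a$ matched to a last-resort in $\AA'_{\T+1}$ with the last-resort in $\BB'_{\T+1}$, and a clone of $b$ matched to a last-resort in $\BB'_{\T}$ with the last-resort in $\AA'_{\T}$, so $x=y$ in both. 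Hence no matched edge is steep.

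Next, for the unmatched true edges coming from $(a,b) \in E \setminus M$: here I invoke Lemma~\ref{lem:merged}(\ref{lem:noedg}). Suppose $a_i \in \AA'_x$; then the corresponding vertex $a$ reached level $x$ in the algorithm. If $x > 1$, Lemma~\ref{lem:merged}(\ref{lem:noedg}) says every vertex $\tilde a^y \in M(b)$ has $y \ge x-1$, so every clone of $b$ sits at level $\ge x-1$, giving $b_j \in \BB'_y$ with $y \ge x-1$, i.e.\ $x \le y+1$. If $x \le 1$ there is nothing to prove since $y \ge 0 \ge x-1$. For the clone-last-resort edges in $E'_U$: a last-resort of $b$ is in $\AA'_{\T}$ (item~\ref{itm:part5} or item~\ref{itm:part6}), and by Lemma~\ref{lem:noedgeLresC} a clone $a_i$ adjacent to a last-resort cannot be in $\AA'_x$ for $x > \T+1$, so $x \le \T+1 = y+1$; symmetrically, a last-resort of $a$ is in $\BB'_{\T+1}$ (item~\ref{itm:part4} or~\ref{itm:part6}), and by Lemma~\ref{lem:noedgeLresS} a clone $b_j$ adjacent to a last-resort of $a$ cannot be in $\BB'_y$ for $y < \T$, so $x = \T+1 \le y+1$. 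Finally, for the clone-dummy edges in $E'_U$: dummies in $\dLA$ are in $\BB'_0$ (item~\ref{itm:part3}) and are adjacent only to clones in $\AA'$ that were placed in $\AA'_0$ when matched to dummies; but the unmatched clone-dummy edges connect every clone in $\AA'$ to every dummy in $\dLA$, so I need Lemma~\ref{pr:vertexCloneG}(\ref{itm:partprop7}) to control this—actually the cleanest route is to note that dummies in $\dLA$ live in $\BB'_0$, so a steep edge into a dummy would need its other endpoint in $\AA'_x$ with $x > 1$, and I must check the construction forbids this; and dummies in $\dLB$ are in $\AA'_{\S+\T+1}$, which as the top level can never be the tail of a steep edge.

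The main obstacle I anticipate is the clone-dummy case, and more precisely making sure the bipartite completion between $\AA'$ and $\dLA$ (and between $\BB'$ and $\dLB$) does not create steep edges. Since every dummy in $\dLA$ is partitioned into $\AA'_0$ and every dummy in $\dLB$ into $\AA'_{\S+\T+1}$, a potentially steep edge would have the dummy as the \emph{head} $b_j \in \BB'_y$ with $y$ small, paired with some $a_i \in \AA'_x$, $x$ large; but wait—dummies in $\dLA$ are in $\BB'_0$ only via item~\ref{itm:part3} which places $\dLB$-dummies in $\AA'_0$; I must recheck exactly which dummies land where. The resolution should be that $\dLA \subseteq \BB'_{\S+\T+1}$ (from item~\ref{itm:part2}, matched side) so as a head it sits at the top and cannot be reached steeply from below, while $\dLB \subseteq \BB'_0$ as a head only from its matched clone, and the unmatched completion edges to other clones must be checked to respect $x \le 1$. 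This bookkeeping—tracking which partition block each dummy and last-resort ends up in across all of items~\ref{itm:part1}–\ref{itm:part6}, and confirming the unmatched completions respect it—is the delicate part; the rest is a direct application of Lemmas~\ref{lem:merged}(\ref{lem:noedg}),~\ref{lem:noedgeLresC}, and~\ref{lem:noedgeLresS}.
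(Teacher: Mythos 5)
Your overall plan---case analysis by edge type, matched edges being horizontal, Lemma~\ref{lem:merged}(\ref{lem:noedg}) for unmatched true edges, Lemmas~\ref{lem:noedgeLresC} and~\ref{lem:noedgeLresS} for last-resort edges, and the placement of dummies for the rest---is exactly the route the paper intends (the paper gives no further detail than citing these three lemmas). However, there is a genuine gap in your central case. For an unmatched true edge $(a_i,b_j)$ with $(a,b)\in E\setminus M$, Lemma~\ref{lem:merged}(\ref{lem:noedg}) only bounds the levels of the clones of $b$ that are matched along \emph{true} edges (the partners $\tilde a^y\in M(b)$); it says nothing about clones of $b$ matched to last-resorts (which the partition places in $\BB'_{\T}$) or to dummy vertices (placed in $\BB'_0$). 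Your sentence ``so every clone of $b$ sits at level $\ge x-1$'' is therefore unjustified precisely in the cases that need work: $a_i\in\AA'_x$ with $x\ge \T+2$ adjacent to a clone of $b$ matched to a last-resort, or $a_i\in\AA'_x$ with $x\ge 2$ adjacent to a clone of $b$ matched to a dummy. These statements are true but need separate arguments, e.g.: if $b$ is deficient then $b$ never rejects any proposal, so every unmatched neighbour $a$ of $b$ never rises above level $0$ (the argument used in the $\BB$-part of Lemma~\ref{lem:CriticalM}); and if $a$ rises above level $\T+1$ then $b$ must have rejected a copy of $a$ at level $\ge \T+1$, which forces $|M(b)|=q^+(b)$ (as in the proofs of Lemma~\ref{lem:merged}(\ref{lem:undersA})--(\ref{lem:dfStud})), so $b$ has no last-resort- or dummy-matched clones. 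The symmetric subcases where $a_i$ itself is matched to a last-resort or a dummy likewise need Lemma~\ref{lem:merged}(\ref{lem:undersA}) or (\ref{lem:dfStud}), not just part~(\ref{lem:noedg}). Without some such argument the case analysis does not close.

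Two smaller but real slips. In the last-resort case you cross the adjacencies: by construction clones of $a$ are adjacent only to last-resorts corresponding to $a$ (all of which sit in $\BB'_{\T+1}$), and clones of $b$ only to last-resorts of $b$ (all in $\AA'_{\T}$); so Lemma~\ref{lem:noedgeLresC} applies to edges $(a_i,\ell_a)$, giving $x\le \T+1\le \T+2$, and Lemma~\ref{lem:noedgeLresS} to edges $(\ell_b,b_j)$, giving $y\ge \T>\T-1$, whereas the edges you describe between a clone $b_j$ and ``a last-resort of $a$'' do not exist in $G_M$. Finally, the dummy bookkeeping that you flag as the delicate part and leave unresolved is in fact the easy part once the placement is read off correctly: $\dLA\subseteq \BB'_{\S+\T+1}$ (via the partition of matched dummies) and $\dLB\subseteq \AA'_0$, so no edge incident to a dummy vertex can be steep; the genuinely delicate part is the one above, namely the \emph{true} clones matched to dummies or last-resorts, which your write-up does not separate from the dummy vertices themselves.
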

\begin{proof}
Here, we show that no edge in $G_M$ is of the form $\AA'_x \times  \BB'_y$ such that $x>y+1$. We have three different kinds of edges in $G_M$ -- true edges, edges incident to last-resorts and edges incident to dummy vertices. By the construction of $G_M$, all the dummy vertices in $\dLA$ are added to $\BB'_{\S+\T+1}$ and all the dummy vertices in $\dLB$ are added to $\AA'_{0}$. That is, all the dummy vertices are either in bottom-left partition or in top-right partition. This implies that no edge incident to a dummy vertex can be of the form $\AA'_x \times  \BB'_y$ such that $x>y+1$. By the construction of $G_M$, all the last-resorts in $\tLA$ are added to $\BB'_{\T+1}$ and all the last-resorts in $\tLB$ are added to $\AA'_{\T}$. Thus, Lemma~\ref{lem:noedgeLresC} and Lemma~\ref{lem:noedgeLresS} ensure that no edge incident to a last-resort is of the form $\AA'_x \times  \BB'_y$ such that $x>y+1$. Lemma~\ref{lem:merged}(\ref{lem:noedg}) states that if a vertex $a$ is at level $x>1$ and $b$ is its neighbour such that $(a,b)\notin M$ then no vertex in $M(b)$ is at level below $x-1$. This implies that all the clones of $b$ are in $\BB'_z$ for $z\ge x-1$. If a true edge $(a,b)\in M$ then by construction, $(a,b)\in\AA'_x \times  \BB'_x$ for some $0\le x\le \S+\T+1$. Thus, we conclude that the graph $G_M$ does not contain steep downward edges.
\end{proof}

 \subsection{Criticality of matching $M$}\label{subsec:critical}
In this section, we show that the matching $M$ output by Algorithm~\ref{algo:maxPopSC2LQ} is critical. Recall that $(a,b)\in M$ if and only if $(a_i,b_j)\in M^*$ for some clone $a_i$ and $b_j$. It will be useful to work with the one-to-one matching in which clones assigned to either last-resorts or dummy vertices in $M^*$ are considered unmatched. Therefore, we consider the matching $M'$ such that $M'=M^*\setminus\{(v_k,\ell_{v})\ :\ v_k\in\AA'\cup\BB' \mbox{ and  }\ell_v\in\tL\cup\dL\}$

\begin{lemma}\label{lem:CriticalM}
The output matching $M$ of Algorithm~\ref{algo:maxPopSC2LQ} is critical for $G$.
\end{lemma}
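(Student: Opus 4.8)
The plan is to show that $M$ has the minimum possible deficiency by contradiction, using the cloned graph $G_M$ and the one-to-one matching $M'$ obtained from $M^*$ by discarding edges to last-resorts and dummies. First I would observe that $\Df{M}$ equals the number of dummy vertices $|\dL| = |\dLA| + |\dLB|$, and that each dummy vertex is matched in $M^*$ to a clone $v_i$ of some vertex $v$ that is deficient in $M$; equivalently, $\Df{M}$ counts the clones in $\AA'\cup\BB'$ left unmatched by $M'$. So it suffices to show that no matching $N$ in $G$ has strictly smaller deficiency. Suppose such an $N$ exists. By Lemma~\ref{lem:crit2perf} (which applies to critical matchings, so I would instead take $N$ to be a critical matching — if $M$ is not critical, a critical matching has strictly smaller deficiency, so it is enough to contradict the existence of a critical $N$ with $\Df{N} < \Df{M}$), there is an $(\AA'\cup\BB')$-perfect one-to-one matching $N^*$ in $G_M$.

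Next I would consider the symmetric difference $M' \oplus N^*$ in $G_M$. Since $M'$ leaves exactly $\Df{M}$ clones of $\AA'\cup\BB'$ unmatched while $N^*$ is $(\AA'\cup\BB')$-perfect, and $\Df{N} < \Df{M}$ would force $N^*$ to cover more of the "deficient" clones than $M'$ does, the symmetric difference must contain an augmenting path $\rho$ for $M'$ — one whose endpoints are clones unmatched in $M'$ but matched in $N^*$ (i.e. clones that $M^*$ sends to dummy vertices). Here is where the level structure does the work: such endpoints are clones of deficient vertices, so by the partition rules (items~\ref{itm:part2} and~\ref{itm:part3}) a deficient-$\AA$ clone sits in $\AA'_{\S+\T+1}$ and a deficient-$\BB$ clone sits in $\BB'_0$. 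I would trace $\rho$ through the partition, alternating $N^*$-edges and $M'$-edges (which are horizontal, staying within one level index). Using Lemma~\ref{pr:noedge} (no steep downward edges: every edge goes from $\AA'_x$ to $\BB'_y$ with $x \le y+1$), a walk along $\rho$ can drop the level index by at most $1$ per non-matched edge and never rises except via matched edges which preserve it; combined with Lemma~\ref{pr:vertexCloneG}(\ref{itm:partprop7}) — a deficient-$\AA$ clone in $\AA'_{\S+\T+1}$ has all neighbours in $\BB'_{\S+\T+1}$ — I would argue the path cannot actually reach from level $\S+\T+1$ down to level $0$, deriving a contradiction. The symmetric companion (a deficient-$\BB$ endpoint in $\BB'_0$) is handled by the mirror-image argument, exploiting that $\BB$-clones at low levels behave dually; if both endpoints are on the same side one gets an even shorter contradiction.

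The main obstacle I anticipate is the bookkeeping in the path-tracing step: $G_M$ has edges to last-resorts as well, and I must rule out $\rho$ using a last-resort as an intermediate vertex — this is exactly what Lemmas~\ref{lem:noedgeLresC} and~\ref{lem:noedgeLresS} are for (clones above level $\T+1$ on the $\AA$-side and below level $\T$ on the $\BB$-side are not adjacent to last-resorts), so I would invoke them to confine the relevant portion of $\rho$ to true clones and dummy vertices and show the augmenting path is forced to stay in a "high" band near level $\S+\T+1$ (or a "low" band near level $0$) that contains no unmatched-in-$M'$ clone at its far end. Making the counting argument airtight — that $\Df{N} < \Df{M}$ genuinely forces an augmenting path rather than merely alternating paths that swap which clones are deficient — requires comparing, side by side, the number of dummy-matched clones of $\AA'$ versus $\BB'$ under $M'$ and under $N^*$, and here Claim~\ref{lem:noLessDef} ($\Dfa{N}\le\Dfa{M}$ and $\Dfb{N}\le\Dfb{M}$ for critical $N$) is the key input ensuring the deficiency cannot merely be shuffled from one side to the other.
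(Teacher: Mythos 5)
Your plan follows the paper's architecture (cloned graph $G_M$, reduction to a critical $N$ via Claim~\ref{lem:noLessDef} and Lemma~\ref{lem:crit2perf}, tracing an alternating path in the symmetric difference through the level partition, using Lemma~\ref{pr:noedge}, Lemma~\ref{pr:vertexCloneG}(\ref{itm:partprop7}) and the last-resort Lemmas~\ref{lem:noedgeLresC}--\ref{lem:noedgeLresS}), but it is missing the one ingredient that actually produces the contradiction. You assert that the path ``cannot reach from level $\S+\T+1$ down to level $0$'' or is ``forced to stay in a high band''; as a structural claim this is false, since Lemma~\ref{pr:noedge} permits the path to descend one level per step, so nothing prevents a long path from descending arbitrarily far. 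The paper's contradiction is a pigeonhole count: the path starts with \emph{two} clones in $\AA'_{\S+\T+1}$ (after rerouting the start to a dummy-matched clone), must end at a clone that does not represent a lower-quota position --- which on the $\AA'$-side sits at level at most $\T+1$, not level $0$ --- and, descending at most one level per step, must therefore use at least one clone from each of $\AA'_{\T+2},\ldots,\AA'_{\T+\S}$, i.e.\ at least $\S+1$ clones in $\bigcup_{x=\T+2}^{\T+\S+1}\AA'_x$; this contradicts the bound of $\S$ from Lemma~\ref{pr:vertexCloneG}(\ref{itm:partprop4}) (and symmetrically the bound of $\T$ on $\bigcup_{x=0}^{\T-1}\BB'_x$ for the $\BB$-part). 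You never invoke this counting lemma, and without it the level-tracing yields no contradiction.

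Two smaller but real inaccuracies feed into this. First, $\Df{M}$ is \emph{not} the number of clones left unmatched by $M'$: clones matched to last-resorts in $M^*$ are also $M'$-unmatched, so your claim that a deficiency drop forces an \emph{augmenting} path (both endpoints $M'$-unmatched and dummy-matched) is too strong. The paper instead works with maximal alternating paths along which $N'$ fills more lower-quota positions; the far endpoint may well be $M'$-matched (an even-length path), and the argument there is that such an endpoint cannot lie in $\AA'_x$ for $x>\T+1$ (respectively $\BB'_x$ for $x<\T$) because those clones all represent lower-quota positions. Second, when the starting clone of the deficient vertex is matched to a last-resort rather than a dummy, one must reroute the path to start at a dummy-matched clone of the same vertex in $\AA'_{\S+\T+1}$ (or $\BB'_0$), using the complete bipartite structure between clones of non-adjacent-in-$M$ pairs; this step is needed before Lemma~\ref{pr:vertexCloneG}(\ref{itm:partprop7}) can be applied, and it is absent from your outline.
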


\begin{proof} 
We prove the criticality of $M$ using the graph $G_M$. Note that $M^*$ is an $(\AA'\cup\BB'\cup\dL)$-perfect matching in $G_M$. Let $N$ be any critical matching in $G$ and, by Lemma~\ref{lem:crit2perf}, let $N^*$ be the corresponding ($\AA'\cup\BB'$)-perfect one-to-one matching in $G_M$. We obtain a one-to-one matching $M'$ from $M^*$ and $N'$ from $N^*$ by removing all the last-resorts and dummy vertices. Thus, $M'=M^*\setminus\{(v_k,\ell_{v})\ :\ v_k\in\AA'\cup\BB' \mbox{ and  }\ell_v\in\tL\cup\dL\}$ and  $N'=N^*\setminus\{(v_k,\ell_{v})\ :\ v_k\in\AA'\cup\BB' \mbox{ and  }\ell_v\in\tL\cup\dL\}$.  Here we show that there is no alternating path $\rho$ in $G_M$ with respect to $M'$ such that $M'\oplus \rho$ has a lesser deficiency than $M'$. The proof is divided into two parts called the $\AA$-part and the $\BB$-part, where we show the criticality of $M$ for the respective parts. 
 
\noindent \textbf{Proof of ($\AA$-part):} For the sake of contradiction, let us assume that $M$ is not critical for the $\AA$-part, that is, $\Dfa{N}<\Dfa{M}$. This implies that there exists an alternating path $\rho$ in $M'\oplus N'$ such that $N'$ matches more clones representing lower-quota positions on $\rho$ than $M'$. Let $\rho=\langle u_0, v_1,u_1,v_2,u_2,\ldots,v_k,u_k,\ldots,u_p\rangle$, where $(v_i,u_i)\in M'$ and the other edges of $\rho$ are in the matching $N'$. The idea is to show that the path begins at $u_0$ which is at the highest level $\S+\T+1$ and has at least the next clone on $\AA'$-side, that is $u_1$, also at the same level  (see Figure~\ref{fig:critLem}). Finally, the path ends at level $\T+1$ or below. We further show that the number of clones along this path in levels $\T+2, \ldots \S+\T+1$ is more than the total number of clones that can be accommodated in these levels. This gives us the desired contradiction.

Let the first vertex $u_0$ represents a clone of some vertex $a\in\AA$ such that deficiency of $a$ in $N$ is less than the deficiency of $a$ in $M$ and $M'(u_0)=\bot$ (that is, $M^*(u_0)\in \tLA\cup\dLA$). Now we show that we can always assume that $\rho$ starts at the highest level, that is, $u_0\in\AA'_{\S+\T+1}$.  To show this let us consider $M^*(u_0)$. We have two possibilities -- either $M^*(u_0)\in\dLA$ or $M^*(u_0)\in\tLA$. If $M^*(u_0)\in\dLA$ then $u_0\in\AA'_{\S+\T+1}$ (by construction) and the path $\rho$ starts at the highest level as desired. If $M^*(u_0)\in\tLA$ then $u_0\in\AA'_{\T+1}$.  In this case, since $a$ is deficient in $M$ there must exists some clone of $a$, say $u'$, such that $u'\in \AA'_{\S+\T+1}$ and $M^*(u')\in\dLA$ ($a^{\S+\T+1}$ must have exhausted $\prefa$).  We claim that $u'$ is adjacent to $v_1$ because of the following reason. Let us assume that $v_1$ represents a clone of $b\in\BB$. The fact that $(a,b)\notin M$ implies that each clone of $a$, in particular $u'$, is adjacent to each clone of $b$, in particular $v_1$. Thus, we can replace $u_0$ by $u'$ and $\rho$ starts at the highest level $\S+\T+1$ as desired. Note that this change of vertex does not affect the deficiency of the resultant matching obtained by switching along $\rho$. Since $M^*(u_0)\in\dLA$, Lemma~\ref{pr:vertexCloneG}(\ref{itm:partprop7}) implies that all the neighbours of $u_0$ are only in $\BB'_{\S+\T+1}$. Thus, we conclude that $v_1\in \BB'_{\S+\T+1}$, and since $(u_1,v_1)\in M'$, $u_1$ must be in $\AA'_{\S+\T+1}$ (see Figure~\ref{fig:critLem}). The other end of $\rho$ can be a clone in $\AA'$ or a clone in $\BB'$. We consider both these cases below.

\begin{figure}
\begin{center}
    \scalebox{0.85}{\begin{tikzpicture}[thick,
  lnode/.style={draw=white,fill=blue},
  fsnode/.style={draw,circle,fill=black,scale=0.5},
  every fit/.style={ellipse,draw,inner sep=-2pt,text width=0.5cm},
  -,shorten >= 3pt,shorten <= 3pt
]


  \node[fsnode] (u0) at (0,0.5) {};
  \node at (-0.5,0.5) {$u_0$};
  \node[fsnode] (v1) at (2,0) {};
  \node at (2.5,0) {$v_1$};
  \node[fsnode] (u1) at (0,0) {};
  \node at (-0.5,0) {$u_1$};
  \node at (-2,0.3) {$\AA'_{\S+\T+1}$};
  \node at (4,0.3) {$\BB'_{\S+\T+1}$};
   \draw[ultra thin, dashed](-1.5,-0.5) -- (3.5,-0.5);
   
  \node[fsnode] (v2) at (2,-1) {};
  \node at (2.5,-1) {$v_2$};
  \node[fsnode] (u2) at (0,-1) {};
  \node at (-0.5,-1) {$u_2$};
   \node at (-2,-1) {$\AA'_{\S+\T}$};
   \node at (4,-1) {$\BB'_{\S+\T}$};
   \draw[ultra thin,dashed](-1.5,-1.5) -- (3.5,-1.5);
   
   \node[fsnode] (ui1) at (0,-1.75) {};
   \node[fsnode] (vi1) at (2,-1.75) {};
   \node at (1,-1.75) {$\ldots$};
   \draw[ultra thin,dashed](-1.5,-2) -- (3.5,-2);
   
  \node[fsnode] (vi) at (2,-2.5) {};
  \node at (2.5,-2.5) {$v_i$};
  \node[fsnode] (ui) at (0,-2.5) {};
  \node at (-0.5,-2.5) {$u_i$};
   \node at (-2,-2.5) {$\AA'_{j}$};
  \node at (4,-2.5) {$\BB'_{j}$};
   \draw[ultra thin, dashed](-1.5,-3) -- (3.5,-3);
   
   \node[fsnode] (ui11) at (0,-3.25) {};
   \node[fsnode] (vi11) at (2,-3.25) {};
   \node at (1,-3.25) {$\ldots$};
   \draw[ultra thin, dashed](-1.5,-3.5) -- (3.5,-3.5);

\node[fsnode] (vk1) at (2,-4) {};
  \node at (2.7,-4) {$v_{k-1}$};
  \node[fsnode] (uk1) at (0,-4) {};
  \node at (-0.6,-4) {$u_{k-1}$};
   \node at (-2,-4) {$\AA'_{\T+2}$};
  \node at (4,-4) {$\BB'_{\T+2}$};
   \draw[ultra thin, dashed](-1.5,-4.5) -- (3.5,-4.5);
   
   \node[fsnode] (vk) at (2,-5) {};
  \node at (2.5,-5) {$v_{k}$};
  \node[fsnode] (uk) at (0,-5) {};
  \node at (-0.5,-5) {$u_{k}$};
   \node at (-2,-5) {$\AA'_{\T+1}$};
  \node at (4,-5) {$\BB'_{\T+1}$};
  \draw[ultra thin, dashed](-1.5,-5.5) -- (3.5,-5.5);
  
   \node[fsnode] (vk11) at (2,-6) {};
  \node at (2.7,-6) {$v_{k+1}$};
   \node at (-2,-6) {$\AA'_{\T}$};
  \node at (4,-6) {$\BB'_{\T}$};

 \draw[thick, red ](u0) -- (v1);
 \draw[thick, blue](u1) -- (v1);
 \draw[thick, red ](u1) -- (v2);
 \draw[thick, blue](u2) -- (v2);
 \draw[thick, blue](ui) -- (vi);
 \draw[thick, red ](u2) -- (vi1);
 \draw[thick, red ](ui1) -- (vi);
 \draw[thick, red ](ui) -- (vi11);
 \draw[thick, red ](ui11) -- (vk1);
 \draw[thick, blue](uk1) -- (vk1);
 \draw[thick, red ](uk1) -- (vk);
 \draw[thick, blue](uk) -- (vk);
\draw[thick, red ](uk) -- (vk11);

\end{tikzpicture}}
\end{center}

\caption{Blue colored edges denote the edges in $M'$ whereas red edges denote the edges in $N'$.}
\label{fig:critLem}
\end{figure}
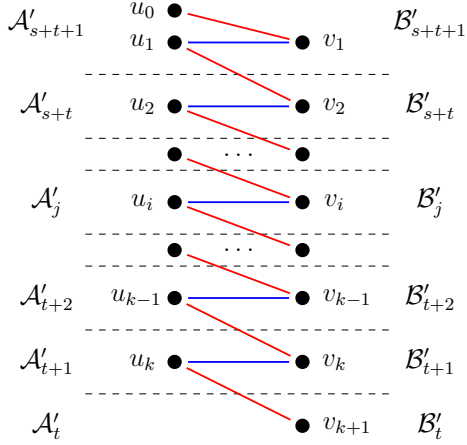

\noindent\textbf{The path $\rho$ ends at some clone in $\AA'$:} Suppose that the path ends at a clone in $\AA'_x$ for $x>\T+1$. By the construction of the graph $G_M$, all the clones in $\AA'_x$ for $x>\T+1$ represent lower-quota positions for some \emph{lq}-vertex. Thus, if $\rho$ ends at a clone $u_i$ such that $u_i\in \AA'_{x}$ for $x>\T+1$ then $N\oplus \rho$ matches the same number of lower-quota positions. This contradicts that $N'$ matches strictly more number of lower-quota positions along $\rho$. This implies that the other endpoint  of $\rho$ must be in $\AA'_{x}$ for $x\le\T+1$.
By Lemma~\ref{pr:noedge} we know that $G_M$ does not contain steep downward edges. Hence, we conclude that $v_2\in \BB'_x$ and $M'(v_2)=$ $u_2\in\AA'_x$ for $x\in\{\S+\T+1,\S+\T\}$. The absence of steep downward edges implies that 
if $u_i\in\AA'_y$ and $u_{i+1}\in\AA'_x$ then $x-y\le 1$ for all indices $i$ on $\rho$ and hence, $\rho$ must contain at least one clone from each $\AA'_x$ for $\T+1\le x\le \S+\T+1$ (see Figure~\ref{fig:critLem}). Let us assume that $u_k\in\AA'_{\T+1}$. Thus, $u_{k-1}\in \AA'_y$ for $y\le \T+2$. We remark that the last vertex $u_p$ and the starting vertex $u_0$ of $\rho$, both cannot be the clones of the vertex $a$,  otherwise, we get a contradiction that $N'$ matches strictly more number of lower-quota positions along $\rho$. Now notice that there are at least two clones $u_0$ and $u_1$ in $\AA'_{\S+\T+1}$ and at least one clone in each $\AA'_x$ for $\T+2\le x\le \S+\T$. Thus, there are at least $\S+1$ many clones in $\bigcup_{x=\T+2}^{\S+\T+1}\AA'_x$ in $G_M$ contradicting Lemma~\ref{pr:vertexCloneG}(\ref{itm:partprop4}).

\noindent\textbf{The path $\rho$ ends at some clone $v_{k+1}\in\BB'$:} This implies that $M'(v_{k+1})=\bot$ and hence $v_{k+1}\in \BB'_y$ for $y\le \T$. Applying Lemma~\ref{pr:noedge} we see that $u_k\in\AA'_x$ for $x\le \T+1$. Now repeating the same argument as in the previous case we find that there are at least $\S+1$ many clones in $\bigcup_{x=\T+2}^{\S+\T+1}\AA'_x$ in $G_M$. Thus, we get a contradiction. This completes the proof that $M$ is critical for vertices in $\AA$. 

\vspace{0.15in}

\noindent \textbf{Proof of ($\BB$-part):}
Proof of this part is similar to that of $\AA$-part. For the sake of contradiction, let us assume that $\Dfb{N}<\Dfb{M}$. This implies that there exists an alternating path $\rho$ in $M'\oplus N'$ such that $N'$ matches more clones representing lower-quota positions on $\rho$ than $M'$. Let $\rho=\langle v_0, u_1, v_1, u_2, v_2, \ldots, u_k, v_k,\ldots,v_p \rangle$ where $(u_i,v_i)\in M'$ and the other edges of $\rho$ are in the matching $N'$. Furthermore, assume that the first vertex $v_0$ represents a clone of some vertex $b\in\BB$ such that the deficiency of $b$ in $N$ is less than the deficiency of $b$ in $M$ and $M'(v_0)=\bot$. As in the proof of $\AA$-part, we can assume that $\rho$ starts at the lowest level, that is, $v_0\in\BB'_0$. 
Since $b$ is deficient, it must not have received enough ($\ge q^-(b)$ many) proposals from its neighbours. In other words, there does not exist a vertex $a$ such that $(a,b)\in E\setminus M$ and $a$ is at a level above 0. This implies $u_1\in\AA'_0$ and $v_1=M'(u_1)$ is in $\BB'_0$. The other end of $\rho$ can be in $\BB'$ or in $\AA'$. We consider both these cases below.

\noindent\textbf{The path $\rho$ ends at some clone in $\BB'$:} Suppose that the path ends at a clone in $\BB'_x$ for $x<\T$. By the construction of $G_M$, all the clones in $\BB'_x$ for $x<\T$ represent lower-quota positions for some \emph{lq}-vertex.  Thus, if $\rho$ ends at a clone $v_i$ such that $v_i\in \BB'_{x}$ for $x<\T$ then $N\oplus \rho$ matches the same number of lower-quota positions. This contradicts the fact that $N'$ matches strictly more number of lower-quota positions along $\rho$. This implies that the other endpoint of $\rho$ cannot be in $\BB'_x$ for $x<\T$. 
Lemma~\ref{pr:noedge} implies that if $v_i\in\BB'_x$ and $v_{i+1}\in\BB'_y$ then $y-x\le 1$ for all indices $i$ on $\rho$ and hence, $\rho$ must contain at least one clone from each $\BB'_x$ for $1\le x\le \T-1$. Let us assume that $v_k\in\BB'_{\T}$. Thus, $v_{k-1}\in\BB'_y$ for $y\ge \T-1$.  We remark that the last vertex $v_p$ and the starting vertex $v_0$ of $\rho$, both cannot be clones of the vertex $b$; otherwise, we get a contradiction that $N'$ matches more lower-quota positions along $\rho$ than $M'$. We observe that $\rho$ contains at least two clones $v_0$ and $v_1$ from $\BB'_0$, and at least one clone from each $\BB'_x$ for $1\le x\le \T-1$.  Thus, the path  $\rho$ contains at least $\T+1$ many clones from $\bigcup_{x=0}^{\T-1}\BB'_x$ which contradicts the total number of clones accommodated in these levels by Lemma~\ref{pr:vertexCloneG}(\ref{itm:partprop4}).  

\noindent\textbf{The path $\rho$ ends at some clone in $\AA'$:} Note that $M'(u_{k+1})=\bot$ and by construction unmatched clones of a vertex $a\in\AA$ are only in $\AA'_x$ for $x\ge \T+1$. Thus, $v_k\in \BB'_y$ for $y\ge \T$. Using the same argument as in the previous case, we show that $\rho$ contains at least $\T+1$ many clones from $\bigcup_{x=0}^{\T-1}\BB'_x$ to get a contradiction. Hence, we conclude that such a path $\rho$ cannot exist.   \end{proof}
\section{Popularity of $M$}
\label{sec:pop}
In this section, we prove that the matching $M$ output by Algorithm~\ref{algo:maxPopSC2LQ} is a maximum size popular matching in the set of critical matchings. To capture the votes of the vertices,  we assign a weight to every edge of the cloned graph $G_M$. The weight of an edge $(a_i,b_j)$ in $G_M$, denoted by $wt(a_i,b_j)$, captures the votes of $a_i$ and $b_j$ when these vertices cast their votes for the edge $(a_i,b_j)$ compared to their respective partners in $M^*$. 
The weights of edges in $G_M$ are as follows:
\begin{enumerate}
    \item For a true edge $(a_i,b_j)$  where $a_i\in\AA'$ and $b_j \in \BB'$:
    \begin{eqnarray*}
 		wt(a_i,b_j)=\begin{cases}
			0   \qquad\qquad\qquad\qquad\qquad\qquad\qquad\qquad\qquad  \mbox{ if $(a_i,b_j)\in M^*$} \\
			vote_a(b_j, M^*(a_i)) + vote_b(a_i, M^*(b_j))\ \ \qquad \mbox{ otherwise}
		\end{cases}
	\end{eqnarray*}

    \item  For an edge $(u,v)$ where $u\in\AA'\cup\BB'$ and $v\in\tL\cup \dL$:
    \begin{eqnarray*}
 		wt(u,v)=\begin{cases}
		\ \ 0\   \qquad \mbox{ if $M^*(u)\in \tL\cup\dL$} \\
			-1\ \qquad \mbox{ otherwise}
		\end{cases}
	\end{eqnarray*}
    
\end{enumerate}
The weight of an edge of $G_M$ denotes the
sum of the votes of endpoints when compared to the matching $M^*$. Thus, for $ e \in E'$, we have $-2 \le wt(e) \le 2$.

\begin{lemma}\label{lem:edgewts}
     Let $e=(a_i,b_j)$ be any edge in $G_M$ where $a_i\in \AA'$ and $b_j\in \BB'$. Then,
      \begin{enumerate}
         \item \label{itm:y1} If $e\in \AA'_x\times \BB'_{x-1}$ then $wt(e)=-2$.
         \item \label{itm:y2} If $e\in \AA'_x\times \BB'_x$ then $wt(e) \le 0$.
     \item \label{itm:y3} If $e\in \AA'_x\times \BB'_y$ for $y>x$ then $wt(e)\leq 2$. 
      \end{enumerate}
 \end{lemma}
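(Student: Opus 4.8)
The plan is to unpack the weight definitions in terms of the level structure of $G_M$ and the preference rule by which a $\BB$-side vertex compares level copies of $\AA$-side vertices. Recall that a vertex $b$ strictly prefers $a_i^\ell$ over $a_j^{\ell'}$ whenever $\ell > \ell'$, regardless of the underlying preference list $\prefb$; and whenever $\ell=\ell'$, it falls back on $\prefb$. The key observation to set up is this: for a true edge $(a_i,b_j)\in M^*$, the partition places $a_i$ and $b_j$ in the \emph{same} level class, say $\AA'_x$ and $\BB'_x$, and this common level $x$ is exactly the level $\ell$ at which the edge $(a^\ell,b)$ entered $M$ in Algorithm~\ref{algo:maxPopSC2LQ}. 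So a non-matching true edge $e=(a_i,b_j)$ with $a_i\in\AA'_x$ and $b_j\in\BB'_y$ compares the candidate partner $a$, sitting at level $x$, against $b$'s current partner $M^*(b_j)$, which is an $\AA$-side vertex sitting at level $y$; symmetrically it compares $b$ against $M^*(a_i)$, which sits at level $x$ from $a$'s side (since $M^*(a_i)\in\BB'_x$).

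\textbf{Proof of (\ref{itm:y1}).} Suppose $e=(a_i,b_j)\in\AA'_x\times\BB'_{x-1}$. Since $e\notin M^*$ (matched edges are horizontal), $wt(e)=vote_a(b_j,M^*(a_i))+vote_b(a_i,M^*(b_j))$. From $b$'s side: $b_j\in\BB'_{x-1}$ means $M^*(b_j)=M'(b_j)$ is (the clone of) a vertex at level $x-1$, while the candidate $a$ is at level $x>x-1$; by the level-dominance rule $b$ strictly prefers its current partner, wait — no: $b$ prefers the \emph{higher}-level vertex, so $b$ prefers $a$ (level $x$) over $M^*(b_j)$ (level $x-1$), giving $vote_b=+1$. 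Hmm, that contradicts $wt=-2$, so let me reconsider: the issue is that such an edge $(a,b)\in E\setminus M$ with $a$ having reached level $x$ — by Lemma~\ref{lem:merged}(\ref{lem:noedg}), every $\tilde a^y\in M(b)$ has $y\ge x-1$, and moreover one must check $a$'s own partner. Actually the cleanest route: use that $(a,b)\in E\setminus M$ and $a$ reached level $x$. Then when $a^{x-1}$ proposed along its list, either it did not propose to $b$ (so $b$ is better off and $a$'s current partners beat $b$) or $b$ rejected it. I would formalize this to conclude $a$ prefers $M^*(a_i)$ over $b$ \emph{and} $b$ prefers $M^*(b_j)$ over $a$, each vote being $-1$, hence $wt(e)=-2$. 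I'll carefully split on whether $a_i$ and $b_j$ are true clones or last-resort/dummy clones, using Lemmas~\ref{lem:noedgeLresC} and~\ref{lem:noedgeLresS} to rule out last-resort incidences at these levels, and the weight-$-1$ rule for edges into $\tL\cup\dL$.

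\textbf{Proof of (\ref{itm:y2}) and (\ref{itm:y3}).} For (\ref{itm:y2}), an edge $e\in\AA'_x\times\BB'_x$ with $e\notin M^*$ has both endpoints' current partners at level $x$ as well, so all four vertices involved sit at level $x$; the level-dominance rule is neutral and both votes are decided by the underlying lists $\prefa,\prefb$. Since $M$ restricted to level-$x$ proposals is essentially the Gale--Shapley stable outcome among the level-$x$ copies, $(a,b)$ cannot be a blocking pair, so at least one of $vote_a,vote_b$ is $-1$ and neither exceeds $+1$; hence $wt(e)\le 0$. Edges touching $\tL\cup\dL$ within a level give weight $\le 0$ directly from the weight rule (at most one endpoint contributes, and a matched non-last-resort endpoint contributes $-1$; an unmatched-to-junk endpoint contributes $0$). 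For (\ref{itm:y3}), when $y>x$ we simply use the trivial global bound $wt(e)\le 2$ from the range $-2\le wt(e)\le 2$ noted before the lemma — nothing level-specific is needed here.

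\textbf{Main obstacle.} The delicate case is (\ref{itm:y1}): showing \emph{both} endpoints vote $-1$, rather than one voting $+1$ and the other $-1$ (which would only give $wt(e)=0$). The subtlety is that the level-dominance preference on $b$'s side could make $b$ \emph{prefer} the higher-level candidate $a$ — so the argument cannot be purely about levels; it must invoke the precise reason $(a,b)\notin M$ while $a$ climbed to level $x$ (namely, $a^{x-1}$ either skipped $b$ because $a$ was already satisfied by strictly better partners, or $b$ rejected $a^{x-1}$ in favour of partners it strictly prefers at level $\ge x-1$), combined with the fact that $M^*(b_j)$ at level $x-1$ is exactly such a retained partner. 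Handling the bookkeeping across the level-$\T$ and level-$\T-1$ boundary (where capacities switch between $q^-(b)$ and $q^+(b)$) and ruling out last-resort/dummy endpoints via Lemmas~\ref{lem:noedgeLresC}--\ref{lem:noedgeLresS} is where the real care goes.
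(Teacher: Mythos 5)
Your overall direction is the right one (argue from the algorithm's proposal history and from the reason why $(a,b)\notin M$ although $a$ climbed to level $x$), but for part~(\ref{itm:y1}) — which you yourself identify as the heart of the lemma — the proposal stops exactly where the proof has to start, and the dichotomy you offer is not sufficient. First, you never establish that $a^{x-1}$ in fact proposed to $b$: this is not automatic, because at levels below $\T$ a vertex proposes only to \emph{lq}-vertices, and the paper has to derive it from the fact that $M^*(b_j)$ is a clone of some $\tilde a$ with $(\tilde a^{x-1},b)\in M$, which forces $x-1\ge \T$ whenever $b$ is a non-\emph{lq} vertex (so $b$ does lie on $a^{x-1}$'s list). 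Second, your split ``either $a^{x-1}$ did not propose to $b$ (so $a$'s partners beat $b$) or $b$ rejected it'' bounds at most one of the two votes in each branch, so even if formalized it would only exclude $wt(e)=+2$, not the outcome $vote_a=+1,\ vote_b=-1$ (i.e.\ $wt(e)=0$). The paper needs two separate arguments: $vote_a\ne +1$ is shown by a case analysis at level $x$ (if $a^x$ never reached $b$ on its list, then $M^*(a_i)$, acquired at level $x$ in preference order, is preferred to $b$; if $a^x$ did propose, then $b$, holding the level-$(x-1)$ partner $\tilde a^{x-1}$, could never reject the strictly level-higher $a^x$, forcing $(a,b)\in M$); and $vote_b\ne +1$ is shown by observing that if $b$ preferred $a$ to $\tilde a$ in $\prefb$, then $b$ could not have rejected $a^{x-1}$ while retaining the same-level but worse $\tilde a^{x-1}$ in the final matching. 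You explicitly defer all of this (``I would formalize this to conclude\dots''), and your opening confusion between the level-augmented order (internal to the algorithm) and the true lists $\prefa,\prefb$ (which is what the votes and hence $wt$ are defined on) shows the semantics of $wt$ were not yet pinned down; note also that the worry about $a_i$ or $b_j$ being last-resort or dummy clones is moot, since the lemma restricts to $a_i\in\AA'$, $b_j\in\BB'$ (what can be a last-resort or dummy is $M^*(a_i)$ or $M^*(b_j)$, a case your sketch does not treat).

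For part~(\ref{itm:y2}) the intended conclusion is right, but the justification ``$M$ restricted to level-$x$ proposals is essentially the Gale--Shapley stable outcome among the level-$x$ copies'' is an unproven (and not quite accurate) claim: partners enter and leave across levels and $b$'s effective capacity switches between $q^-(b)$ and $q^+(b)$, so there is no standalone level-$x$ GS run to appeal to. The correct, short argument is the one-step blocking-pair argument with the algorithm itself: if $wt(e)=+2$ then $a^x$ proposes to $b$ before settling for its less-preferred level-$x$ partner, and $b$, whose final level-$x$ partner is worse than $a$, cannot have rejected that proposal, so $(a,b)\in M$, contradicting $e\notin M^*$; together with the parity of the two $\pm1$ votes this gives $wt(e)\in\{0,-2\}$. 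Part~(\ref{itm:y3}) is indeed trivial, as you say. As it stands, though, the proposal is a plan with the decisive steps of (\ref{itm:y1}) and the key claim in (\ref{itm:y2}) missing, so it does not yet constitute a proof.
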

\begin{proof}
	\begin{itemize}
	      \item \textbf{Proof of \ref{itm:y1}:} If $e\in \AA'_x\times \BB'_{x-1}$ then $(a,b)\notin  M$ because all the matched edges are of the form $\AA'_y\times\BB'_y$.
	      Let us assume that $M^*(b_j)=\Tilde{a}_k$ where $\Tilde{a}_k$ is a clone of $\Tilde{a}\in\AA$. Note that $b_j\in\BB'_{x-1}$ implies $(\Tilde{a}^{x-1},b)\in M$.  Recall that during our proposal-based algorithm, a vertex $a\in\AA$ proposes to a subset of vertices in $\prefa$ at certain lower levels. Now we show that $a^{x-1}$ must have proposed to $b$ during our algorithm irrespective of whether $b$ is an \emph{lq}-vertex or a non \emph{lq}-vertex. Note that $a_i\in\AA_x'$ implies $a^{x-1}$ has exhausted its preference list. Suppose that $b$ is an \emph{lq}-vertex. Thus, $b$ appears in the preference list of all its neighbours at all the levels. This implies that $a^{x-1}$ must have proposed to $b$. Now suppose that $b$ is a non \emph{lq}-vertex. Since $\Tilde{a}^{x-1}\in M(b)$ it must be the case that $\Tilde{a}^{x-1}$ proposes to $b$. This implies  $x-1\ge \T$ which further implies that $a^{x-1}$ must have proposed to $b$. 
	      
	      Since $a\notin M(b)$, the vertex $b$ must have rejected $a^{x-1}$. We claim that the votes of both the endpoints $a$ and $b$ must be $-1$. Suppose for contradiction that $vote_{a}(b_j,M^*(a_i))=1$. Here, we consider two cases -- (i) $a^x$ does not propose to $b$ which implies that $vote_{a}(b_j,M^*(a_i))=-1$ and we get a contradiction. (ii) $a^x$ proposes to $b$. In this case, $b$ prefers a higher level vertex $a$ over a lower level vertex $\Tilde{a}$ and thus $(a,b)\in M$ which is a contradiction to the fact that $(a,b)\notin  M$.  Thus, $vote_{a}(b_j,M^*(a_i))\not=1$. Now assume that $vote_{b}(a_i,M^*(b_j))=1$ then $b$ must have accepted the proposal of $a^{x-1}$ or $a^x$ because $\Tilde{a}^{x-1}\in M(b)$ and $b$ prefers $a$ over $\Tilde{a}$.  This implies $(a,b)\in M$ which is a contradiction. Thus, we conclude that $vote_{a}(b_j,M^*(a_i))=vote_{b}(a_i,M^*(b_j))=-1$. Hence, $wt(e)=-2$. 
	      
            
            \item \textbf{Proof of \ref{itm:y2}}: If $e\in \AA'_x \times \BB'_x$ then we consider two cases-- (i) if $e\in M^*$ then by construction $wt(e)=0$. (ii) If $e\notin M^*$ then we claim that $wt(e)$ cannot be $+2$. Assume for contradiction that $wt(e)=wt(a_i,b_j)=2$. Also assume that $M^*(a_i)=b'_k$ where $b'_k$ is a clone of $b'\in\BB$ and $M^*(b_j)=a_r'$ where $a_r'$ is a clone of $a'$. The clone $b_j$ at level $x$ and $M^*(b_j)=a_r'$ implies that $b$ received a proposal from $a'$ at level $x$, therefore, $a$ can also propose to $b$ at level $x$. The assumption that $wt(a_i,b_j)=2$ implies that $a$ prefers $b$ over $b'$. Thus, $a^x$ proposed to $b$ before proposing to $b'$ and $b$ accepts it. 
            This is a contradiction. Note that for a true edge $e$,  $wt(e)\in \{-2,0,2\}$ and as shown above $wt(e)\not=+2$. Thus, we conclude that $wt(e)\le 0$.
            
	       \item \textbf{Proof of \ref{itm:y3}:} The maximum possible weight for any edge $e$ is 2. 
		\end{itemize}
		
		This completes the proof of the lemma.\end{proof}

Recall that in Lemma~\ref{lem:crit2perf}, we showed that a critical matching $N$ in $G$ can be mapped to an  ($\AA'\cup\BB'$)-perfect one-to-one matching $N^*$ in $G_M$. Having established that the output $M$ of Algorithm~\ref{algo:maxPopSC2LQ} is a critical matching in $G$, we now show that a critical matching $N$ in $G$ can be mapped to an  ($\AA'\cup\BB'\cup\dL$)-perfect one-to-one matching $N^*$ in $G_M$ such that $wt(N^*)$ is exactly equal to the difference in the votes received by matchings $N$ and $M$ according to the corresponding functions $\cor$.

\begin{lemma}\label{lem:corr}
For every critical matching $N$ in $G$ and a correspondence function $\cor$ there exists an ($\AA'\cup\BB'\cup\dL$)-perfect matching $N^*$ in $G_M$ such that $wt(N^*)=\Delta(N,M,\cor)$.
\end{lemma}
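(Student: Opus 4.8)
The plan is to start from the matching $N^*$ produced by Lemma~\ref{lem:crit2perf} for the critical matching $N$, and then show that after a suitable choice of which clones get paired (exploiting the freedom in the construction of $N^*$), the total edge weight $wt(N^*)$ equals $\Delta(N,M,\cor)$. First I would recall that every vertex $v$ casts up to $q^+(v)$ votes when comparing $N$ with $M$, and that in the cloned graph $v$ has exactly $q^+(v)$ clones in $\AA'\cup\BB'$; the idea is to make the one-to-one matchings $N^*$ and $M^*$ restricted to the clones of $v$ realize precisely the correspondence function $\cor_v$. Concretely, for a vertex $v\in\AA\cup\BB$, the clones of $v$ matched identically in $M^*$ and $N^*$ (i.e.\ to the same partner-clone, corresponding to $M(v)\cap N(v)$, plus clones matched to last-resorts in both) contribute $0$ to $wt(N^*)$; the remaining clones of $v$ are matched in $M^*$ to clones representing $M(v)\setminus N(v)$ (or to last-resorts/dummies standing in for $\bot$) and in $N^*$ to clones representing $N(v)\setminus M(v)$ (or to $\bot$-substitutes). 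Since $|M(v)\setminus N(v)| = |N(v)\setminus M(v)|$, I can choose the clone-pairing in $N^*$ so that the induced bijection between "$M^*$-partner of clone $v_i$" and "$N^*$-partner of clone $v_i$" is exactly $\cor_v$.

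The second step is the bookkeeping that $wt(N^*) = \Delta(N,M,\cor)$. Each true edge $(a_i,b_j)\in N^*\setminus M^*$ has weight $vote_a(b_j,M^*(a_i)) + vote_b(a_i,M^*(b_j))$; summing over all of $N^*$, the contribution of the $a$-side of these edges is $\sum_{a}\sum_{b_j\in N^*(\text{clones of }a)} vote_a(b_j, M^*(a_i))$, which — by the clone-pairing chosen above and the convention that clones matched to last-resorts/dummies stand for $\bot$, with an edge to a last-resort or dummy having weight $-1$ exactly when the clone was matched to a true vertex in $M^*$ — is precisely $\sum_a vote_a(N,M,\cor_a)$. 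I need to be careful with the three kinds of non-true edges: (a) a clone $v_i$ with $M^*(v_i)$ a true vertex but $N^*(v_i)$ a last-resort/dummy contributes $wt=-1$, matching $vote_v$ of $\bot$ against a real partner; (b) a clone with $M^*(v_i)$ a last-resort/dummy and $N^*(v_i)$ a true vertex contributes $wt=0$ by the second weight rule, which is correct because $v$ is indifferent between the $\bot$ it held in $M$ and... wait, here I must instead argue this case is accounted for on the other endpoint's side, or re-examine: a clone of $v$ that was matched to $\bot$ in $M^*$ and to a real partner in $N^*$ should contribute $+1$ to $v$'s vote, so I must check the weight convention assigns this correctly — the edge $(v_i, \text{true partner})$ is a true edge, and $vote_v(\text{partner}, M^*(v_i)) = vote_v(\text{partner}, \bot) = +1$ is captured there, so there is no double counting; (c) clones matched to $\bot$-substitutes on both sides contribute $0$, correctly. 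Then I sum the $b$-side contributions symmetrically, and add them to get $\Delta(N,M,\cor) = \sum_v vote_v(N,M,\cor_v) = wt(N^*)$.

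The main obstacle I anticipate is the matching of "$\bot$ slots" across $M^*$ and $N^*$: the numbers of last-resorts and dummy vertices, and which clones are adjacent to last-resorts, differ between the two matchings, so I have to verify that the freedom used in Lemma~\ref{lem:crit2perf} (prioritizing non-last-resort clones, and the fact that $\Dfa{N}\le\Dfa{M}$, $\Dfb{N}\le\Dfb{M}$) is enough to realize \emph{any} prescribed $\cor_v$ for every $v$ simultaneously, and that the weight contributions of $\bot$-to-$\bot$ and $\bot$-to-real transitions line up with the voting rule with no off-by-one error. A clean way to handle this is to fix $\cor$ first, then build $N^*$ clone-by-clone for each vertex $v$ so that the $i$-th clone of $v$ is matched in $N^*$ to (a clone of) $\cor_v(u,M,N)$ whenever the $i$-th clone is matched to (a clone of) $u$ in $M^*$, with $\bot$ treated as a formal symbol realized by last-resorts/dummies, and only afterwards check that the resulting $N^*$ is a legal matching in $G_M$ using Lemma~\ref{lem:crit2perf}'s counting. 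Everything else is a routine edge-by-edge weight summation.
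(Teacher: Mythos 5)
Your plan is essentially the paper's own proof: the paper also fixes $\cor$ first and builds $N^*$ clone-by-clone, pairing the clone of $a$ whose $M^*$-partner is (a clone of) the $\cor_a$-image with the clone of $b$ chosen via $\cor_b$, realizing $\bot$ by dummy vertices (preferred) and last-resorts, and then does exactly the edge-by-edge vote summation you describe. The only detail worth making explicit when you verify legality is that $\dL$-perfectness uses the equalities $\Dfa{N}=\Dfa{M}$ and $\Dfb{N}=\Dfb{M}$ (both matchings are critical, by Lemma~\ref{lem:CriticalM} and Claim~\ref{lem:noLessDef}), not just the inequalities from Lemma~\ref{lem:crit2perf}.
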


\begin{proof}
      For the given critical matching $N$ in $G$, we construct an $(\AA' \cup \BB'\cup\dL)$-perfect matching $N^*$ in $G_M$ and show that $wt(N^*)=\Delta(N, M,\cor)$, where $wt(N^*)$ denotes the sum of the weights of the edges in $N^*$. We construct $N^*$ using the correspondence function $\cor$. We find appropriate clones $a_i$ of $a$ and $b_j$ of $b$ corresponding to each edge $(a,b)\in N$, where $(a_i,b_j)\in E' \cap N^*$. For a deficient $a$ and a deficient $b$ in $N$, at least $q^-(a)-|N(a)|$ and at least $q^-(b)-|N(b)|$ many clones remain unmatched in $N^*$. We match these clones to distinct dummy vertices in $\dL$. Since $N$ is critical matching $\Dfa{N}=\Dfa{M}=|\dLA|$ and $\Dfb{N}=\Dfb{M}=|\dLB|$. Thus, all the dummy vertices in $\dL=\dLA\cup \dLB$ get matched and hence $N^*$ is $\dL$-perfect matching. After matching these clones to dummy vertices, we say that none of the vertices remains deficient as at least $q^-(v)$ many positions of $v\in\AA\cup\BB$ are occupied by true or dummy vertices. For an under-subscribed $a$ and an under-subscribed $b$ in $N$, some clones $a_k$ and $b_k$ remains unmatched in $N^*$. We add $(a_k,\ell_{a_k})$ and $(b_k,\ell_{b_k})$ edges in $N^*$ to make $N^*$ an $(\AA' \cup \BB'\cup \dL)$-perfect matching. Below we give a formal description for constructing $N^*$ using $M$, $M^*$, $N$ and $G_M$. Initially, $N^*(v)=\bot$ for all $v\in\AA'\cup\BB'$.
      \begin{itemize}
          \item[(i)] For each edge $e=(a,b)\in M \cap N$, if $(a_i,b_j)\in M^*$ then we add the edge $(a_i,b_j)$ to $N^*$. That is, we set $N^*(a_i)=b_j$ and $N^*(b_j)=a_i$.
          \item[(ii)] For every edge $(a,b)\in N\setminus M$, we find appropriate clones $a_i$ of $a$ and $b_j$ of $b$ as given below and add $(a_i,b_j)\in N^*$. Note that for an edge $(a,b)\notin M$, $G_M$ has a complete bipartite graph between all clones of $a$ and all clones of $b$. Also, there must exist some unmatched clones of $a$ and unmatched clones of $b$ because we have $q^+(a)$ and $q^+(b)$ many clones of $a$ and $b$ respectively. While evaluating the votes, $a$ uses the correspondence function $\cor_a$. 
          \begin{itemize}
              \item If $\cor_a(b,N,M)=b'$ then $(a,b')\in M$ which in turn implies $(a_i,b'_k)\in M^*$ for some $i$ and $k$. To find the index $j$ we use correspondence function $\cor_b$. Let $\cor_b(a,N,M)=a'$ then $M^*$ contains an edge $(a'_p,b_r)$ for some $p$ and $r$. We let $j=r$ and include the edge $(a_i,b_j)$ to $N^*$. If $\cor_b(a,N,M)=\bot$ then we 
              choose a clone $b_k$ of $b$ such that $b_k$ is unmatched so far in $N^*$, that is, $N^*(b_k)=\bot$, and $M^*(b_k)=d\in \dL$. We add the edge $(a_i,b_k)$ in $N^*$. If no such $b_k$ exists then we choose a clone $b_k$ of $b$ such that $N^*(b_k)=\bot$ and $M^*(b_k)\in \tLB$. We add the edge $(a_i,b_k)$ to $N^*$.
              
              \item If $\cor_a(b,N,M)=\bot$ then we choose $a_i$ such that $N^*(a_i)=\bot$ and $M^*(a_i)\in \dLA$. If there is no such $a_i$ then we arbitrarily choose a clone $a_{i'}$ such that $M^*(a_{i'})\in \tLA$ and $N^*(a_{i'})=\bot$. Thus, we find the index $i=i'$. To determine the index $j$ we use  $\cor_b$. Let $\cor_b(a,N,M)=a'$ then $M^*$ contains an edge $(a'_p,b_r)$ for some $p$ and $r$. We consider $j=r$ and include the edge $(a_i,b_j)$ in $N^*$. If $\cor_b(a,N,M)=\bot$ then we add $(a_i,b_j)$ to $N^*$ for some $j$ such that $N^*(b_j)=\bot$ and $M^*(b_j)\in \dLB$. If there is no such $b_j$ then we arbitrarily choose a clone $b_k$ such that $N^*(b_k)=\bot$ and $M^*(b_k)\in \tLB$. We add  $(a_i,b_k)$ to $N^*$.
          \end{itemize}
          
          \item[(iii)] Consider a vertex $v\in\AA\cup\BB$ such that $v$ is deficient in $N$. At most $q^-(v)-|N(v)|$ many clones of $v$ can be left unmatched from $\AA'$ (or $\BB'$) in $G_M$ which are not adjacent to last-resorts. We select all the unmatched clones of $v$ which are not adjacent to any last-resorts and match them to arbitrary but distinct dummy vertices in $\dL$. We add the corresponding edges to $N^*$.
          
          \item[(iv)] For any vertex $v_k\in \AA' \cup \BB'$ that is left unmatched in the above step, we select an arbitrary but unmatched dummy vertex $d_j$ if it exists, and add the edge $(v_k,d_j)$ to $N^*$. If there does not exist any unmatched dummy then we select an arbitrary but distinct unmatched last-resort, say $\ell_{v_j}$,  if it exists, and add the edge $(v_k,\ell_{v_j})$ to $N^*$.
        \end{itemize} 
 Now we show that $N^*$ is an $(\AA'\cup\BB'\cup\dL)$-perfect one-to-one matching in $G_M$. Note that $N$ is a critical matching in $G$. If $|N(v)|\ge q^-(v)$ for a vertex $v\in \AA'\cup\BB'$ then we have sufficient last-resorts ($q^+(v)-q^-(v)$ many) to match all the clones of $v$. If $|N(v)|<q^-(v)$ for a vertex $v\in \AA'\cup\BB'$ then because of step (i) and (ii) at least $|N(v)|$ many clones of $v$ get matched to some true clones. We use similar argument as in the proof of Lemma~\ref{lem:crit2perf} to show that $N^*$ is an $(\AA'\cup\BB')$-perfect one-to-one matching. Note that $N$ and $M$ both are critical matchings. Thus, $\Dfa{N}=\Dfa{M}=|\dLA|$ and $\Dfb{N}=\Dfb{M}=|\dLB|$. Moreover, all dummy vertices in $\dLA$ are adjacent to all clones in $\AA'$, and all dummy vertices in $\dLB$ are adjacent to all clones in $\BB'$. First, we use dummy vertices to match clones in steps (iii) and (iv). Thus, last-resorts are used only when all dummy vertices are matched which implies that all the dummy vertices are also matched in $N^*$. Hence, each $v\in \AA'\cup \BB'\cup\dL$ are matched in $N^*$. It is also easy to see that $N^*$ is a one-to-one matching in $G_M$. 
 
 Next, we compute the weight of $N^*$ and show that it is $\Delta(N,M,\cor)$. Equation~(\ref{eqn:sumCorr}) below follows from Equation~(\ref{eqn:cloneCorr}) using the fact that the matching $N^*$ is an $(\AA'\cup\BB')$-perfect and each $v\in \AA\cup\BB$ has exactly $q^+(v)$ many clones matched in $N^*$.
\begin{eqnarray}
        wt(N^*) &=&  \sum_{e\in N^*}wt(e)\\ &=&\sum_{(a_i,b_j)\in N^*}wt(a_i,b_j) + \sum_{(v_k,\ell_{v_k})\in N^*}wt(v_k,\ell_{v_k}) + \sum_{(v_k,d_m)\in N^*}wt(v_k,d_m)\\
        & =&\sum_{(a_i,b_j)\in N^*}(vote_{a}(N^*(a_i),M^*(a_i)) + vote_{b}(N^*(b_j),M^*(b_j))) \nonumber\\ 
        & & \quad+ \sum_{(v_k,\ell_{v_k})\in N^*}vote_{v_k}(\ell_{v_k},M^*(v_k)) + \sum_{(v_k,d_m)\in N^*}vote_{v_k}(d_m,M^*(v_k))\label{eqn:cloneCorr}\\
  		& =& \sum_{v\in \AA\cup\BB} \sum_{i=1}^{q^+(v)} vote_v(N^*(v_i),M^*(v_i))\label{eqn:sumCorr}\\
        & =&\sum_{v\in \AA\cup\BB} vote_v(N,M,\cor_v)
        = \Delta(N,M,\cor)
\end{eqnarray} 
 \end{proof}

Note that $M$ is a critical matching (Lemma~\ref{lem:CriticalM}). Thus, if $\Delta(N,M,\cor)\le 0$ for every critical matching $N$, then matching $M$ is popular amongst critical matchings. Hence, it suffices to show that every $(\AA' \cup \BB'\cup \dL)$-perfect one-to-one matching in $G_M$ has weight at most $0$. 
To prove this, we write an LP for the \emph{maximum weight} $(\AA' \cup \BB'\cup \dL)$-perfect one-to-one matching and establish a dual feasible solution with value zero.

\subsection{Linear program and its dual}
Given the weighted graph $G_M$ we use the standard linear program (LP)  to compute a maximum weight $(\AA' \cup \BB'\cup\dL)$-perfect matching in $G_M$. 

The LP and its dual (dual-LP) are given below. For the (primal) LP we have a variable $x_e$ for every edge in $E'$. We let $\delta(v)$ denote the set of edges incident on the vertex $v$ in the graph $G_M$. 

\begin{eqnarray*}
\mbox{\bf LP:}  \hspace{0.5in}	&\max& \ \ \  \sum\nolimits_{e\in E'} wt(e) \cdot x_e \\ 
	\mbox{ subject to:}  & & \\   
	\sum\nolimits_{e\in \delta(v)}x_e &=& 1\ \ \ \forall v \in \  \AA'\cup \BB'\cup\dL\\
	 \sum\nolimits_{e\in \delta(\ell_{v})}x_e &\leq& 1 \ \ \ \forall \ell_{v} \in \tLA\cup\tLB \\
	 \ \ \ x_e &\geq& 0 \ \ \ \forall e\in E' \qquad
\end{eqnarray*}
We obtain the dual of the above LP by associating a variable $\alpha_v$ for every $v \in \AA' \cup \BB' \cup \tL\cup\dL$.

\begin{eqnarray}
\mbox{\bf dual-LP:} \hspace{0.05in}	&& \min \sum\limits_{v\in \AA'\cup \BB'\cup \dLA\cup \dLB \cup \tLA \cup \tLB} \alpha_{v}  \nonumber \\
	\mbox{subject to:}  \nonumber \\
  \alpha_{a_i} + \alpha_{b_j} &\ \geq\ & wt(a_i,b_j) \hspace{0.05in}\ \ \  \forall (a_i,b_j)\in E'\ \mbox{ where }  a_i \in \AA', b_j \in \BB' \label{eqn:dual1} \\%
	 \alpha_{\ell} + \alpha_{b_j} &\ \geq\ & wt(\ell,b_j) \hspace{0.05in}\ \ \ \ \  \forall (\ell, b_j) \in E'\ \mbox { where } \ell \in \tLB, b_j \in \BB' \label{eqn:dual2}\\
	 \alpha_{a_i} + \alpha_{\ell} &\ \geq\ & wt(a_i,\ell) \hspace{0.05in}\ \ \ \ \  \forall (a_i,\ell) \in E'\ \mbox{ where } \ell \in \tLA, a_j \in \AA' \label{eqn:dual3}\\
	 \alpha_{a_i} + \alpha_{d} &\ \geq\ & wt(a_i,d) \hspace{0.05in}\ \ \ \ \  \forall (a_i,d) \in E'\ \mbox{ where } d \in \dLA, a_i \in \AA' \label{eqn:dual4}\\
	 \alpha_{d} + \alpha_{b_j} &\ \geq\ & wt(d,b_j) \hspace{0.05in}\ \ \ \ \ \forall (d,b_j) \in E'\ \mbox{ where } d \in \dLB, b_j \in \BB' \label{eqn:dual5}\\
	 \alpha_{\ell} &\ \geq\ & 0 \hspace{0.7in} \ \ \ \forall \ell \in \tL \label{eqn:dual6}\ \   
\end{eqnarray}

\subsection{Dual assignment and its correctness} \label{sec:dfeas}
Now, we present an assignment of values to the dual variables of the dual-LP below. The dual assignment is shown in Figure~\ref{fig:maxlevelgraph} in blue-colored text. We prove in Lemma~\ref{lem:dSumZero} that the proposed dual assignment is feasible and the sum of the dual values is zero. 

\begin{itemize}
\item  Set $\alpha_{v}=0$ if $v\in \tL$ or $v\in \AA'\cup \BB'$ such that  $M^*(v)\in\tL$.
\item Set $\alpha_{a_i}=2x+1\ \ \ $ $\forall$ $a_i\in \AA'_{\T-x}$, where $-(\S+1)\le x \le \T$ and $M^*(a_i)\notin \tLA$.
\item Set $\alpha_{b_j}=-(2x+1)\ \ \ $ $\forall$ $b_j\in \BB'_{\T-x}$, where $-(\S+1)\le x \le \T$ and $M^*(b_j)\notin \tLB$.
\end{itemize}

\begin{lemma}\label{lem:dSumZero}
The above dual assignment is feasible, and the sum of the dual values is zero.
\end{lemma}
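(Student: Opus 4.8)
The plan has two parts: first verify that the sum of the dual values is zero, then verify feasibility of every dual constraint.

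\emph{Sum is zero.} The plan is to pair up the dual values across the horizontal edges of $M^*$. For a true edge $(a_i,b_j)\in M^*$ we have $a_i\in\AA'_{\T-x}$ and $b_j\in\BB'_{\T-x}$ for the same $x$ (both endpoints sit at the same level by the partition rule~\ref{itm:part1}), so $\alpha_{a_i}+\alpha_{b_j}=(2x+1)+(-(2x+1))=0$. For an edge of $M^*$ incident to a last-resort, say $(v_k,\ell_v)$, both endpoints get dual value $0$ by the first bullet, so again the contribution is $0$. For an edge of $M^*$ incident to a dummy vertex, say $(a_i,d)$ with $d\in\dLA$: here $a_i\in\AA'_{\S+\T+1}$ (partition rule~\ref{itm:part2}), i.e.\ $x=-(\S+1)$, so $\alpha_{a_i}=2(-(\S+1))+1=-(2\S+1)$, and $d\in\BB'_{\S+\T+1}$ so $\alpha_{d}=-(2(-(\S+1))+1)=2\S+1$; the pair sums to $0$. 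Symmetrically for $d\in\dLB$ matched to $b_j\in\BB'_0$ (so $x=\T$): $\alpha_{b_j}=-(2\T+1)$ and the dummy in $\AA'_0$ gets $2\T+1$. Since $M^*$ is $(\AA'\cup\BB'\cup\dL)$-perfect and the only vertices not covered by $M^*$ are unmatched last-resorts (which have $\alpha=0$), every vertex of $G_M$ lies in exactly one such pair or is an unmatched last-resort, so $\sum_v\alpha_v=0$.

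\emph{Feasibility.} I would check each constraint type against the weight bounds from Lemma~\ref{lem:edgewts} and the structural facts about which edges exist (Lemma~\ref{pr:noedge}: no steep downward edges; Lemmas~\ref{lem:noedgeLresC},~\ref{lem:noedgeLresS}: restrictions on last-resort adjacencies). For a true edge $e=(a_i,b_j)$ with $a_i\in\AA'_{\T-x}$, $b_j\in\BB'_{\T-y}$: by Lemma~\ref{pr:noedge} the edge exists only if the $\AA'$-level is at most one more than the $\BB'$-level, i.e.\ $\T-x\le(\T-y)+1$, i.e.\ $y\le x+1$. If $y=x+1$ the edge is of the form $\AA'_{\T-x}\times\BB'_{\T-x-1}$, so by Lemma~\ref{lem:edgewts}(\ref{itm:y1}) $wt(e)=-2$, while $\alpha_{a_i}+\alpha_{b_j}=(2x+1)-(2(x+1)+1)=-2$, so the constraint holds with equality. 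If $y=x$, Lemma~\ref{lem:edgewts}(\ref{itm:y2}) gives $wt(e)\le0=\alpha_{a_i}+\alpha_{b_j}$. If $y<x$ (a genuine upward edge), Lemma~\ref{lem:edgewts}(\ref{itm:y3}) gives $wt(e)\le2$, and $\alpha_{a_i}+\alpha_{b_j}=(2x+1)-(2y+1)=2(x-y)\ge2$, so the constraint holds. One must also handle the cases where an endpoint clone is matched to a last-resort in $M^*$ (then its $\alpha$ is $0$ rather than $2x+1$): for such a clone $v$ we have $|M(v)|>q^-(v)$ or $v$ is a clone matched to a last-resort with $|M(v)|\le q^-(v)$, and Lemma~\ref{lem:merged}(\ref{lem:defStudQminus1}) forces its level into $\{\T,\T+1\}$, i.e.\ $x\in\{0,-1\}$, so $2x+1\in\{1,-1\}$; combining with $wt(e)\le2$ and the $\alpha$ of the other endpoint, the constraint still goes through — this is the case that needs the most care.

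For constraints involving a last-resort $\ell\in\tLB$ adjacent to $b_j\in\BB'$: by Lemma~\ref{lem:noedgeLresS} such an edge can exist only if $b_j\notin\BB'_x$ for $x<\T$, so $b_j\in\BB'_{\T-x}$ with $x\le0$, giving $\alpha_{b_j}=-(2x+1)\ge-1$ when $M^*(b_j)\notin\tLB$, or $\alpha_{b_j}=0$ when $M^*(b_j)\in\tLB$. Since $\alpha_\ell=0$ and $wt(\ell,b_j)\in\{0,-1\}$ (it is $0$ if $M^*(b_j)\in\tL\cup\dL$ and $-1$ otherwise, but in the latter case $M^*(b_j)$ is a true clone so $\alpha_{b_j}\ge-1$ only if... ) — here one checks: if $wt(\ell,b_j)=-1$ then $M^*(b_j)\notin\tL\cup\dL$, and by Lemma~\ref{lem:noedgeLresS} plus the partition, $b_j$ is a clone matched to a last-resort or at level $\ge\T$; a short case analysis shows $\alpha_{b_j}\ge-1=wt(\ell,b_j)=\alpha_\ell+wt$, so feasibility holds. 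The symmetric argument handles $\ell\in\tLA$ adjacent to $a_i\in\AA'$ using Lemma~\ref{lem:noedgeLresC}. Finally, dummy-vertex constraints: a dummy $d\in\dLA$ is adjacent to every clone $a_i\in\AA'$, with $wt(a_i,d)=-1$ unless $M^*(a_i)\in\tL\cup\dL$; since $\alpha_d=2\S+1$ and $\alpha_{a_i}\ge-(2\S+1)$ for every clone (the lowest possible value, attained only at level $\S+\T+1$), we get $\alpha_{a_i}+\alpha_d\ge0\ge wt$, and when $wt=-2$ one uses that $M^*(a_i)$ being non-last-resort forces, via Lemma~\ref{pr:vertexCloneG}(\ref{itm:partprop7}), that if $a_i$ is at the top level its only neighbours are in $\BB'_{\S+\T+1}$, so it is not adjacent to $d$ unless $d$ is at the top level too — consistent with the construction. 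The main obstacle I anticipate is the bookkeeping in the mixed cases, where one endpoint of an edge is a clone whose $M^*$-partner is a last-resort (so its $\alpha$ is $0$, decoupled from its level index): one must repeatedly invoke Lemma~\ref{lem:merged}(\ref{lem:defStudQminus1}),~(\ref{lem:defbQm1}) to pin down that such clones live only at levels $\T$ or $\T+1$, and then check the handful of resulting $(x,y)$ combinations by hand.
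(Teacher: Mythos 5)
Your proposal is correct and follows essentially the same route as the paper: verify the zero sum by pairing dual values along the edges of $M^*$, and verify each dual constraint using Lemma~\ref{lem:edgewts} together with Lemma~\ref{pr:noedge}, Lemmas~\ref{lem:noedgeLresC} and~\ref{lem:noedgeLresS}, and the extreme values $\pm(2\S+1)$, $\pm(2\T+1)$ for the dummy-vertex constraints. A couple of side remarks are inessential or slightly off (clone--dummy edges have weight in $\{0,-1\}$, so no ``$wt=-2$'' case arises, and the levels $\T,\T+1$ of last-resort--matched clones come from the partition construction rather than Lemma~\ref{lem:merged}(\ref{lem:defStudQminus1}); also, since the override for such clones only raises $\alpha$ from $-1$ to $0$, that mixed case is immediate), but none of this affects the validity of the argument.
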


\begin{proof}
Recall that if an edge $(u,v)\in M^*$ then, by construction of $G_M$, $wt(u,v)=0$. We observe that our dual assignment has a property that, if $(u,v)\in M^*$ and one end point, say $u$, is assigned a dual value $\alpha_u=k$ then the other end point $v$ is assigned the dual value $\alpha_v=-k$. Therefore, with respect to dual inequality every matched edge satisfies the dual inequality. This immediately implies that the sum of dual values is zero because $M^*$ is an $(\AA'\cup \BB'\cup \dL)$-perfect matching and all last-resorts are assigned dual values equal to $0$. Hence, it follows that $\sum_{v\in \AA'\cup \BB'\cup \dL\cup\tL}\alpha_v=0$. This completes one part of the proof. In the rest of the proof we show that for each edge not in $M^*$ the respective dual inequalities are satisfied.

We note that if an edge is unmatched then it can be one of the three types (i) both endpoints are true clones, (ii)  one endpoint is a last-resort and, (iii) one endpoint is a dummy vertex. 

\noindent\textbf{Both the endpoints are true clones}: Here, we show that inequality~(\ref{eqn:dual1}) of the dual LP is satisfied for each edge $(a_i,b_j)\notin M^*$ for our dual assignments. Lemma~\ref{pr:noedge} states that there is no steep downward edge in $G_M$. Therefore, the edges in $G_M$ can be of three kinds -- one level downward,  belonging to the same level, or upward. We consider all these three cases below and use Lemma~\ref{lem:edgewts}.

\begin{itemize}
		\item[-] \textbf{One level downward edges}: Let $e$ be an edge such that $e\in \AA'_{\T-(x-1)}\times \BB'_{\T-x}$ where $-\S\le x\le \T$. By using Lemma~\ref{lem:edgewts}(\ref{itm:y1}), we know that such an edge has weight equal to $-2$. As per our dual assignment $\alpha_{a_i} +\alpha_{b_j}=(2(x-1)+1) + -(2x+1)=2x-2+1-2x-1=-2\geq wt(e)$. Thus, the dual inequality is satisfied.
		\item[-] \textbf{Edges belonging to the same level}: Let $e$ be an edge such that $e\in \AA'_{\T-x}\times \BB'_{\T-x}$  where $-(\S+1)\le x\le \T$. By using Lemma~\ref{lem:edgewts}(\ref{itm:y2}), we know that $wt(e)\le 0$.  As per our dual assignment $\alpha_{a_i} +\alpha_{b_j} = (2x+1) -(2x+1)=0\ge wt(e)$. Thus, the dual inequality is satisfied.
		\item[-] \textbf{Upward edges}: Let $e$ be an edge such that $e\in \AA'_{\T-x}\times \BB'_{\T-y}$ where $-(\S+1)\le x\le \T$ and $y<x$. By using Lemma~\ref{lem:edgewts}(\ref{itm:y3}), we know that $wt(e)\le 2$. As per our dual assignment $\alpha_{a_i} +\alpha_{b_j} = (2x+1) - (2y+1)=2(x-y)\ge 2 \ge wt(e)$. Thus, the dual inequality is satisfied.
	\end{itemize}

\noindent\textbf{One endpoint is a last-resort}: Now, we consider unmatched edges of $G_M$ whose one endpoint is a last-resort.
By construction we know that $wt(\ell,b_j)$ for $\ell\in\tLB$ is at most~0. Using Lemma~\ref{lem:noedgeLresS} we see that such $b_j$ can only be in $\BB'_x$ for $x\ge \T$. If $b_j\in \BB'_x$ for $x> \T$ then $\alpha_{b_j}$ is positive and we are done. Thus, for such an edge  Ineq.~(\ref{eqn:dual2}) is satisfied. If $b_j\in \BB'_{\T}$ then $b_j$ can be matched to a last-resort or to a true clone. If $b_j$ is matched to a last-resort, that is $(\ell',b_j)\in M^*$ for some $\ell'\in\tLB$, then $\alpha_{b_j}=0$  and $wt(\ell,b_j)=0$. Thus, for such an edge  Ineq.~(\ref{eqn:dual2}) is satisfied. If $b_j$ is matched to a true clone, that is $(a_i,b_j)\in M^*$ for some $a_i\in\AA'$, then $\alpha_{b_j}=-1$ and $wt(\ell,b_j)=-1$. Thus, Ineq.~(\ref{eqn:dual2}) is satisfied.

Now we consider the case where an edge is of the form $(a_i,\ell)$ where $\ell\in\tLA$. By construction of $G_M$, $wt(a_i,\ell)$ is at most~0  for all $\ell\in\tLA$. By Lemma~\ref{lem:noedgeLresC} we know that such $a_i$ can only be in $\AA'_x$ for $x\le \T+1$. If $a_i\in \AA'_x$ for $x\le \T$ then $\alpha_{a_i}$ is positive and we are done. Thus, for such an edge  Ineq.~(\ref{eqn:dual3}) is satisfied. So let us assume $a_i\in \AA'_{\T+1}$. Then $a_i$ can be matched to a last-resort or to a true clone. If $a_i$ is matched to a last-resort, that is $(a_i,\ell')\in M^*$ for some $\ell'\in\tLA$, then $\alpha_{a_i}=0$ and $wt(a_i,\ell)=0$. This implies that the  Ineq.~(\ref{eqn:dual3}) is satisfied.  If $a_i$ is matched to a true clone, that is $(a_i,b_k)\in M^*$ for some true clone $b_k\in\BB'$, then $\alpha_{a_i}=-1$ and $wt(a_i,\ell)=-1$.  Thus, for such an edge  Ineq.~(\ref{eqn:dual3}) is satisfied.

\vspace{0.1in}

\noindent\textbf{One endpoint is a dummy vertex}: Now, we consider unmatched edges of $G_M$ whose one endpoint is a dummy vertex. Let us first consider an edge $e$ of the form $(a_i,d)$ where $d\in\dLA$. Note that by the construction of $G_M$, $wt(e)\le 0$. Also note that by construction of $G_M$, all the dummy vertices in $\dLA$ are in $\BB'_{\S+\T+1}$.  This implies that all such edges are of the form $\AA'_{y}\times \BB'_{\S+\T+1}$ where $0\le y\le \S+\T+1$. As per our dual assignment, the minimum possible $\alpha$-value for a vertex in $\AA'_y$ for $y\le \S+\T+1$ is $-2\S-1$. Thus, $\alpha_{a_i} +\alpha_{d}\ge  -2\S-1 + (2\S+1)=0\ge wt(e)$. Thus, Ineq.~(\ref{eqn:dual4}) is satisfied.

Now, let us consider an edge $e$ of the form $(b_j,d)$ where $d\in\dLB$. By the construction of $G_M$, $wt(e)\le 0$. Also note that by construction of $G_M$, all the dummy vertices in $\dLB$ are in $\AA'_{0}$. This implies that all such edges are of the form $\AA'_{0}\times \BB'_{y}$ where $0\le y\le \S+\T+1$. As per our dual assignment, the minimum possible $\alpha$-value for a vertex in $\BB'_y$ for $y\le \S+\T+1$ is $-2\T-1$. Thus, $\alpha_{d} +\alpha_{b_j}\ge  2\T+1 -2\T-1=0\ge wt(e)$. Thus, Ineq.~(\ref{eqn:dual5}) is satisfied. 

 Ineq.~(\ref{eqn:dual6}) holds because all the last-resorts corresponding to a vertex are assigned $\alpha$-values equal to 0.
\end{proof}

 Lemma~\ref{lem:dSumZero} and the weak duality theorem together imply that the optimal value of the primal LP is at most 0. That is, every matching in $G_M$ that matches all vertices in $\AA'\cup \BB'\cup\dL$ has weight at most 0. Thus, by using Lemma~\ref{lem:CriticalM} and Lemma~\ref{lem:corr}, we establish that $M$ is a popular critical matchings. 
 
 Next, we observe one important property of our dual feasible solution $\Vec{\alpha}$ and the matching $M^*$ in Lemma~\ref{lem:dualSettingProp}. Given a dual feasible solution, we say that an edge $e$ is \emph{tight} if the dual inequality corresponding to $e$ is satisfied as equality. Otherwise, we say that the edge $e$ is a \emph{slack} edge.
 \begin{lemma}\label{lem:dualSettingProp}
  All the edges in $M^*$ are tight with respect to our dual feasible solution $\Vec{\alpha}$. Moreover, if a vertex $v$ is matched to a last-resort in $M^*$ then all the edges incident to $v$, except the edges incident to last-resorts, are slack.
 \end{lemma}
 
 \begin{proof}
 First we show that all the edges in $M^*$ are tight. Note that $wt(u,v)=0$ for an edge $(u,v)\in M^*$ where $u,v\in \AA'\cup \BB'\cup \dL\cup\tL$. It is clear from our dual assignment above that $\alpha_u=k$ and $\alpha_v=-k$ for the edge $(u,v)\in M^*$. This implies $\alpha_u+\alpha_v=0=wt(u,v)$. Thus, all the edges in $M^*$ are tight.
 
 Now we prove the next part of the claim. We split the proof based on if $v\in\AA'$ or $v\in\BB'$. Let us first consider $v=a_i\in\AA'$. That is, $a_i\in \AA'$ such that $M^*(a_i)\in \tLA$. Then by construction of $G_M$, $a_i\in \AA'_{\T+1}$. By using Lemma~\ref{pr:noedge}, we know that $a_i$ has no neighbour in $\BB'_x$ for $x\le \T-1$. Since $M^*(a_i)\in \tLA$, by Lemma~\ref{pr:vertexCloneG}(\ref{itm:partprop8}) we note that  $a_i$ cannot have neighbours in $\BB'_{\T}$ as well. This implies all such edges belong to the same level $\T+1$ or are upward. Consider an unmatched edge $(a_i,u)\in \AA'_{\T+1} \times \BB'_x$ for $x\ge\T+1$ such that $u\in \dLA\cup \BB'$. The edge $(a_i,u)$ incident on $a_i$ is a \emph{slack} because of the following reason. 
 \begin{itemize}
     \item If $u\in \BB'_{\T+1}$ then $u$ must be a true clone. This implies that $wt(a_i,u)=0$ (the vote by $b\in\BB$, for which $u$ is a clone, must be negative for $a$ otherwise $(a,b)\in  M$). As per our dual assignment $\alpha_{a_i}+\alpha_{u}= 0+1=1$.
     \item  If $u\in \BB'_x$ for $x\ge\T+2$ then $wt(a_i,u)\le 2$ whereas, $\alpha_{a_i}+\alpha_{u}\ge 0+3=3$.
 \end{itemize}

Now, let us assume that $v=b_j\in \BB'$. That is, $b_j\in\BB'$ such that $M^*(b_j)\in\tLB$.   Then by construction of $G_M$, $b_j\in \BB'_{\T}$. Since $M^*(b_j)\in\tLB$ the vertex $b$, for which $b_j$ is a clone, must be under-subscribed in $M$. We claim that $b_j$ has no neighbour in $\AA'_x$ for $x>\T$. Suppose there is a neighbour $a_k$ in $\AA'_x$ for $x>\T$ such that $a_k$ is a clone of $a\in\AA$. Since $b$ is under-subscribed in $M$, the vertex of $a$ must get matched to $b$ before raising its level above $\T$. Thus, consider an edge $(u,b_j)\in \AA'_x\times \BB'_{\T}$ for $x\le \T$ such that $u\in\dLB\cup\AA'$. The edge $(u,b_j)$  incident on $b_j$ is a \emph{slack} because of the following reason. 
\begin{itemize}
    \item If $u\in \AA'_{\T}$ then $u$ must be true clone. This implies that $wt(u,b_j)=0$ (the vote by $a\in\AA$, for which $u$ is a clone, must be negative for $b$ otherwise $(a,b)\in  M$). As per our dual assignment $\alpha_{a_i}+\alpha_{b_j}=1+0=1$.
    \item If $a_i\in \AA'_x$ for $x\le\T-1$ then $wt(a_i,b_j)\le 2$ whereas $\alpha_{a_i}+\alpha_{b_j}\ge 3+0= 3$.
\end{itemize}
This completes the proof of lemma.
 \end{proof}

 Using Lemma~\ref{lem:dualSettingProp}, we establish in Lemma~\ref{lem:slack} that if a clone is matched to a last resort in our matching $M^*$ then that clone remain matched to last resort in an $(\AA'\cup \BB'\cup\dL)$-perfect one-to-one matching $N^*$ corresponding to a popular critical matching $N$ in $G$. 
 \begin{lemma}\label{lem:slack}
Let $N$ be a popular critical matching in $G$. Let $N^*$ be the corresponding $(\AA'\cup \BB'\cup\dL)$-perfect one-to-one matching  in $G_M$ such that $wt(N^*)=\Delta(N, M,\cor)$. If $M^*(v)\in\tL$ for a  vertex $v\in (\AA'\cup \BB')$ then $N^*(v)\in\tL$. 
\end{lemma}

\begin{proof}
 Using Lemma~\ref{lem:dSumZero}, we conclude that any dual feasible solution with dual-sum value of zero is an optimal dual solution. Thus, the dual values assigned to vertices in $\AA'\cup \BB'\cup \dL\cup\tL$ is an optimal dual solution. The fact that $M$ is a popular critical matching and $N$ is a critical matching in $G$  implies $wt(N^*)=\Delta(N,M,\cor)\le 0$. Since $N$ is also popular critical matching, $wt(N^*)=0$ and thus $N^*$ is a primal optimal solution. This implies the complementary slackness conditions hold. Hence, for each edge $(a_i,b_j)$ in $G_M$ either $\alpha_{a_i}+\alpha_{b_j}=wt(a_i,b_j)$ or $(a_i,b_j)\notin N^*$. 

Let us assume that $v\in \AA'\cup \BB'$ such that $M^*(v)\in \tLA\cup\tLB$. Applying Lemma~\ref{lem:dualSettingProp}, we know that all the edges incident to $v$, except the edges incident to last-resorts, are \emph{slack} and hence, these edges are not in $N^*$. But $N^*$ is an $\AA'\cup\BB'$-perfect matching implying that if $v\in\AA'$ then $N^*(v)\in \tLA$ and, if $v\in\BB'$ then $N^*(v)\in \tLB$. 
\end{proof}
 
 \subsection{Maximum size popular matching amongst critical matching}\label{lem:maxcardpop}
 
 In this section, we show that $ M$ is a maximum cardinality matching amongst all the popular critical matchings, which establishes Theorem~\ref{theo:main}. We use Lemma~\ref{lem:slack} to show that $ M$ is a maximum cardinality matching amongst all the popular critical matchings. We also prove Theorem~\ref{theo:ruralHos} and Theorem~\ref{theo:2by3}.  

\begin{lemma}\label{lem:maxSize}
 The matching $M$ is a maximum cardinality popular critical matching in $G$.
\end{lemma}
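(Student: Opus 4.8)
The plan is to prove $M$ is a maximum cardinality popular critical matching by relating the size of any popular critical matching $N$ to the number of clones that $M^*$ (hence $N^*$) assigns to last-resorts. First I would recall that $|M|$, the number of edges of a many-to-many matching, equals the number of true edges in the one-to-one matching $M^*$, i.e.\ $|M| = |M^*| - (\text{number of clones matched to last-resorts in } M^*) - (\text{number of clones matched to dummy vertices})$. Since every critical matching has the same total deficiency as $M$ (Claim~\ref{lem:noLessDef}), the number of clones matched to dummy vertices is the same for $M^*$ and for any $N^*$ arising from a critical matching $N$ via Lemma~\ref{lem:corr}. Hence comparing $|M|$ with $|N|$ reduces to comparing how many clones of $\AA'\cup\BB'$ are matched to last-resorts in $M^*$ versus $N^*$: fewer clones on last-resorts means a larger matching.

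Next I would invoke Lemma~\ref{lem:slack}: for any popular critical matching $N$, if $M^*(v)\in\tL$ for $v\in\AA'\cup\BB'$, then $N^*(v)\in\tL$ as well. This gives an injection from the set of clones matched to last-resorts in $M^*$ into the set of clones matched to last-resorts in $N^*$ — in fact it shows the former set is contained in the latter. Therefore the number of clones that $N^*$ devotes to last-resorts is at least the number that $M^*$ does. Combined with the previous paragraph (equal dummy-count, and $|\cdot|$ determined by $|\AA'\cup\BB'|$ minus last-resort-count minus dummy-count), this yields $|N| \le |M|$. Since $N$ was an arbitrary popular critical matching, $M$ has maximum cardinality among them; and $M$ itself is a popular critical matching by Lemma~\ref{lem:CriticalM}, Lemma~\ref{lem:corr}, and Lemma~\ref{lem:dSumZero}.

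The main obstacle I anticipate is the bookkeeping that translates ``clones matched to last-resorts'' back into the size of the many-to-many matching cleanly. One has to be careful that $N^*$ is produced by the construction in Lemma~\ref{lem:corr} (so that $wt(N^*)=\Delta(N,M,\cor)=0$ by popularity, making Lemma~\ref{lem:slack} applicable), and that in this construction the number of true-edge clones of each vertex $v$ equals $|N(v)|$ when $|N(v)|\ge q^-(v)$, while deficient vertices contribute exactly $q^-(v)$ true-or-dummy clones; summing over all $v$ and using $\Df{N}=\Df{M}$ is where the equality of dummy-counts enters. I would also want to remark that the argument does not require $N^*$ to be chosen adversarially: Lemma~\ref{lem:slack} holds for the specific $N^*$ of Lemma~\ref{lem:corr}, which is all we need. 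Once the clone-counting identity is set up, the inequality $|N|\le|M|$ is immediate from the containment of last-resort-clone sets.
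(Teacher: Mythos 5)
Your proof is correct and is in substance the same as the paper's: both reduce the size comparison to counting clones matched to last-resorts, using Lemma~\ref{lem:slack} (via the $N^*$ of Lemma~\ref{lem:corr}) together with the fact that $M^*$ and $N^*$ are both $\dL$-perfect with equal dummy counts because $\Df{N}=\Df{M}$. The paper just packages it contrapositively -- for a critical $N$ with $|N|>|M|$ it observes that $N^*$ must match some clone $v_j$ to a true clone while $M^*(v_j)\in\tL$, hence $N^*$ uses a slack edge and $wt(N^*)<0$ by complementary slackness, so such an $N$ is beaten by $M$ and cannot be popular -- which is the same counting-plus-slackness argument you state directly.
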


\begin{proof}
Consider any critical matching $N$ in $G$ such that $|N|>|M|$. Now consider the corresponding one-to-one $(\AA'\cup\BB'\cup\dL)$-perfect matching $N^*$ in the graph $G_M$ such that $wt(N^*)=\Delta(N, M,\cor)$. Since $|N|>|M|$ and $N^*$ and $M^*$ both are $(\AA'\cup\BB'\cup\dL)$-perfect matching there must exists a clone $v_j$ such that $N^*(v_j)\notin\tL$ but $M^*(v_j)\in\tL$. By Lemma~\ref{lem:slack} we claim that $N^*$ uses a slack edge. We know that the dual-feasible solution $\Vec{\alpha}$ is an optimal solution to the dual LP.  The feasible solution corresponding to the  $(\AA'\cup\BB'\cup\dL)$-perfect matching $N^*$ cannot be an optimal solution to the primal LP because it contains a slack edge. This implies $wt(N^*)<0$ because the optimal value of the primal LP is 0. Thus, a critical matching $N$ such that $|N|>|M|$ loses to $M$. Hence, $M$ is a maximum size popular critical matching as no larger size critical matching is more popular than it. 
\end{proof}

Lemma~\ref{lem:CriticalM}, Lemma~\ref{lem:dSumZero} and Lemma~\ref{lem:maxSize} together establish Theorem~\ref{theo:main}. We use following lemma to prove Threorem~\ref{theo:ruralHos} and Theorem~\ref{theo:2by3}. 

\begin{lemma}\label{lem:dummy}
Let $N^*$ be an $(\AA'\cup \BB'\cup\dL)$-perfect one-to-one matching in the graph $G_M$ corresponding to a maximum size popular critical matching $N$ in $G$. Furthermore, assume that $wt(N^*)=\Delta(N, M,\cor)$. If $M^*(u)\in\dL$ for a  vertex $u\in (\AA'\cup \BB')$ then $N^*(u)\in\dL$. 
\end{lemma}

\begin{proof}
For the sake of contradiction let us assume that $M^*(u)=d\in \dL$ but $N^*(u)\notin\dL$. Without loss of generality, assume that $u\in\AA'$. 
By Lemma~\ref{lem:noedgeLresC} we know that $u$ is not adjacent to any last-resort. Using Claim~\ref{cl:property} (below) we show that if such a clone $u$ (that is, $u\in\AA'$ such that $M^*(u)=d\in \dL$ but $N^*(u)\notin\dL$) exists  
then there exists an alternating cycle $\rho$ in  $M^*\oplus N^*$ containing $u$ and another vertex $v_j\in\AA'$ such that $N^*(v_j)=d'\in \dL$ and $M^*(v_j)\in \BB'$. Assuming the existence of such a vertex $v_j$ we will show that the weight of such a cycle in $G_M$ is negative. This will imply that $M$ defeats $N$ in terms of votes along this cycle which contradicts the popularity of $N$.  Now we show that the weight of the cycle $\rho$ is negative.

Since $M^*(v_j)\in \BB'$, the vote by $v_j$ for the edge $(v_j,d')$ must be $-1$. Thus $wt(v_j,d')=-1$. Since $M^*(u)\in\dL$, by construction, $u\in\AA'_{\S+\T+1}$. 
If $\rho$ is restricted to $\AA'_{\S+\T+1}\times \BB'_{\S+\T+1}$ then by Lemma~\ref{lem:edgewts}(\ref{itm:y2}) we know that $\rho$ does not contain any positive weight edge. But the presence of the edge $(v_j,d')$ in $\rho$ ensures that the total weight of $\rho$ is negative. On the other hand, if $\rho$ uses edges other than the edges in $\AA'_{\S+\T+1}\times \BB'_{\S+\T+1}$ then let us assume that it uses $k$ downward edges. Since $\rho$ is a cycle and $G_M$ does not contain steep downward edges (Lemma~\ref{pr:noedge}) it must contain at least one but not more than $k$ upward edges. Recall that the edges in $G_M$ are implicitly directed from left to right and hence edges from $\AA'_{x}\times\BB'_{x-1}$ are called downward edges and edges from $\AA'_{x}\times \BB'_{y}$ for $x>y+1$ are called steep downward edges.

All the downward edges have weight equal to $-2$ (Lemma~\ref{lem:edgewts}(\ref{itm:y1})). All the upward edges have weight at most $2$ (Lemma~\ref{lem:edgewts}(\ref{itm:y3})). Note that $\rho$ contains the edge $(v_j,d')$ having weight equal to $-1$. 
Thus, $\rho$ contains at least $k$ edges other than $(v_j,d')$ each with weight exactly equal to $-2$ and at most $k$ edges each with weight $\le 2$. Therefore, the presence of a negative weight edge $(v_j,d')$ ensures that the weight of $\rho$ is negative. 
This implies that $N$ is not a popular critical matching, a contradiction. This completes the proof of Lemma~\ref{lem:dummy} except for the proof of Claim~\ref{cl:property} which we prove below.
 \end{proof}

\begin{cl}\label{cl:property}
 Let $N^*$ be an $(\AA'\cup \BB'\cup\dL)$-perfect one-to-one matching corresponding to a maximum size popular critical matching $N$ in the graph $G_M$ such that $wt(N^*)=\Delta(N, M,\cor)$. Let $u\in\AA'$ be such that $M^*(u)=d\in\dL$ but $N^*(u)\notin\dL$. Then we have an alternating cycle $\rho$ in $M^*\oplus N^*$ containing $u,N^*(u)$ and $d$.  Moreover, the cycle $\rho$ contains a vertex $v_j\in\AA'$ such that $M^*(v_j)\in\BB'$ and  $N^*(v_j)=d'\in \dL$.
\end{cl}
\begin{proof}
Here we will show the existence of an alternating cycle containing $u, N^*(u)$ and $d$. For the sake of convenience, let us assume that $u=g_0$, $d=h_0$ and $N^*(g_0)=h_1$.  Observe that $\rho'=\langle h_0, g_0,h_1\rangle$ is an alternating path containing vertices $d, u$ and $N^*(u)$. We extend $\rho'$ to get an alternating cycle $\rho=\langle h_0,g_0,h_1,g_1\ldots,h_{i-1},g_{i-1},\ldots, h_k=h_0\rangle$ where $(h_j,g_j)\in M^*$ for $0\le j\le k-1$ and other edges are in $N^*$. 
We claim that $\rho$ does not contain any last resort.  This will eventually imply that $\rho$ must terminate by visiting one of the already visited vertices. Note that the degree of a vertex in $M^*\oplus N^*$ is at most two, and hence $\rho$ must revisit $h_0=d$ completing the cycle.  Thus, we have an alternating cycle containing $d,u, N^*(u)$.

First, we claim that no $g_j$ on $\rho$ is in $\tL$. Suppose for some $0\le j\le k-1$, $g_j\in\tL$. We apply Lemma~\ref{lem:slack} on $v=h_j$ to conclude that $g_{j-1}$ is last-resort. We repeatedly apply this to conclude that $g_0\in\tL$ which contradicts the fact that a last-resort $g_0$ is adjacent to a dummy vertex $h_0$ in $G_M$. Thus, no $g_j$ on $\rho$ is in $\tL$.

Next, we show that no $h_i$ on $\rho$ is in $\tL$. Assume for the sake of contradiction that for some $i$, $h_i\in \tL$.  We consider the following two cases depending on whether $h_i$ is matched in $M^*$ or not-- 

\begin{figure}
\begin{center}
    \scalebox{0.85}{\begin{tikzpicture}[thick,
  lnode/.style={draw=white,fill=black,scale=0.7},tnode/.style={draw=white,fill=red,scale=0.7},
  fsnode/.style={draw,circle,fill=black,scale=0.5},
  every fit/.style={ellipse,draw,inner sep=-2pt,text width=0.5cm},
  -,shorten >= 3pt,shorten <= 3pt]

\node at (-0.2,0) {$\AA'\cup\dLB\cup\tLB$};
\node at (3.2,0) {$\BB'\cup\dLA\cup\tLA$};


  \node[lnode] (g0) at (0,8) {};  \node at (-0.8,8) {$u=g_{0}$};
  \node[lnode] (h0) at (3,8) {};  \node at (3.8,8) {$h_{0}=d$};
  \node[lnode] (g1) at (0,6) {}; \node at (-0.6,6) {$g_{1}$};
  \node[lnode] (h1) at (3,6) {};  \node at (3.6,6) {$h_{1}$};
  \node[fsnode] (h2) at (3,4) {}; \node at (3.6,4) {$h_{2}$};
  \node at (0.4,4) {\ldots}; \node at (2.7,4) {\ldots};
  \node[fsnode] (gi2) at (0,4) {}; \node at (-0.6,4) {$g_{i-2}$};
  \node[fsnode] (hi1) at (3,2) {}; \node at (3.6,2) {$h_{i-1}$};
  \node[fsnode] (gi1) at (0,2) {}; \node at (-0.6,2) {$g_{i-1}$};
  \node[tnode] (hi) at (3,1) {}; \node at (3.6,1) {$h_{i}$};

  \draw[ultra thick,lightgray](g0) -- (h0);
  \draw[ultra thick,black,dashed](g0) -- (h1);
  \draw[ultra thick,black](g1) -- (h1);
  \draw[ultra thick,black,dashed](g1) -- (h2);
  \draw[ultra thick,black,dashed](gi2) -- (hi1);
  \draw[ultra thick,black](gi1) -- (hi1);
  \draw[ultra thick,lightgray,dashed](gi1) -- (hi);

 \node[fsnode] (g1') at (0,7) {};  \node at (-0.8,7) {$g_{1}'$};
  \node[fsnode] (h1') at (3,7) {};  \node at (3.8,7) {$h_{1}'$};
  \node[fsnode] (g2') at (0,5) {}; \node at (-0.6,5) {$g_{2}'$};
  \node[fsnode] (hi1') at (3,5) {}; \node at (3.6,5) {$h_{i-1}'$};
  \node[fsnode] (gi') at (0,3) {}; \node at (-0.6,3) {$g_{i}'$};
  \node[fsnode] (hi') at (3,3) {}; \node at (3.6,3) {$h_{i}'$};
  \node[tnode] (gi1') at (0,1) {}; \node at (-0.6,1) {$g_{i+1}'$};
  \node[blue] at (0.4,5) {\ldots};  \node[blue] at (2.7,5) {\ldots};
  \draw[ultra thick,lightgray,dashed](h0) -- (g1');
  \draw[ultra thick,blue](g1') -- (h1');
  \draw[ultra thick,blue,dashed](g2') -- (h1');
  \draw[ultra thick,blue,dashed](gi') -- (hi1');
  \draw[ultra thick,blue](gi') -- (hi');
  \draw[ultra thick,lightgray,dashed](gi1') -- (hi');
\end{tikzpicture}}
\end{center}

\caption{Alternating path $\rho$ in the case where $h_i\in\tL$ is unmatched in $M^*$. Solid (black/blue/gray) edges denote the edges in $M^*$ whereas dashed (black/blue/gray) edges denote the edges in $N^*$. The two sub-paths $\rho_1$ and $\rho_2$ of $\rho$ are shown in black and blue color, respectively. The vertices shown as black squares are in $\AA'_{\S+\T+1}\cup\BB'_{\S+\T+1}$. The red vertices denote last-resorts. By the construction of $G_M$, $h_i\in \BB_{\T+1}'$ and $g_{i+1}'\in\AA_{\T}'$. }
\label{fig:claim}
\end{figure}

\begin{enumerate}[(i)]
    \item \textbf{$h_i$ is unmatched in $M^*$:} In this case, the path $\rho$ must end at $h_i$. 
    Therefore, we extend $\rho$ towards the other side of the vertex $h_0$ to find the other endpoint of the path $\rho$ (see the blue sub-path in Figure~\ref{fig:claim}). The idea is to show that the other endpoint is also a last-resort which allows us to establish that $N$ is not a popular critical matching, a contradiction. We extend $\rho$ as follows. First, we include the edge $(h_0,N^*(h_0))$ to $\rho$. For convenience, let us assume that $N^*(h_0)=g'_1$. We remark that $g'_1$ is a true clone in $\AA'$ because $h_0$ is a dummy vertex and a dummy vertex is not adjacent to any other dummy vertex or last-resort. Let us consider $M^*(g_1')=h_1'$. Note that $h_1'\notin\tL$, otherwise applying Lemma~\ref{lem:slack} on $v=g_1'$ we get that $h_0\in \tL$, a contradiction to the fact that $h_0\in\dL$. Thus, the path $\rho$ cannot end at $h_1'$.  In a similar manner, we argue that that no $h'_i$ on $\rho$ is in $\tL$.  Therefore, we extend this path using alternate edges of $M^*$ and $N^*$, and conclude that $\rho$ must end at a vertex $g_{i+1}'$ such that $g_{i+1}'\in\tL$. Thus, the path $\rho$ has both its endpoints $h_i$ and $g_{i+1}'$ as last-resorts. 
 
 We observe the following about the levels of the vertices of the path $\rho$. The fact that $h_0\in \dL$ and $M^*(h_0)=g_0$ is a true clone implies that  $g_0\in \AA'_{\S+\T+1}$ and $h_0\in \BB'_{\S+\T+1}$. By Lemma~\ref{pr:vertexCloneG}(\ref{itm:partprop7}) we know that each neighbour of $g_0$, in particular, $h_1$ is in $\BB'_{\S+\T+1}$. Thus, by the construction of $G_M$, $g_1\in \AA'_{\S+\T+1}$. Since $h_i$ and $g_{i+1}'$ are last-resorts, by the construction of $G_M$, $h_i\in \BB'_{\T+1}$ and $g_{i+1}'\in\AA'_{\T}$. Applying Lemma~\ref{lem:noedgeLresC}, we claim that $g_{i-1}\in \AA'_{x}$ for $x\le \T+1$. Similarly, by applying Lemma~\ref{lem:noedgeLresS}, we claim that $h_i'\in \BB'_{x}$ for $x\ge \T$.

Now we show that the matching $M^*$ gains more votes than the matching $N^*$ along the path $\rho$ which establishes that $N^*$ is not a popular critical matching. To see this, consider the two sub-paths $\rho_{1}=\langle g_0,h_1,\ldots,g_{i-1}\rangle$ and $\rho_{2}=\langle g_1',h_1'\ldots,h_i'\rangle$ of $\rho$. In  Figure~\ref{fig:claim}, the sub-path $\rho_1$ is shown in black color and the sub-path $\rho_2$ is shown in blue color. Note that the extreme edges incident to last-resorts and the two other edges $(g_0,h_0)$ and $(h_0,g_1')$ shown in gray color in the figure are not the part of either of $\rho_1$ and $\rho_2$. Let $z_1$  and $z_2$ denote the following.

$$ z_1=\# \mbox{downward\ edges\ in\ }\rho_1 - \# \mbox{upward\ edges\ in\ } \rho_1$$
$$z_2=\# \mbox{upward\ edges\ in\ } \rho_2 - \# \mbox{downward\ edges\ in\ } \rho_2$$

Recall that in $G_M$ there is no \emph{steep} downward edge but the upward edges can be steep. Since $g_0\in \AA'_{\S+\T+1}$ and $h_i\in\BB'_{\T+1}$, the value of $z_1$ must be at least $\S$. Similarly, since the highest possible level for $h_1$ is $\S+\T+1$ and the lowest possible level of $h_i'$ is $\T$, the value of $z_2$ is at most $\S+1$.

By Lemma~\ref{lem:edgewts}(\ref{itm:y1}) we know that the weight of each downward edge is $-2$ and Lemma~\ref{lem:edgewts}(\ref{itm:y3}) implies that the weight of each upward edge is at most $2$. Thus, the total weight of $\rho_1\cup\rho_2$ is at most $+2$. Now, observe that $wt(g_1',h_0)=-1$ because $M^*(g_1')=h_1'$ is a true clone whereas $h_0$ is a dummy vertex. Also, the two extreme edges of $\rho$, $(h_i',g_{i+1}')$ and $(g_{i-1},h_i)$ have weight equal to $-1$ because the corresponding $N^*$-partner is a last-resort but $M^*$-partner is a true clone. Thus, the total weight of $\rho$ is at most $-1$. This implies that $M^*$ gains at least one vote along $\rho$. Thus, switching along $\rho$ gives us a more popular matching than $N$. Therefore, $N$ is not a popular critical matching, a contradiction. Thus, such a path $\rho$ which ends at $h_i\in\tL$ does not exist.

\item \textbf{$h_i$ is matched in $M^*$: }Recall that by the construction of $G_M$ a last-resort is adjacent to only true clones. Since $h_i\in\tL$, $M^*(h_i)=g_i$ is a true clone. By using Lemma~\ref{lem:slack} on the vertex $v=g_i$ we conclude that $h_{i+1}$ is in $\tL$. We continue in this way until we find some $h_j\in \tL$ such that $h_j$ is unmatched in $M^*$. We must find such $h_j$ because we have finite number of vertices in $G_M$ and all $h_p$ for $p\ge i$ is in $\tL$ (using Lemma~\ref{lem:slack} on vertex $v=g_{p}$). This implies that the path $\rho$ ends with an $N^*$-edge at a last-resort, and we get into the previous case.
\end{enumerate}

Thus, we conclude that no $h_i$ on $\rho$ is a last-resort. This completes the proof of the first part of the claim that we have an alternating cycle containing $d,u, N^*(u)$. 


To prove the other part of the claim, we need to show that the cycle $\rho$ contains a vertex $v_j\in \AA'$ such that $M^*(v_j)\in\BB'$ and  $N^*(v_j)=d'\in \dL$. 
We have already established that $\rho=\langle d=h_0,u=g_0,h_1,g_1\ldots,h_{k-1},g_{k-1},h_k=h_0\rangle$ such that no vertex on $\rho$ is a last-resort. By assumption, $N^*(u)=h_1\notin \dL$. Thus, it must be the case that $h_1\in \BB'$. This implies that $\rho$ starts with a dummy vertex $h_0$ and the next vertex $h_1\in \BB'$. Since $\rho$ is a cycle there must exists an index $p$ such that $0\le p\le k-1$, $h_p\in \BB'$ and $h_{p+1}\in \dL$ (sum is considered as modulo $k$). Now consider $g_p=M^*(h_p)$. Note that $M^*(g_p)\in\BB'$ and $N^*(g_p)=h_{p+1}\in \dL$. Thus, $g_p$ is the desired vertex $v_j$.
 \end{proof}

\begin{proofof}{Theorem~\ref{theo:ruralHos}}
Let $N$ be an arbitrary maximum cardinality popular critical matching in the instance $G$. Consider the $(\AA'\cup \BB'\cup \dL)$-perfect  one-to-one matching $N^*$ corresponding to $N$  in the graph $G_M$ such that $wt(N^*)=\Delta(N, M,\cor)$. Since $N$ is a popular critical matching, Lemma~\ref{lem:slack} holds. Thus, for any clone $u\in\AA'\cup\BB'$ if $M^*(u)\in\tL$ then $N^*(u)\in\tL$. Now we apply Lemma~\ref{lem:dummy} to claim that $M^*(u)\in \dL$ implies $N^*(u)\in\dL$. Thus, if $|M(v)| = k$ for any $v \in\AA\cup\BB$ then $|N(v)| = k$. This completes the proof of Theorem~\ref{theo:ruralHos}.
 \qed\end{proofof}

\begin{proofof}{Theorem~\ref{theo:2by3}}
First, note that there must exist a maximum cardinality matching which is also critical (if a critical matching is not a maximum cardinality matching, then we can always augment it to obtain a larger size critical matching). Hence, let us assume that $M_{max}$ is a maximum size critical many-to-many matching. Recall that the many-to-many matching $M$ given by our algorithm is converted into a one-to-one matching $M^*$ (in $G_M$) which is further converted into a matching $M'=M^*\setminus\{(v_k,\ell_{v})\ :\ v_k\in\AA'\cup\BB' \mbox{ and  }\ell_v\in\tL\cup\dL\}$ such that the one-to-one matching $M'$ does not contain any last-resort or dummy vertex (as done in Section~\ref{subsec:critical}). Thus $|M'|=| M|$. This process can be done for any critical matching. Thus, corresponding to the maximum size critical many-to-many matching $M_{max}$ we have a one-to-one matching without any vertices matched to last-resorts or dummy vertices, denoted by $M_{max}'$. Thus, $|M_{max}|=|M'_{max}|$. Now we show that there is no 1-length or 3-length augmenting path with respect to $M'$ in $M'\oplus M_{max}'$ which will imply that $|M'|\ge \frac{2}{3}\cdot |M_{max}'|$.

By Claim~\ref{cl:fully-subscribed}, our matching $M$ is maximal, and hence a 1-length augmenting path with respect to $M'$ does not exist. Now let us consider a 3-length augmenting path. Let $\langle u_1,u_2,u_3,u_4\rangle$ be a 3-length augmenting path  with respect to $M'$ such that $(u_1,u_2)$ and $(u_3,u_4)$ are in $M_{max}'$, and $(u_2,u_3)$ is in $M'$. Without loss of generality, assume that $u_1\in \AA'$ and $u_4\in\BB'$. Since $u_1$ is unmatched in $M'$ and all the unmatched vertices of $\AA'$ are in $\AA'_{\T+1}\cup \AA'_{\S+\T+1}$, $u_1$ must be in $\AA'_{\T+1}\cup \AA'_{\S+\T+1}$. Similarly, since $u_4$ is unmatched and all the unmatched vertices of $\BB'$ are in $\BB'_{\T}\cup \BB'_{0}$ it must be the case that $u_4\in \BB'_{\T}\cup \BB'_{0}$. Recall that $G_M$ does not contain any steep downward edge and for any unmatched vertex $v\in\AA'_{\S+\T+1}$ all the neighbours of $v$ are only in $\BB'_{\S+\T+1}$. Thus, it must be the case that $u_1\in \AA'_{\T+1}$ and $u_4\in \BB'_{\T}$, otherwise the length of the path will be more than three. Using Lemma~\ref{pr:vertexCloneG}(\ref{itm:partprop8}) we claim that $u_2\in \BB'_x$ for $x\ge \T+1$. Since $(u_2,u_3)\in M'$, $u_3\in \AA'_x$ for $x\ge \T+1$. Let $u_3$ be the clone of a vertex $a_1\in\AA$ and $u_4$ be the clone of a vertex $b_1\in\BB$. Then $a_1^t$ must have proposed to $b_1$ during the course of Algorithm~\ref{algo:maxPopSC2LQ}. Note that $b_1$ can reject $a_1$ only if it is fully subscribed which implies all the clones of $b_1$ (including $u_4$) are matched in $M'$. This is a contradiction to the fact that $u_4$ is an unmatched clone in $M'$.
\qed\end{proofof}

\section{Conclusion}\label{sec:conclusion}
In this paper, we studied the many-to-many matching problem with two-sided preferences and two-sided lower quotas.  We showed that a critical matching that is popular amongst all critical matchings always exists. We presented a polynomial time algorithm to compute a maximum cardinality such matching. We also proved a variant of the Rural Hospital theorem where we showed that each vertex matches to the same capacity in every maximum cardinality popular critical matching. Finally, we showed that ignoring lower quotas and popularity conditions only allow at most 50\% more edges. Thus, we extend similar results proved in~\cite{kavitha2014size,NN17,brandl2019two,Kavitha2021,NNRS21} to the general setting.

 \bibliographystyle{elsarticle-num-names} 
 \bibliography{refs.bib}
\appendix
\newpage

\section{Challenges with two-sided lower quotas in many-to-many setting}\label{sec:challenges}
 Here, we comment on the possibility of extending two recent algorithms ~\cite{Kavitha2021,NNRS21} for popular matchings with two-sided lower quotas. Kavitha~\cite{Kavitha2021} considers popular matchings in one-to-one setting and Nasre \etal~\cite{NNRS21} consider popular matchings in many-to-one setting. Both these algorithms propose a reduction where the original instance with two-sided lower quotas is converted into an instance {\em without} lower quotas. The standard Gale-Shapley algorithm is used to obtain a  stable matching in the reduced instance and the edges in the stable matching are mapped to the edges in the original instance.
 The reductions used in the two works differ in the following aspects: -- the reduction in~\cite{Kavitha2021} makes copies of vertices only on one side of the bipartition. That is, vertices in $\AA$ have copies whereas the vertices in $\BB$ do not have copies and this is sufficient in  the one-to-one setting.  On the other hand, the reduction in~\cite{NNRS21} creates copies for vertices in both the partitions. Now we illustrate the difficulties in extending these approaches for many-to-many setting.

  Let $\S$ and $\T$ denote the sum of lower quotas of all the vertices in $\AA$ and $\BB$, respectively.  Let $\AALQ$ and $\BBLQ$ denote the set of \emph{lq}-vertices $a\in\AA$ and $b\in\BB$, respectively. In the reduced instance constructed in~\cite{Kavitha2021}, $\S+\T+2$ copies are created for each vertex in $\AALQ$ and $\T+2$ copies are created for each vertex in  $\AA\setminus\AALQ$ -- the $x$-th copy of a vertex $a$ is denoted as a level-$x$ copy of $a$. In contrast, every vertex in the set $\BB$ has exactly one copy in the reduced instance. Corresponding to each vertex in $\AALQ$ and $\AA\setminus\AALQ$, respectively $\S+\T+1$ and $\T+1$ many dummy vertices are introduced. Thus, the vertex set of the reduced instance, say $H=(\AA'\cup\BB',E')$ is as follows. The set $\AA'$ contains $\S+\T+2$ copies for each $a\in\AALQ$ and $\T+2$ copies of each $a\in\AA\setminus\AALQ$. The set $\BB'$ contains the vertex set $\BB$ along with the dummy vertices corresponding to all the vertices in $\AA$. 
  Preference lists of certain lower level-copies of a vertex in $\AA$ contains only the \emph{lq}-vertices in its original preference list. Preference list of a vertex in the reduced graph is obtained by suitably modifying the original preference list.

In the many-to-many setting each vertex has an associated lower and upper quota. One natural generalization would be to consider the quotas of some copies of each vertex equal to the lower quota of the corresponding vertex and quotas of remaining copies of each vertex equal to the upper quota of the corresponding vertex. For example -- let us consider a vertex $a\in\AA$. The quota of the copy $a^{x_1}$ for $x_1\le\T+1$ can be equal to $q^+(a)$ and that of the copy $a^{x_2}$ for $x_2\ge \T+2$ can be equal to $q^-(a)$. However, since we have only one copy for each $b\in\BB$, it is not immediate how to set the quota for the vertex. Setting its quota equal to the lower quota eliminates several critical matchings in the instance; on the other hand setting it equal to the upper-quota does not allow us to distinguish between the case when the upper quota was equal to the lower quota. This difficulty suggests that we need to have multiple copies of vertices in $\BB$ as well. Extending the reduction using this idea has the same challenge as we face in extending the reduction proposed in \cite{NNRS21}.

Now let us consider the reduction used in~\cite{NNRS21}. We directly give an overview of the modification of that reduction, which works for \emph{maximum} cardinality popular feasible matching (a popular feasible matching is a popular matching amongst all the matchings with deficiency 0). In this reduction, $\S+1$ and $\T+2$ level-copies are created for each vertex in $\AALQ$ and $\BBLQ$, respectively. Only one (resp. two) level-copy is created for a vertex in $\AA\setminus\AALQ$ (resp. $\BB\setminus\BBLQ$). Note that only $\T+1$ and one level-copies were created in~\cite{NNRS21} for vertices in $\BBLQ$ and $\BB\setminus \BBLQ$, respectively. Here, one extra copy for each vertex in $\BB$ is created because we are considering the \emph{maximum size} popular feasible matching. Also, $\S$ and $\T+1$ many dummy vertices are introduced corresponding to each vertex in $\AALQ$ and $\BBLQ$, respectively. The quota of the first level-copy of a vertex $a\in\AA$ is equal to $q^+(a)$, the quotas of first two level-copies of $b\in\BB$ are equal to $q^+(b)$,  the quotas of other level-copies of $v\in\AA\cup\BB$ are equal to $q^-(v)$. Preference lists are defined such that the higher level-copies appear before any lower level-copy of any vertex and, at the same level, the relative ordering of vertices is preserved.  

This reduction can be considered as it is for a many-to-many setting for computing a popular feasible matching. Note that feasible matching is a special case of critical matching. But for a many-to-many instance, the reduction does not have any mechanism to restrict the matching such that the two different level-copies of the same vertex do not get matched to same/different level-copies of another vertex. For example, let us consider two vertices $a\in\AA$ and $b\in\BB$ such that $q^+(a)-q^-(a)$ and  $q^+(b)- q^-(b)\ge 2$. Then the stable matching in the reduced instance may contain edges $(a^{x_1},b^{y_1})$ and $(a^{x_2},b^{y_2})$. Note that $a^{x_1}$ and $a^{x_2}$ are two level-copies of $a$ and $b^{y_1}, b^{y_2}$ are two level-copies of $b$. So $a$ is matched to $b$ at least twice, and it is counted towards the quotas of these level-copies. In fact, the example shown in Figure~\ref{Fig:CounterEx} is such an example.

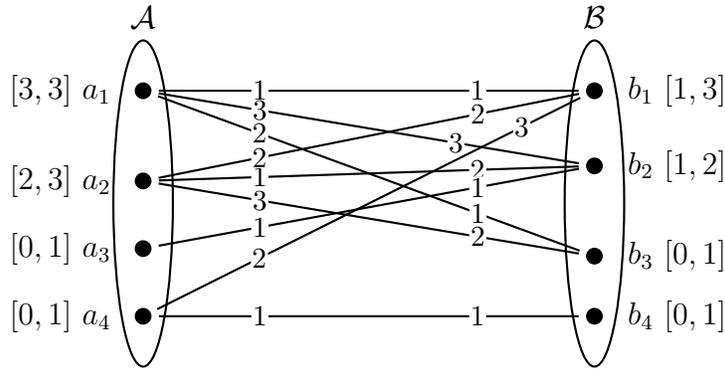
\begin{figure}[ht]
	\centering
	\begin{tikzpicture}[thick,
  lnode/.style={draw=white,fill=white},
  fsnode/.style={draw,circle,fill=black,scale=0.5},
  every fit/.style={ellipse,draw,inner sep=-2pt,text width=0.7cm},
  -,shorten >= 3pt,shorten <= 3pt
]

  \node[fsnode] (a1) at (0,3) {};
  \node[fsnode] (a2) at (0,1.8) {};
  \node[fsnode] (a3) at (0,0.9) {};
  \node[fsnode] (a4) at (0,0) {};
  \node[fsnode] (b1) at (6,3) {};
  \node[fsnode] (b2) at (6,2) {};
  \node[fsnode] (b3) at (6,0.8) {};
  \node[fsnode] (b4) at (6,0) {};
\node at (-1.1,3) {$[3,3]\ a_1$};
\node at (-1.1,1.8) {$[2,3]\ a_2$};
\node at (-1.1,0.9) {$[0,1]\ a_3$};
\node at (-1.1,0) {$[0,1]\ a_4$};
\node at (7.1,3) {$b_1\ [1,3]$};
\node at (7.1,2) {$b_2\ [1,2]$};
\node at (7.1,0.8) {$b_3\ [0,1]$};
\node at (7.1,0) {$b_4\ [0,1]$};

\node [fit=(a1) (a4),label=above:$\AA$] {};
\node [fit=(b1) (b4),label=above:$\BB$] {};

\draw(a1) -- (b1) node[pos=0.25,draw=white, fill=white,inner sep=0.1pt] {\textcolor{black}{\small 1}} node[pos=0.75,draw=white, fill=white,inner sep=0.1pt] {\textcolor{black}{\small 1}};
\draw(a1) -- (b3) node[pos=0.25,draw=white,fill=white,inner sep=0.1pt]{\textcolor{black}{\small 2}} node[pos=0.75,draw=white, fill=white,inner sep=0.1pt] {\textcolor{black}{\small 1}};
\draw(a1) -- (b2) node[pos=0.25,draw=white,fill=white,inner sep=0.1pt]{\textcolor{black}{\small 3}} node[pos=0.7,draw=white, fill=white,inner sep=0.1pt] {\textcolor{black}{\small 3}};
\draw(a2) -- (b2) node[pos=0.25,draw=white,fill=white,inner sep=0.1pt]{\textcolor{black}{\small 1}} node[pos=0.75,draw=white, fill=white,inner sep=0.1pt] {\textcolor{black}{\small 2}};
\draw(a2) -- (b1) node[pos=0.25,draw=white,fill=white,inner sep=0.1pt]{\textcolor{black}{\small 2}} node[pos=0.75,draw=white, fill=white,inner sep=0.1pt] {\textcolor{black}{\small 2}};
\draw(a2) -- (b3) node[pos=0.25,draw=white,fill=white,inner sep=0.1pt]{\textcolor{black}{\small 3}} node[pos=0.75,draw=white, fill=white,inner sep=0.1pt] {\textcolor{black}{\small 2}};
\draw(a3) -- (b2) node[pos=0.25,draw=white,fill=white,inner sep=0.1pt]{\textcolor{black}{\small 1}} node[pos=0.75,draw=white, fill=white,inner sep=0.1pt] {\textcolor{black}{\small 1}};
\draw(a4) -- (b4) node[pos=0.25,draw=white,fill=white,inner sep=0.1pt]{\textcolor{black}{\small 1}} node[pos=0.75,draw=white, fill=white,inner sep=0.1pt] {\textcolor{black}{\small 1}};
\draw(a4) -- (b1) node[pos=0.25,draw=white,fill=white,inner sep=0.1pt]{\textcolor{black}{\small 2}} node[pos=0.85,draw=white, fill=white,inner sep=0.1pt] {\textcolor{black}{\small 3}};
\end{tikzpicture}
	\caption{Counter-example for the generalized approach based on the one used in~\cite{NNRS21}}
	\label{Fig:CounterEx}
\end{figure}

The overview of the reduced instance corresponding to the instance shown in Figure~\ref{Fig:CounterEx} is described as follows. We have $\S=\sum_{a\in\AA}q^-(a)=5$, so each $a\in\AALQ$ have $5+1=6$ level-copies $a^0,a^1,\ldots,a^5$ and $a\in\AA\setminus\AALQ$ have one level-copy $a^0$. Similarly, we have $\T=\sum_{b\in\BB}q^-(b)=2$, so each $b\in\BBLQ$ have $2+2=4$ level-copies $b^0,\ldots,b^3$. Each $b\in\BB\setminus\BBLQ$ have two level-copies $b^0$ and $b^1$. Quota of each level-copy $a^i$ for a vertex $a\in\AA$ is $q^+(a)$ if $i=0$ and $q^-(a)$ otherwise. Quota of each level-copy $b^i$ for a vertex $b\in\BB$ is $q^+(b)$ if $i\le 1$ and $q^-(b)$ otherwise. The number of dummy vertices associated with each level-copy is equal to the quota of that level-copy. Thus, the reduced instance contains level-copies $b_1^0$ and $b_1^1$ with capacities equal to three, and level-copies $b_1^2,b_1^3$ with capacities equal to one. The stable matching of the reduced instance is $M_0=\{(b_1^0,a_1^0), (b_1^0,a_1^1), (b_1^0,a_2^1), (b_2^0,a_2^1), (b_2^0,a_3^0), (b_3^0,a_1^0), (b_4^0,a_4^0) \}$. Note that $q^-(a_1)=q^+(a_1)=3$ and $a_1$ is matched to three different level-copies in the stable matching $M$. But $b_1^0$ is matched to two different level-copies $a_1^0$ and $a_1^1$ of the same vertex $a_1$. Thus, the matching $M_0$ cannot be mapped to a feasible matching in the original instance although at least two feasible matchings $M_1=\{(b_1,a_1), (b_1,a_2), (b_2,a_1), (b_2,a_2), (b_3,a_1), (b_4,a_4)\}$ and $M_2= \{(b_1,a_1), (b_1,a_2),$
$(b_1,a_4), (b_2,a_1), (b_2,a_2), (b_3,a_1)\}$ exist. Algorithm~\ref{algo:maxPopSC2LQ} outputs the matching $M_1$, which is indeed popular amongst feasible/critical matchings.

\begin{table}
\centering
\begin{tabular}{| p{1.7 cm} | p{0.5 cm} |p{0.9cm} | p{0.5cm} |p{0.8cm} | p{1.8cm} | p{4.6cm}|}
\hline
      \textbf{Proposal number}&\textbf{$a^\ell$}  & \textbf{$c(a^\ell)$} & \textbf{$b$ }& $c(b)$ & \textbf{Rejected vertex} & \textbf{$M$}  \\
      \hline
      \hline \\ [-1em]
      1. &$a_1^0$ & 2 & $b_2$ & $1$& $-$ & $\{(a_1^0,b_2)\}$ \\ [1pt]
      \hline \\ [-1em]
      2. &$a_2^0$ & 2 & $b_2$ & $1$& $a_2^0$ & $\{(a_1^0,b_2)\}$\\ [1pt]
      \hline \\ [-1em]
      3. &$a_3^0$ & 1 & $b_2$ & $1$& $a_1^0$ & $\{(a_3^0,b_2)\}$ \\ [1pt]
      \hline \\ [-1em]
      4. &$a_1^1$ & 2 & $b_1$ & $1$& $-$ & $\{(a_1^1,b_1),(a_3^0,b_2)\}$ \\ [1pt]
      \hline \\ [-1em]
      5. &$a_2^1$ & 2 & $b_1$ & $1$& $a_2^1$ & $\{(a_1^1,b_1),(a_3^0,b_2)\}$\\ [1pt]
      \hline \\ [-1em]
      6. & $a_1^1$ & 2 & $b_2$ & $1$& $a_3^0$ & $\{(a_1^1,b_1),(a_1^1,b_2)\}$ \\ [1pt]
      \hline \\ [-1em]
      7. & $a_2^1$ & 2 & $b_2$ & $2$& $-$ & $\{(a_1^1,b_1),(a_1^1,b_2),(a_2^1,b_2)\}$\\ [1pt]
      \hline \\ [-1em]
      8. & $a_3^1$ & 1 & $b_2$ & $2$& $a_2^1$ & $\{(a_1^1,b_1),(a_1^1,b_2),(a_3^1,b_2)\}$ \\ [1pt]
      \hline \\ [-1em]
      9. & $a_2^2$ & 2 & $b_1$ & $1$& $a_1^1$ & $\{(a_2^2,b_1),(a_1^1,b_2),(a_3^1,b_2)\}$ \\ [1pt]
      \hline \\ [-1em]
      10. & $a_1^2$ & 2 & $b_1$ & $1$& $a_2^2$ & $\{(a_1^2,b_1),(a_1^1,b_2),(a_3^1,b_2)\}$ \\ [1pt]
      \hline \\ [-1em]
      11. & $a_2^2$ & 2 & $b_2$ & $2$& $a_1^1$ & $\{(a_1^2,b_1),(a_2^2,b_2),(a_3^1,b_2)\}$ \\ [1pt]
      \hline \\ [-1em]
      12. &$a_1^2$ & 2 & $b_2$ & $2$& $a_3^1$ & $\{(a_1^2,b_1),(a_2^2,b_2),(a_1^2,b_2)\}$ \\ [1pt]
      \hline \\ [-1em]
      13.  & $a_3^2$ & 1 & $b_2$ & $2$& $a_2^2$ & $\{(a_1^2,b_1),(a_3^2,b_2),(a_1^2,b_2)\}$ \\ [1pt]
      \hline \\ [-1em]
      14. & $a_2^3$ & 2 & $b_1$ & $1$& $a_1^2$ & $\{(a_2^3,b_1),(a_3^2,b_2),(a_1^2,b_2)\}$ \\ [1pt]
      \hline \\ [-1em]
      15. & $a_2^3$ & 2 & $b_2$ & $2$& $a_1^2$ & $\{(a_2^3,b_1),(a_3^2,b_2),(a_2^3,b_2)\}$ \\ [1pt]
      \hline \\ [-1em]
      16. & $a_1^3$ & 1 & $b_1$ & $1$& $a_2^3$ & $\{(a_1^3,b_1),(a_3^2,b_2),(a_2^3,b_2)\}$ \\ [1pt]
      \hline \\ [-1em]
      17. & $a_2^4$ & 2 & $b_1$ & $1$& $a_1^3$ & $\{(a_2^4,b_1),(a_3^2,b_2),(a_2^3,b_2)\}$ \\ [1pt]
      \hline \\ [-1em]
      18. & $a_1^3$ & 1 & $b_2$ & $2$& $a_3^2$ & $\{(a_2^4,b_1),(a_1^3,b_2),(a_2^3,b_2)\}$ \\ [1pt]
      \hline \\ [-1em]
      19. & $a_3^3$ & 1 & $b_2$ & $2$& $a_2^3$ & $\{(a_2^4,b_1),(a_1^3,b_2),(a_3^3,b_2)\}$ \\ [1pt]
      \hline \\ [-1em]
      20. & $a_2^4$ & 2 & $b_2$ & $2$& $a_1^3$ & $\{(a_2^4,b_1),(a_2^4,b_2),(a_3^3,b_2)\}$ \\ [1pt]
      \hline \\ [-1em]
      21. & $a_1^4$ & 1 & $b_1$ & $1$& $a_2^4$ & $\{(a_1^4,b_1),(a_2^4,b_2),(a_3^3,b_2)\}$ \\ [1pt]
      \hline \\ [-1em]
      22. & $a_2^5$ & 2 & $b_1$ & $1$& $a_1^4$ & $\{(a_2^5,b_1),(a_2^4,b_2),(a_3^3,b_2)\}$ \\ [1pt]
      \hline \\ [-1em]
      23. &$a_1^4$ & 1 & $b_2$ & $2$& $a_3^3$ & $\{(a_2^5,b_1),(a_2^4,b_2),(a_1^4,b_2)\}$ \\ [1pt]
      \hline \\ [-1em]
      24.  & $a_3^4$ & 1 & $b_2$ & $2$& $a_2^4$ & $\{(a_2^5,b_1),(a_3^4,b_2),(a_1^4,b_2)\}$ \\ [1pt]
      \hline \\ [-1em]
      25. & $a_2^5$ & 2 & $b_2$ & $2$& $a_1^4$ & $\{(a_2^5,b_1),(a_3^4,b_2),(a_2^5,b_2)\}$ \\ [1pt]
      \hline \\ [-1em]
      26. & $a_1^5$ & 1 & $b_1$ & $1$& $a_2^5$ & $\{(a_1^5,b_1),(a_3^4,b_2),(a_2^5,b_2)\}$ \\ [1pt]
      \hline \\ [-1em]
      27. & $a_2^6$ & 2 & $b_1$ & $1$& $a_1^5$ & $\{(a_2^6,b_1),(a_3^4,b_2),(a_2^5,b_2)\}$ \\ [1pt]
      \hline \\ [-1em]
      28. & $a_1^5$ & 1 & $b_2$ & $2$& $a_3^4$ & $\{(a_1^5,b_2),(a_2^6,b_1),(a_2^5,b_2)\}$ \\ [1pt]
      \hline \\ [-1em]
      29. & $a_3^5$ & 1 & $b_2$ & $2$& $a_2^5$ & $\{(a_1^5,b_2),(a_2^6,b_1),(a_3^5,b_2)\}$ \\ [1pt]
      \hline \\ [-1em]
      30. & $a_2^6$ & 2 & $b_2$ & $2$& $a_1^5$ & $\{(a_2^6,b_1),(a_2^6,b_2),(a_3^5,b_2)\}$ \\ [1pt]
      \hline \\ [-1em]
      31. & $a_1^6$ & 1 & $b_1$ & $1$& $a_2^6$ & $\{(a_1^6,b_1),(a_2^6,b_2),(a_3^5,b_2)\}$ \\ [1pt]
\hline     
\hline
\end{tabular}
\caption{A possible proposal sequence of Algorithm~\ref{algo:maxPopSC2LQ} for the instance shown in Figure~\ref{fig:exCritical}.}
\label{tab:propSeq}
\end{table}
\end{document}